\documentclass[12pt]{article}
\usepackage[latin1]{inputenc}
\usepackage[a4paper,left=2.5cm,right=2.5cm,top=2cm,bottom=1.5cm,footnotesep=.75cm,headheight=13.6pt]{geometry}
\usepackage{setspace}
\usepackage{amsthm}
\usepackage{amsmath}
\usepackage{amsfonts}
\usepackage{amssymb}
\usepackage{graphicx}
\usepackage[small]{caption}
\usepackage{authblk}
\usepackage{rotating}
\usepackage{float}
\usepackage{pdflscape}
\usepackage{subcaption}
\usepackage{longtable}
\usepackage{xcolor}
\usepackage{lscape}
\usepackage{endnotes}
\newtheorem{theorem}{Theorem}[section]

\newtheorem{lemma}{Lemma}[section]

\title{Robust estimation of the autocorrelation function via forward ratios.}

\author[1]{\large Antonio Monta\~{n}\'es\thanks{Corresponding author. E-mail: amontane@unizar.es}}
\author[2]{\large Esther Ruiz}
\affil[1]{Dpt. of Economic Analysis, Universidad de Zaragoza (Spain)}
\affil[2]{Department of Statistics, Universidad Carlos III de Madrid (Spain)}

\date{\today}

\begin{document}

\maketitle

\begin{abstract}

It is obvious to say that an adequate estimation of the autocorrelation function is central in time series analysis. In this paper, we propose three new robust estimators based on ratios of observations, which offer strong resistance against outliers. While the first estimator, which is based on the median, is not efficient, the second is a Quasi Maximum Likelihood (QML) estimator with better efficiency properties. The third estimator is a plug-in estimator, which does not require numerical optimization and, consequently, is extremely simple from a computationally point of view, having similar efficiency to that of the ML estimator. We derive the asymptotic distribution of the first two estimators, when the true autocorrelations are zero. Furthermore, we also show that the asymptotic distribution of the plug-in estimator is rather close to that of the QML estimator, allowing for inference and, in particular, for the construction of point-wise significance bands for the autocorrelations. Using Monte Carlo simulations, we analyse the finite sample properties of the proposed estimators and compare them with those of the sample autocorrelations and alternative extant robust estimators based on ranks. Although the proposed estimators have larger dispersion than the sample autocorrelations in uncontaminated time series, they are highly robust in the presence of outliers. Also, they have better properties than popular alternative robust estimators based on ranks when estimating autocorrelations of order larger than one. The results are illustrated by estimating the correlogram of daily IBEX35 returns, quarterly US economic growth and monthly US inflation.
\end{abstract}

Keywords: Robust Estimation; Truncated Cauchy distribution; Outliers

JEL codes: C22

\setcounter{page}{1}

\newpage

\doublespacing

\section{Introduction}

Modelling temporal dependence is central to the analysis of time series data, with estimation and inference of the autocorrelation function being a popular first step; see, for example, the Royal Society address by Yule (1921) and Bartlett (1935), for seminal references. Furthermore, popular diagnostic tools for time series models are based on testing for lack of autocorrelation of their residuals; see, for example, Box and Pierce (1970) and Durbin (1970) for early references. The autocorrelations have also been used as a criteria for time series clustering; see, for example, Albino, Caiado and Crato (2024) and Caiado and Crato (2026). 

The most popular estimator of the autocorrelation function is its sample counterpart, which is well known to be non-robust in presence of extreme unexpected observations that look discordant from most other observations (outliers), which often appear in the empirical analysis of real time series. Outliers may cause serious bias in estimating autocorrelations. One can follow two main strategies to deal with outliers in real time series. First, outliers can be identified and their effects removed from observations before estimating the autocorrelation function; see, for example, Chang, Tiao and Chen (1988). Alternatively, robust estimators of the autocorrelations have been proposed, with most of them classified into two large classes. The first class contains estimators based on the Method of Moments (MM); see, for example, Chang and Politis (2016). These estimators usually require large computational effort, limiting their empirical implementation. Second, several robust estimators are based on order statistics and, consequently, are less computationally intensive; see, for example, the popular procedure proposed by Ma and Genton (2000). The main limitation of rank-based procedures is that their asymptotic distribution is unknown and, consequently, inference is restrained. Furthermore, we will see that, even though they have nice finite sample properties when estimating the first order autocorrelation, their performance worsen when estimating autocorrelations of larger order.

Alternatively, in the context of the linear first order autoregressive, AR(1), model with known mean, Hurwicz (1950) proposes a computationally very simple estimator of the autoregressive parameter (first order autocorrelation), based on the median of ratios of consecutive observations; see also Guo and Billard (2012), who consider the estimator of Hurwicz for one-step-ahead predictions. Being based on the median, the estimator proposed by Hurwicz (1950) is median unbiased; see Zieli\'nski (1999). However, it  is not efficient and, consequently, Reschenhofer (2019) proposes two alternative estimators with increased efficiency. The first estimator of the autoregressive parameter is a Quasi-Maximum Likelihood (QML) estimator modified to guarantee that the ratio of consecutive observations are restricted to lye in the interval $[-1,1]$. The second estimator is based on the mean of the modified ratios of consecutive observations.

In this paper, we extend the estimators proposed by Reschenhofer (2019) to estimate not only the first order autocorrelation but the whole autocorrelation function of stationary linear autoregressive-moving average (ARMA) models with the innovations having a generic symmetric distribution (non-necessary Gaussian). The proposed estimators of the autocorrelation function are designed to improve their finite sample properties in the presence of large absolute autocorrelations not only when the series may be contaminated by large additive outliers but also when the mean is unknown. Using Monte Carlo simulations, we show that the finite sample performance of the proposed estimators have robustness properties comparable to those of the robust estimator proposed by Ma and Genton (2000) when estimating the first order autocorrelation, and better properties when estimating autocorrelations of larger lags.

When the mean is known, we derive the asymptotic distribution of the proposed QML estimator under the null of uncorrelatedness, allowing for the construction of the popular pointwise significance bands for the estimated autocorrelations and of portmanteau tests for the joint significance of a set of consecutive autocorrelations. We use simulated data to show that, regardless of whether the mean is know or not, the finite sample approximation of the asymptotic distribution is adequate to represent the empirical distribution of the estimated autocorrelations.

Finally, the relevance of using robust estimators of the autocorrelations is illustrated by estimating the autocorrelation function of three real time series. First, we consider daily returns of the IBEX35 index of the Madrid Stock Exchange. In this case, we show that when the autocorrelations are estimated using their sample counterparts, several of them are significant, providing evidence against the Efficient Markets Hypothesis (EMH). However, by using the robust plug-in estimator proposed in this paper, there is not any significant autocorrelation, and, therefore, there is not evidence of predictability of daily IBEX35 price movements. The second empirical application considered illustrates the contrary effect. We estimate the autocorrelations of quarterly US economic growth. When the autocorrelations are estimated using the standard sample autocorrelations, the evidence of linear temporal dependence is very weak. However, when the proposed robust estimator is used, this evidence is stronger. Finally, we estimate the autocorrelations of monthly inflation in the US, which is subject to a debate about whether it is better represented by a stationary, long-memory or non-stationary process. We show that when the robust-plug.in estimator of the autocorrelations is used instead of the standard sample autocorrelations, the estimated autocorrelations are larger in magnitude and more persistent, and, consequently, the evidence to distinguish between long-memory and non-stationarity is weaker. 

The rest of the paper is organized as follows. Section \ref{sec:prelim} describes preliminaries, introducing the limiting distribution of the sample autocorrelations, the popular robust estimator of the autocorrelation function proposed by Ma and Genton (2000), and the estimator based on ratios of consecutive observations proposed by Hurwicz (1950) and modified by Reschenhofer (2019). In Section \ref{sec:ratios}, we introduce the estimators proposed in this paper and derive their properties and asymptotic distribution. Section \ref{sec:simu} presents the results of an exhaustive simulation study. Section \ref{sec:empiric} describes the empirical illustration with the estimation of the autocorrelations of IBEX35 daily returns, of quarterly US economic growth, and of monthly US inflation. Finally, Section \ref{sec:conclusions} concludes with a summary of the main results.
  
\section{Autocorrelation function: Estimation and inference}
\label{sec:prelim}

In this section, we describe the main properties of the sample autocorrelations and of robust estimators based on ranks and on ratios of consecutive observations.

\subsection{Sample autocorrelations}

Consider the following stationary process
\begin{equation}
\label{eq:process}
y_t=\mu + \sum_{i=0}^{\infty} c_i \varepsilon_{t-i},
\end{equation}
where $t=1,2,...,T$, $c_0=1$, $ \sum_{i=0}^{\infty} c_i^2 < \infty$, and $\varepsilon_t$ is white noise with variance $\sigma^2_{\varepsilon}$.  Then, $y_t$ has mean $\mu$, finite variance $\sigma^2_y= \sigma^2_{\varepsilon} \sum_{i=0}^{\infty} c_i^2 < \infty$, and, for $h \in \mathbb{Z}$, autocorrelation between $y_t$ and $y_{t-h}$, given by $\rho(h)=\frac{\gamma(h)}{\sigma^2_y}$, where $\gamma(h)$ is the autocovariance between $y_t$ and $y_{t-h}$, which is a function of $c_i$.

Consider first that the mean, $\mu$, is known and, without loss of generality, equal to zero. In this case, given observations, $y_t$, $t=1,...,T$, the autocorrelation of order $h$, $\rho(h)$, can be estimated by the corresponding sample autocorrelation as follows:
\begin{equation}
\label{eq:sample_1}
r^*(h)=\frac{\sum_{t=h+1}^{T} y_t y_{t-h}}{\sum_{t=1}^T y_t^2}.
\end{equation}

The estimator of the autocorrelations in (\ref{eq:sample_1}) has well known properties; see Bartlett (1946), who derives its expectation and asymptotic distribution. However, it is of limited practical relevance given that in most empirical applications, the mean, $\mu$, is unknown and needs to be estimated. In this case, the sample autocorrelation of order $h$ is given by
\begin{equation}
\label{eq:sample_2}
r(h)=\frac{\sum_{t=h+1}^{T} \left(y_t - \bar{y} \right) \left( y_{t-h} - \bar{y} \right)}{\sum_{t=1}^T \left(y_t - \bar{y} \right)^2},
\end{equation}
where $\bar{y}=T^{-1} \sum_{t=1}^T y_t$. The graphical representation of the sample autocorrelations, $r(h)$,  for $h=1,...,H$, is often known as correlogram. The estimator in (\ref{eq:sample_2}) is significantly downward biased especially when the persistence is large; see the discussion by Andrews (1993) and the references therein for the related problem of the Least Squares (LS) estimation of the autoregressive parameter of an AR(1) model, and Vogelsang and Yang (2016), who propose an alternative unbiased estimator of the autocorrelations based on the linear relationship between the vector of sample autocovariances and the vector of their population counterparts.\endnote{Nielsen (2006), shows that $r(h)$ could deliver misleading inferences in non-stationary frameworks.}

In practice, the sample correlogram is usually accompanied by pointwise non-rejection bands for the null hypothesis of lack of temporal autocorrelation; see, for example, Hassler, Pohle and Zahn (2025), who call them ``significance bands''. These pointwise significance bands are described in basically every textbook on time series analysis and are added by default to correlograms in most statistical software; see, for example, Sri Ranganath (2018) for an empirical application using the correlogram with pointwise significance bands, and Caiado and Crato (2026), who use them to identify highly significant autocorrelations to be used in discrepancy statistics for clustering time series. They are based on the asymptotic distribution of the sample autocorrelations. Assume independent and identically distributed (\textit{i.i.d.}) disturbances, $\varepsilon_t$, and $\sum_{i=0}^{\infty} i c_i < \infty$ and consider the vector of consecutive sample autocorrelations, $\left( r(1),...,r(H) \right)^{\prime}$. As $T \rightarrow \infty$, Anderson and Walker (1964) show that\endnote{The classical restrictions for the asymptotic distribution in (\ref{eq:asymp_rho}) rely only on finite second order moments. The asymptotic distribution of the sample autocorrelations were previously derived by other authors under stronger assumptions on the existence of moments; see, for example Bartlett (1946).}
\begin{equation}
\label{eq:asymp_rho}
\sqrt{T} \left( r(1)-\rho(1),...,r(H)-\rho(H)\right)^{\prime} \xrightarrow{d} N \left( \mathbf{0}, \mathbf{W} \right),
\end{equation}
where $\mathbf{W}$ is the Bartlett covariance matrix with representative element given by
\begin{equation}
\label{eq:asymt_2}
\omega_{ij} = \sum_{\nu=0}^{\infty} \left[ \rho(\nu+i) \rho(\nu-i) - 2 \rho(\nu) \rho(i) \right] \left[\rho(\nu +j) \rho(\nu-j) - 2 \rho(\nu) \rho(j) \right].
\end{equation}

From (\ref{eq:asymt_2}), the asymptotic variance of $r(h)$ can be obtained when $i=j$ as follows:
\begin{equation}
\label{eq:var_rho}
Var(r(h))= \frac{1}{T} \sum_{j=1}^{\infty} \left(\rho(h+j) + \rho (h-j) -2 \rho(h) \rho(j)\right)^2;
\end{equation}
see also Phillips and Solo (1992). Another important aspect of the asymptotic covariance in (\ref{eq:asymp_rho}) is that it depends only on the autocorrelations $\rho(h)$ themselves and not on fourth order moments, as it is the case for the sample autocovariances. Hannan and Hyde (1972) relax the \textit{i.i.d.} assumption on $\varepsilon_t$ and show that asymptotic normality remains valid under some additional regularity conditions on the innovations.\endnote{Recently, Dalla, Giaritis and Phillips (2022) adapt the standard correlogram and portmanteau tests to accomodate nonstationarity, conditional heteroscedasticity and cross-correlation at various lags in bivariate time series.}

Under the null, $\rho(h)=0$ for $h \neq 0$, $y_t$ is \textit{i.i.d.}, and $\mathbf{W}=\mathbf{I}_H$. In this case, the finite sample approximation of the asymptotic variance of $r(h)$ is $\frac{1}{T}$, and consequently, the popular Bartlett pointwise band with confidence $(1-\alpha)\%$ for $\rho(h)$ is given by
\begin{equation}
\label{eq:bounds}
\pm \frac{z_{0.5 \alpha}}{\sqrt{T}},
\end{equation}
where $z_{x}$ is the $x$ quantile of the standard normal distribution; see del Barrio Castro, Escribano and Sibbertsen (2025) and Gospodinov, Lopez Gaffney and Ng (2025) for some few selected empirical applications using these bounds.

Consider now testing for the joint significance of a set of consecutive autocorrelations. In particular, for $H \ge 1$, consider testing the null hypothesis $H_0: \rho(1)=...=\rho(H)=0$, against the alternative $H_1: \rho(1)= \rho(H-1)=0, \rho(H) \neq 0$. It is often suggested to consider $H=\min\left(10, \frac{T}{5}\right)$.  It is popular to use the following portmanteau test proposed by Box and Pierce (1970)\endnote{The BP test was originally proposed to test for uncorrelatedness of residuals of time series models. However, it is common to report it together with the sample autocorrelations in most time series software packages; see also Lobato, Nankervis and Savin (2001) and Kwan and Sim (1996), who consider testing for uncorrelatedness of the original time series.}
\begin{equation}
\label{eq:BP}
Q_{BP}(H)=T \sum_{i=1}^H  r(i)^2.
\end{equation}
The BP statistic in (\ref{eq:BP}) was modified by Ljung and Box (1978) to have better finite sample properties as follows
\begin{equation}
\label{eq:LB}
Q_{LB}(H)=T(T+2) \sum_{i=1}^H \left( T-i \right)^{-1} r(i)^2.
\end{equation}
Under independence of $y_t$, $Q_{BP}(H)$ and $Q_{LB}(H)$ are asymptotically equivalent and distributed as a $\chi^2$ with $H$ degrees of freedom.

In spite of the popularity and good properties of the sample autocorrelations in (\ref{eq:sample_2}), it is well known that they are very sensitive to the presence of outliers, which may cause severe biases and hence seriously jeopardize their function as model identification and diagnostic tools; see, for example, Guttman and Tiao (1978), Miller (1980), Tsay (1986), and Fellag and Zieli\'nski (1996). Consider, for example, the estimator of the autocorrelations in (\ref{eq:sample_1}) and the series $y_t^*=y_t + \delta I(t=t^*)$, where $I(\cdot)$ is the indicator function that takes value 1 when the argument is true and zero otherwise. In this case, the sample autocorrelation of order $h$ is obtained with the contaminated series as follows:
\begin{equation}
\label{eq:sample_3}
r^*(h)=\frac{\sum_{t=h+1}^{T} y_t y_{t-h} + \delta (y_{t^*-h}+ y_{t^*+h})}{\sum_{t=1}^T y_t^2 + \delta^2 + 2 \delta y_{t^*}},
\end{equation}
which goes to zero as $\delta$ increases. Therefore, the sample autocorrelations are highly sensitive to outliers with their significance bands being distorted and the tests for autocorrelation based on them having size and power equal to zero when the size of the outlier is large; see, for example, the discussion by Berkoun, Fellag and Zieli\'nski (2003) in the context of testing for serial correlation in an AR(1) model with zero mean in the presence of a single additive outlier.

\subsection{Robust estimation based on order statistics}

Several robust estimators of $\rho(h)$ are available in the literature, although they are usually computationally very demanding; see D\"urre, Fried and Liboschik (2015) for a survey on robust estimators of the autocorrelations. The high computational cost of most robust estimators seriously limit their empirical applications in the analysis of time series. However, the popular estimator proposed by Ma and Genton (2000), which is based on order statistics and on the scale estimator proposed by Rousseeuw and Croux (1993), is relatively simple from a computational point of view, requiring no more than $O(TlogT)$ computer time and $O(T)$ storage.\endnote{It can be computed using the fast algorithm described in Croux and Rousseeuw (1992).} This estimator is given by 
\begin{equation}
\label{eq:ma}
\hat{\rho}_{MG}(h)=\frac{V^2(u+v) - V^2(u-v)}{V^2(u+v) + V^2(u-v)},
\end{equation}
where $u=(y_1,...,y_{T-h})^{\prime}$ and $v=(y_{h+1},...,y_T)^{\prime}$ are the $T-h$ vectors of the first and last observations, and $V(z)=\left[ |z_i-z_j|; i>j \right]_{(1/4)}$, i.e. all absolute differences in increasing order are sorted and its 1/4 quantile is chosen. Note that the estimator in (\ref{eq:ma}) is guaranteed to take values in the interval $[-1,1]$. Furthermore, it is important to remark that being based on ranks and scales, $\hat{\rho}_{MG}(h)$ is location-free and, therefore, it does not depend on whether the mean is assumed to be known or not. 

The theoretical variance of $\hat{\rho}_{MG}(h)$ does not have closed form and its asymptotic distribution is unknown, limiting the possibility of inference on the estimated autocorrelations; see, for example, the empirical illustration carried out by Ma and Genton (2000) in which they do not provide significance bands for the estimated autocorrelations.

\subsection{Hurwicz robust estimator of the autoregressive parameter}

Hurwicz (1950) considers the related problem of estimating the autoregressive parameter of a stationary AR(1) model with zero mean, i.e. $y_t$ is given by model (\ref{eq:process}) with $c_i= \phi^{i}$ and $\mu=0$, where $|\phi|<1$. Hurwicz (1950) propose estimating the autoregressive parameter, $\phi$, as follows
\begin{equation}
\label{eq:hur}
\hat{\phi}=med\left(x_2, x_3,...,x_T\right),
\end{equation}
where $med\left(x_2,...,x_T \right)$ denotes the sample median of $x_2,...,x_T$ with $x_t=\frac{y_{t}}{y_{t-1}}$. One important attractive of the Hurwicz estimator in (\ref{eq:hur}) is that it does not require numerical optimization making it computationally simple. Furthermore, Hurwicz (1950) conjectures that $\hat{\phi}$ should be median unbiased and, consequently, a robust estimator. This conjecture was then proved by Zieli\'nski (1999), not only for Gaussian zero mean independent innovations but whenever the medians of independent (not necessarily identically distributed) innovations are equal to zero. 
 Later, Luger (2006) relax the condition of independence to serially uncorrelated innovations opening the door to dealing with AR(1) models with GARCH disturbances.

Berkoun, Fellag and Zieli\'nski (2003) analyse the size and power of the test of $H_0: \phi \neq 0$ in the presence of a single additive outlier. They show that the test is more powerful and with better size properties than when it is based on the LS estimator (sample autocorrelation in (\ref{eq:sample_1}) with $h=1$).

The asymptotic distribution of $\hat{\phi}$ is derived by Berkoun and Fellaf (2011) when the innovations are \textit{i.i.d.} symmetric variables in the domain of attraction of an $\alpha$-stable law ($1<\alpha<2$).\endnote{Note that infinite variance is allowed.} Denote by $F$ the common marginal distribution of $x_t$, for $t=1,...,T$, by $f$ their corresponding density, and by $q_{0.5}$ the $50\%$ quantile of $F$, i.e., the median. Assuming that: \textit{i)} $f$ is bounded in some neighbourhood, $V_0$, of $q_{0.5}$, with $0<q_{0.5}<\infty$ and $0<f(q_{0.5})<\infty$; \textit{ii)} $f'$ is bounded in $V_0$; \textit{iii)} $\inf_T \sigma^2_T>0$, where $\sigma^2_T=Var(\frac{S_T}{\sqrt{T}})$ with $S_T=\sum_{t=2}^T X_t$ and $X_t=0.5-I(x_t \leq q_{0.5})$, then the asymptotic distribution of $\hat{\phi}$ is given by
\begin{equation}
\label{eq:Asympt_H}
\frac{\sqrt{T} \left(\hat{\phi} - q_{0.5} \right) f(q_{0.5})}{\sigma_T} \xrightarrow{d} N(0,1)
\end{equation}
where $\sigma^2_T=E(X_1^2)+2 \sum_{h=1}^{T-1} \left(1-\frac{h}{T} \right) E \left(X_1 X_{1+h} \right)$.

Being based on the median, the estimator proposed by Hurwicz is not efficient. Consequently, pretending that the ratios $x_t$ are \textit{i.i.d.}, Reschenhofer (2019) proposes estimating $\phi=\rho(1)$ by a QML estimator based on the Cauchy distribution of $x_t$ with the following density\endnote{Note that Reschenhofer (2019) assumes that $y_t$ is Gaussian, although as we will see latter, this assumption can be relaxed to $y_t$ being distributed according to a elliptically distributed Pearson-type II, Pearson-type VII or Kotz-type distribution.}
\begin{equation}
\label{eq:Cauchy}
f(x;\rho(1))= \frac{\sqrt{1-\rho(1)^2}}{\pi \left( x^2-2 \rho(1) x +1\right)},
\end{equation}
which only depends on $\rho(1)$, the mode of the distribution.

Taking into account that the mean of a Cauchy variable is not defined, the estimator in (\ref{eq:hur}) is an obvious estimator of $\rho(h)$. However, it is well known that the median is not an efficient estimator. In particular, in the context of estimating $\rho(1)$ in an AR(1) model, Reschenhofer (2019) argues that the sample median has low Asymptotic Relative Efficiency, even when $\rho(1)$ is small (81\%) and decreases as $\rho(1)$ increases. Furthermore, from (\ref{eq:Cauchy}), it is easy to see that the first derivative of the log-density is the same for $x_t$ and $x_t^{-1}$, implying that the QML estimator remains unchanged when each observation $x_t$ such that $|x_t|>1$ is replaced by its inverse $x_t^{-1}$. Consequently, Reschenhofer (2019) proposes considering the following transformation of the ratios between successive observations
\begin{equation}
\label{eq:ratios_1}
z_t= sign\left(y_t y_{t-1}\right)\frac{\min \left( |y_t|, |y_{t-1}| \right)}{\max \left( |y_t|, |y_{t-1}| \right)},
\end{equation}
which by definition is restricted to take values in the interval $[-1,1]$. The distribution of $z_t$ is then given by the truncated Cauchy distribution proposed by Nadarajah and Kotz (2006) with density given by two times the Cauchy density in (\ref{eq:Cauchy}) in the interval  $[-1,1]$.\endnote{The density of the truncated-Cauchy distribution is given by
\begin{equation*}
f(z;\rho)=\frac{1}{D} \frac{\theta}{(z-\rho)^2 + \theta^2},
\end{equation*}
where $\theta=\sqrt{1-\rho^2}$ and $D=\arctan(\beta)-\arctan(\alpha)$ with $\alpha=\frac{A-\rho}{\theta}$ and $\beta=\frac{B-\rho}{\theta}$, and $[A,B]$ being the truncation interval, i.e. $-\infty < A \leq x \leq B < \infty$.  This distribution has well defined moments; see Nadarajah and Kotz (2006). In particular, the density of $z_t$, which is obtained when $A=-1$ and $B=1$, and, consequently, $\alpha=\frac{-(1+\rho)}{\theta}$ and $\beta=\frac{1-\rho}{\theta}$ with $-\alpha=\beta^{-1}$ and $D=\frac{\pi}{2}$, is given by
\begin{equation*}
f(z;\rho)=\frac{2}{\pi}  \frac{\theta}{(z-\rho)^2 + \theta^2},
\end{equation*}
with $E(z)=\rho+\frac{1-\rho^2}{\pi} \log\left( \frac{1-\rho}{1+\rho} \right)$. Furthermore, if $y_t$ is an uncorrelated sequence, then $\rho=0$ and, consequently,
\begin{equation*}
f(z;\rho)=\frac{2}{\pi}  \frac{1}{z^2 + 1}.
\end{equation*}
This latter distribution, known as truncated standard Cauchy, originally appeared in Johnson and Kotz (1970) and Rohatgi (1976), who derived its first two order moments; see also Dahiya, Staneski and Chaganthy (2001). In particular, $E(z)=0$ and $Var(z)=\frac{4-\pi}{\pi}$.} The QML estimator of $\rho(1)$ based on the truncated ratios, $z_t$, is the same as that based on the original ratios, $x_t$. Furthermore, since it is defined over a finite interval, the mean of the truncated Cauchy distribution is defined and given by
\begin{equation}
\label{eq:mean_cauchy}
E(z)=\rho(1)+ \frac{\sqrt{1-\rho(1)^2}}{\pi} \log \left( \frac{1-\rho(1)}{1+\rho(1)}\right) =g(\rho(1)).
\end{equation}
Consequently, Reschenhofer (2019) proposes estimating $\rho(1)$ by inverting the function $g(\rho(1))$ evaluated at $\bar{z}=\sum_{t=2}^T z_t$, the sample mean of $z_t$.  To simplify the inversion of the function $g(\bar{z})$, he proposes using the following approximation
\begin{equation}
\label{eq:Reschenhofer}
\hat{\rho}_R(1)=-1+2 \Phi_{0.295}(\bar{z}),
\end{equation}
where $\Phi_q(\cdot)$ is the cumulative probability function of the standard normal distribution evaluated at the quantil $q$; see Johnson and Kotz (1970) for a brief discussion of the realtive efficiency of the sample median and the sample mean as estimators of $\rho(1)$.

\section{Robust estimators of the autocorrelation function via forward ratios}
\label{sec:ratios}

In this section, we propose estimating the autocorrelation function of stationary linear models with independent innovations by extending the estimators of the first order autocorrelation of AR(1) processes.originally proposed by Hurwicz (1950) and modified by Reschenhofer (2019). We also derive the asymptotic distribution of the proposed estimators.

\subsection{The median-based estimator}

Inspired by the estimator proposed by Hurwicz (1950), we propose estimating the autocorrelation of order $h$ of $y_t$ represented by a general stationary process as in (\ref{eq:process}) with $\mu=0$, $\rho(h)$, as follows:
\begin{equation}
\label{eq:new_1}
\hat{\rho}^{*}_H(h)=med\left(\frac{y_{h+1}}{y_1}, \frac{y_{h+2}}{y_2},...,\frac{y_T}{y_{T-h}} \right).
\end{equation}

\begin{lemma}
\label{lem:unbiased}
Consider the stationary linear model in (\ref{eq:process}) with zero median innovations. Then, $\hat{\rho}^{*}_H(h)$ is a median unbiased estimator of $\rho(h)$, that is, $\rho(h)$ is the median of the distribution of $\hat{\rho}^{*}_H(h)$ as follows
\begin{equation}
\label{eq:munbiased}
Prob\left[\hat{\rho}^{*}_H(h)< \rho(h)|\rho(h) \right]=Prob\left[ \hat{\rho}^{*}_H(h)> \rho(h)|\rho(h)\right],
\end{equation}

\end{lemma}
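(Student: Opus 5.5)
The plan is to reduce median unbiasedness of $\hat{\rho}^{*}_H(h)$ to a symmetry statement about the number of ratios lying below $\rho(h)$, following the route Zieli\'nski (1999) used for the Hurwicz estimator in (\ref{eq:hur}). Write $x_t=y_{t+h}/y_t$ for $t=1,\dots,m$, with $m=T-h$. Since $\hat{\rho}^{*}_H(h)=med(x_1,\dots,x_m)$, the events $\{\hat{\rho}^{*}_H(h)<\rho(h)\}$ and $\{\hat{\rho}^{*}_H(h)>\rho(h)\}$ are determined by the count $N=\sum_{t=1}^m I(x_t<\rho(h))$. More precisely, they are the events that $N$ exceeds, respectively falls short of, $m/2$, with the usual convention for the median when $m$ is even. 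Ties have probability zero once the innovations are continuous. Hence (\ref{eq:munbiased}) follows if $N$ and $m-N$ have the same distribution.

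To study $N$, introduce the residual $u_t=y_{t+h}-\rho(h)y_t$. Then $x_t<\rho(h)$ if and only if $u_t y_t<0$, so $N=\sum_t I(u_t y_t<0)$. The key step is to find a measurable, law-preserving transformation $\tau$ of the innovation sequence $(\varepsilon_t)$ with the following property: it flips the sign of every product $u_t y_t$ simultaneously, or at least maps the vector $\big(\mathrm{sign}(u_t y_t)\big)_t$ to its negative in distribution. If such a $\tau$ exists, then $N\circ\tau=m-N$ and $N\overset{d}{=}N\circ\tau$, which gives the result.

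In the AR(1) case with $h=1$ we have $u_t=\varepsilon_{t+1}$, which is independent of the past. Zieli\'nski then conditions on the past and uses $P(\varepsilon_{t+1}>0\mid\mathcal F_t)=1/2$, which holds because the innovations have zero median. Applied sequentially, this shows that $\big(\mathrm{sign}(u_t y_t)\big)_t$ behaves as a sequence of fair coin flips in the relevant sense. I will mimic this by splitting $u_t$ into a part measurable with respect to $\sigma(\varepsilon_s:s\le t)$ and a part depending on the future innovations $\varepsilon_{t+1},\dots,\varepsilon_{t+h}$. For the general linear process (\ref{eq:process}), I will try to take $\tau$ to be the reflection of the innovation sequence about the direction that carries $y_t$ into $-y_t$ while fixing the component of $y_{t+h}$ orthogonal to $y_t$. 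In the Gaussian, or more generally spherically symmetric, case this reflection preserves the joint law of $(y_1,\dots,y_T)$. The reason is that $u_t$ is uncorrelated with $y_t$, so the pair $(u_t,y_t)$ is sign-symmetric in each coordinate separately, and in particular $P(x_t<\rho(h))=1/2$ for each $t$.

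The main obstacle is passing from this marginal statement $P(x_t<\rho(h))=1/2$ to the joint statement that $N$ is symmetric about $m/2$. When $h\ge 2$, or when the process is not AR(1), the residuals $u_t$ overlap across $t$ and are not independent of the past, so a single reflection that flips all signs at once is not obvious. I would first try to show that the joint law of $(y_1,\dots,y_T)$ is invariant under an orthogonal map sending each $y_t$ to itself and each $u_t$ to $-u_t$; this requires exploiting symmetry of the innovations rather than only zero medians. If that fails for general coefficients $c_i$, I would restrict the argument to the AR(1) case with $h=1$, where Zieli\'nski's result applies directly. For general $(c_i)$, I would establish the weaker conclusion $E(N)=m/2$ together with asymptotic median unbiasedness.
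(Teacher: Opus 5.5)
\noindent As written, your proposal does not prove the lemma. The only case you actually settle is the AR(1) model with $h=1$, where $u_t=\varepsilon_{t+1}$ is a fresh zero-median innovation and Zieli\'nski's sequential fair-coin argument applies. For general $(c_i)$ and $h$ you retreat to weaker claims. This gap cannot be closed, because the lemma is false in the stated generality, even with Gaussian innovations.

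Take $y_t=\varepsilon_t+\theta\varepsilon_{t-1}$ with Gaussian $\varepsilon_t$ and $\theta\neq0$, $h=2$ and $T=5$, so $\rho(2)=0$ and $m=3$. Put $\sigma_t=\mathrm{sign}(y_t)$ and $s_t=\sigma_t\sigma_{t+2}$. Then $x_t<0$ iff $s_t=-1$, and $\hat\rho^*_H(2)<0$ iff $s_1+s_2+s_3\le-1$. For $s_t\in\{-1,1\}$ one has $I(s_1+s_2+s_3\le-1)=\frac12-\frac14(s_1+s_2+s_3)+\frac14s_1s_2s_3$, and $E(s_t)=0$ because $y_t$ and $y_{t+2}$ are independent. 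Moreover $s_1s_2s_3=\sigma_1\sigma_2\sigma_4\sigma_5$, and $(y_1,y_2)$ is independent of $(y_4,y_5)$. Sheppard's formula $E(\sigma_1\sigma_2)=E(\sigma_4\sigma_5)=\frac{2}{\pi}\arcsin\rho(1)$ then gives
\[
P\bigl(\hat\rho^*_H(2)<0\bigr)=\frac12+\frac14E(\sigma_1\sigma_2)E(\sigma_4\sigma_5)=\frac12+\frac{1}{\pi^2}\arcsin^2\rho(1)>\frac12,
\]
for example $\frac12+\frac1{36}$ when $\theta=1$.

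So the step you flag as the main obstacle, passing from $P(x_t<\rho(h))=1/2$ to symmetry of $N$ about $m/2$, is exactly where the statement breaks. Once $u_t$ is not a fresh innovation independent of the past, the joint law of the sign indicators need not be invariant under a global flip; here $E(s_1s_2s_3)\neq0$. The paper's own proof is only an appeal to Zieli\'nski (1999) and Luger (2006). Both concern lag-one ratios in the AR(1) model, so they give precisely your fallback case and do not support the general claim either.

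Several intermediate steps also fail.
\begin{itemize}
\item An orthogonal map sending each $y_t$ to itself and each $u_t$ to $-u_t$ cannot exist. Since $u_t=y_{t+h}-\rho(h)y_t$ is a linear combination of the $y$'s, any linear map fixing all $y_t$ also fixes all $u_t$. Even in the AR(1) case the sign-flipping transformation is nonlinear and path dependent.
\item Spherical symmetry is not available under the lemma's hypothesis, which only imposes zero medians. For non-Gaussian innovations, $u_t$ and $y_t$ are merely uncorrelated.
\item Your fallback for general $(c_i)$ is false under zero medians alone. In the AR(1) model with $h=2$, $u_t=\phi\varepsilon_{t+1}+\varepsilon_{t+2}$ is independent of $y_t$, and
\[
P\bigl(x_t<\phi^2\bigr)=\frac12-2\Bigl(P(u_t>0)-\frac12\Bigr)\Bigl(P(y_t>0)-\frac12\Bigr).
\]
This differs from $1/2$ whenever neither $u_t$ nor $y_t$ has median zero, as typically happens for skewed innovations with zero mean and zero median. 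Then $E(N)\neq m/2$, and by ergodicity $P(\hat\rho^*_H(2)<\phi^2)$ tends to $0$ or $1$.
\item For even $m$ with the averaging convention, the event $\{\hat\rho^*_H(h)<\rho(h)\}$ is not a function of $N$ when $N=m/2$.
\end{itemize}

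What can honestly be proved is the following. Exact median unbiasedness holds for the AR(1) model with $h=1$. It also holds for independent, continuous, zero-median white noise with any $h$, since the signs $\mathrm{sign}(y_ty_{t+h})$ are then i.i.d.\ fair coins. For Gaussian $(y_t)$ with general $(c_i)$ you get $E(N)=m/2$, and, given a central limit theorem for $N$, only asymptotic median unbiasedness.
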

\begin{proof}
The innovations of model (\ref{eq:process}) are white noise and, consequently, they are uncorrelated and zero mean. The lemma follows directly from the results in Zieli\'nski (1999) and Luger (2006) if the innovations are further zero median.
\end{proof}

The asymptotic distribution of the Hurwicz-based estimator, $\hat{\rho}^{*}_H(h)$, is derived in the following theorem when when $y_t$ is uncorrelated, i.e. $\rho(h)=0, \forall h$. 
\begin{theorem}
\label{theo:asympt_H}
If $\left\lbrace  y_t \right\rbrace_{t=1}^T$ is given by (\ref{eq:process}) with $\mu=0$, $c_i=0$, for $i=1, 2,...$, and the innovations being elliptically distributed Pearson-type II, Pearson-type VII or Kotz-type distribution, then the asymptotic distribution of the Hurwicz estimator of $\rho(h)$ is given by 
\begin{equation}
\label{eq:Asympt_H_2}
\sqrt{T} \hat{\rho}^*_H(h) \xrightarrow{d} N\left( 0,\frac{\pi^2}{4} \right).
\end{equation}
\end{theorem}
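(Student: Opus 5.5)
The plan is to reduce the problem to the asymptotic normality of a sample quantile of an $h$-dependent stationary sequence, in the spirit of the result of Berkoun and Fellag (2011) recalled in (\ref{eq:Asympt_H}). I will redo the Bahadur-type argument directly, because their condition $0<q_{0.5}<\infty$ excludes the case $q_{0.5}=0$ relevant here.

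\textbf{Step 1: marginal law of the ratios.} Under the null, $c_i=0$ for $i\geq1$ and $\mu=0$, so $y_t=\varepsilon_t$. I read the distributional assumption in the sense used by Reschenhofer (2019): each pair $(y_t,y_{t+h})$ is spherically distributed, of Pearson-type II, Pearson-type VII or Kotz type, with zero correlation. For a spherical vector the angle is uniform on $[0,2\pi)$ and independent of the radius, and $P(y_t=0)=0$. Hence $x_t=y_{t+h}/y_t=\tan(\text{angle})$ is standard Cauchy. In particular the median of $x_t$ is $q_{0.5}=0=\rho(h)$, its density at the median is $f(0)=1/\pi$, and $f$ is smooth with bounded derivative near $0$. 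This is (\ref{eq:Cauchy}) with $\rho=0$.

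\textbf{Step 2: dependence structure of the indicators.} Define $X_t=\tfrac12-I(x_t\le 0)$ and $s_t=\mathrm{sign}(y_t)$. Then $I(x_t\le0)=(1-s_ts_{t+h})/2$, so $X_t=\tfrac12 s_ts_{t+h}$. The innovations are independent and symmetric, so the $s_t$ are i.i.d.\ Rademacher variables. It follows that, for $k\neq0$,
\begin{equation*}
E(X_tX_{t+k})=\tfrac14E(s_ts_{t+h}s_{t+k}s_{t+k+h})=0,
\end{equation*}
because at least one index appears an odd number of times. Therefore $\sigma_T^2=E(X_1^2)=1/4$ for every $T$, which also gives condition \textit{iii)}.

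\textbf{Step 3: Bahadur argument.} Let $n=T-h$ and fix $a\in\mathbb{R}$. Then
\begin{equation*}
P(\sqrt{T}\hat\rho^*_H(h)\le a)=P\Big(\sum_{t}I(x_t\le a/\sqrt{T})\ge n/2\Big)
\end{equation*}
up to the usual rounding. Center the sum by $nF(a/\sqrt T)=n/2+n f(0)a/\sqrt T+O(n/T)$. The summands $I(x_t\le a/\sqrt T)$ form a stationary $h$-dependent triangular array. I will apply the CLT for $m$-dependent arrays (Hoeffding--Robbins, or Lindeberg for $m$-dependent bounded variables). The required variance is $n\,\mathrm{Var}\big(I(x_t\le a/\sqrt T)\big)$ plus covariance terms.

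The main obstacle is showing that these lag covariances, which vanish exactly at threshold $0$ by Step 2, still vanish in the limit at the shrinking threshold $a/\sqrt T$. I would bound them by the difference $I(x_t\le a/\sqrt T)-I(x_t\le0)$, whose $L^2$ norm is $O(T^{-1/4})$, and use Cauchy--Schwarz. This gives normalized variance $\to1/4$. The CLT then yields
\begin{equation*}
P(\sqrt T\hat\rho^*_H(h)\le a)\to\Phi\big(a f(0)/\sigma\big)=\Phi(2a/\pi),
\end{equation*}
that is, $\sqrt T\hat\rho^*_H(h)\xrightarrow{d}N(0,\sigma^2/f(0)^2)=N(0,\pi^2/4)$, which is (\ref{eq:Asympt_H_2}).
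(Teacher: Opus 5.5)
Your proof is correct and rests on the same three ingredients as the paper's: the Cauchy law of the ratios with $f(0)=1/\pi$, the variance $\sigma^2=1/4$ of the median indicators with vanishing lag covariances, and the formula $\sigma^2/f(0)^2=\pi^2/4$. You differ in how the median CLT is obtained. The paper substitutes these values into the Berkoun--Fellag result (\ref{eq:Asympt_H}) and asserts $E(X_1X_{1+k})=0$ ``given that $\rho(h)=0$''. You instead prove the CLT yourself. You turn $\{\sqrt{T}\hat{\rho}^*_H(h)\le a\}$ into a counting event and apply a CLT to the stationary $h$-dependent array of indicators. You then use Cauchy--Schwarz to show that the finitely many lag covariances at the threshold $a/\sqrt{T}$ converge to their values at $0$.

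This buys two things. First, you do not rely on the hypotheses of the cited theorem. As recalled in the paper, it requires $0<q_{0.5}<\infty$, which fails here because $q_{0.5}=\rho(h)=0$, and it is formulated for lag-one ratios of an AR(1). Second, your identity $X_t=\tfrac12 s_ts_{t+h}$ gives the actual reason the indicator covariances vanish: the signs are i.i.d.\ Rademacher. Zero autocorrelation of $y_t$ is a second-order property and does not by itself control these fourth-order sign moments. The paper's route is shorter; yours is self-contained.

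There is one caveat, which the paper shares. Independent innovations combined with a spherical non-Gaussian law for the pair $(y_t,y_{t+h})$ is a vacuous combination. By the Herschel--Maxwell theorem, a rotation-invariant pair with independent coordinates is Gaussian. With independent univariate Student-$t_3$ innovations, for instance, the ratio is not Cauchy. Its density at zero is $f_{\varepsilon}(0)E|\varepsilon_t|=4/\pi^2$, and the limit variance would be $\pi^4/64$ rather than $\pi^2/4$.

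The consistent way to cover the Pearson-type II/VII and Kotz cases is to take the whole vector $(y_1,\ldots,y_T)'$ spherical, $y=RU$ with $U$ uniform on the sphere. Because the ratios are scale-invariant, $x_t=U_{t+h}/U_t$. The ratio process then has exactly its law under i.i.d.\ Gaussian innovations, and your Steps 1--3 apply unchanged.
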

\begin{proof}
According to Nadarajah (2006), if $y_t$ is autocorrelated with zero mean, constant variance, and distributed according to a elliptically distributed Pearson-type II, Pearson-type VII or Kotz-type distribution, then the ratios $x_t=\frac{y_t}{y_{t-h}}$ have a Cauchy distribution with median equal to $\rho(h)$ and density given by\endnote{This result is satisfied regardless of whether the innovations in (\ref{eq:process}) are independent or just uncorrelated. Although the extension to conditionally heteroscedastic innovations is trivial, in this paper, we focus the discussion on independent innovations and left the extension to uncorrelated innovations for further research.}
\begin{equation}
\label{eq:Cauchy_2}
f(x;\rho(h))= \frac{\sqrt{1-\rho(h)^2}}{\pi \left( x^2-2 \rho(h) x +1\right)}.
\end{equation}
Using the asymptotic distribution derived by Berkoun and Fellaf (2011) in (\ref{eq:Asympt_H}) and the fact that the ratios $x_t$ are Cauchy, we can see that $f(q_{0.5})=\frac{1}{\pi \sqrt{1-\rho(h)^2}}$. Furthermore, by definition, $X_1$ has a binomial distribution taking values -0.5 and 0.5 each with probability 0.5, and, consequently, $E(X_1^2)=1/4$. Finally, given that $\rho(h)=0$, $E(X_1 X_{1+k})=0$ and, consequently, in this case, $\sigma_T^2=1/4$. Substituting adequately these quantities in (\ref{eq:Asympt_H}), the asymptotic distribution of $\hat{\rho}^*_H(h)$ is as given in (\ref{eq:Asympt_H_2}).
\end{proof}

Three comments about Theorem \ref{theo:asympt_H}. First, taking into account that the mean of a Cauchy variable is not defined, using the median as in (\ref{eq:new_1}) is an obvious estimator of $\rho(h)$. Second, using the asymptotic distribution in (\ref{eq:Asympt_H_2}), we can see that the Hurwicz estimator of $\rho(h)$ is less efficient than the corresponding sample autocorrelation, $r^*(h)$, supporting the well known result about the median not being an efficient estimator. The minimum relative efficiency loss (when $\rho(h)=0$) is given by $\pi^2/4=2.47$. The variance of the Hurwicz estimator is at least 2.5 times larger than the variance of the sample autocorrelation.\endnote{Note that the estimator proposed by Haddad (2000) is also based on the median and, consequently, expected to have the same issues associated with efficiency of the Hurwicz estimator. Haddad (2000) shows that his estimator is asymptotically robust. } Third, the asymptotic distribution obtained in Theorem \ref{theo:asympt_H} allows the construction of 95\% significance bounds for $\rho(h)$ as follows:
\begin{equation}
\label{eq:bounds_H}
\pm \frac{1.96 \times \pi}{2 \sqrt{T}}.
\end{equation}
Obviously, the significance bounds in (\ref{eq:bounds_H}) are wider than those of the corresponding sample autocorrelations.


In the more realistic case of $\mu$ being unknown, one can always implement the estimator in (\ref{eq:new_1}) with the observations centered with respect to the mean as follows
\begin{equation}
\label{eq:new_2}
\hat{\rho}_H(h)=med\left(\frac{y_{h+1}-\bar{y}}{y_1-\bar{y}}, \frac{y_{h+2}-\bar{y}}{y_2-\bar{y}},...,\frac{y_T-\bar{y}}{y_{T-h}-\bar{y}} \right);
\end{equation}
see Smadi, Jaber and Al'Zubi (2014), who consider this estimator in the context of an AR(1) model, and carry out limited Monte Carlo simulations to investigate its distribution. It is important to remark that the asymptotic distribution of $\hat{\rho}_H(h)$ is the same as that of $\hat{\rho}^*_H(h)$. To see that, note that $\bar{y}$, the sample mean is a consistent estimator of the population mean and, therefore, it converges in probability to $\mu$, which by the Slutski Theorem implies that $y_{t}-\bar{y}$ converges in distribution to $y_{t}-\mu$, which is elliptically Pearson-type II, Pearson-type VII or Kotz-type distributed. Furthermore, using the Mann-Wald Continuous Mapping Theorem (Mann and Wald, 1943), $\frac{y_{t+h}-\bar{y}}{y_t-\bar{y}}$ converges in distribution to $\frac{y_{t+h}-\mu}{y_t-\mu}$, which according to Nadarajah (2006) has a Cauchy distribution. Note that the moment of time corresponding to the median of the centered in (\ref{eq:new_2}) is the same as that corresponding to the median of the uncentered ratios in (\ref{eq:new}). Denote this moment of time as $\tau$. Then, $\hat{\rho}_{H}^*(h)=\frac{y_{\tau+h}}{y_{\tau}}$, while $\hat{\rho}_H(h)=\frac{y_{\tau+h}-\bar{y}}{y_{\tau}-\bar{y}}$. Therefore, the asymptotic distribution of  $\hat{\rho}_{H}^*(h)$ and $\hat{\rho}_{H}(h)$ coincide.

\subsection{Quasi-Maximum likelihood estimator}


Looking for more efficient estimators of the autocorrelation function and inspired by Reschenhofer (2019), we also propose a QML estimator of $\rho(h)$ based on the truncated Cauchy distribution of the transformed ratios $z_t$ defined in (\ref{eq:ratios_1}), which by definition are restricted to take values in the interval $[-1,1]$. Recall that the derivative of the density remains unchanged when $x$ is replaced by $x^{-1}$ and, consequently, $\frac{y_t}{y_{t-h}}$ and $\frac{y_{t-h}}{y_t}$ have the same distribution. If $y_t$ has a elliptically distributed Pearson-type II, Pearson-type VII and Kotz-type distribution, then $z_t$ has a truncated-Cauchy distribution as defined in Endnote 8. Consequently, $\rho(h)$ can be estimated by QML maximizing the corresponding log-likelihood obtained as if $z_t$ were serially independent. Denote this estimator as $\hat{\rho}^*_{Q}(h)$, which is given by
\begin{equation}
\label{eq:QML}
\begin{split}
\hat{\rho}^*_{Q}(h) & =\arg \max_{\rho(h)} \log \mathcal{L}(\rho(h);z_1,...,z_T) \\
& = \arg \max_{\rho(h)} \sum_{t=1}^T \left(- \log\left(\frac{\pi}{2} \right) + \frac{1}{2} \log (1-\rho(h)^2)-\log ((z_t-\rho(h))^2 + (1-\rho(h)^2)) \right)
\end{split}
\end{equation}
Given that, even if $y_t$ were independent, $z_t$ are not independent, we also define the alternative ML estimator, which is based on defining the likelihood function using only the observations of $z_t$, which are $h+1$ periods apart, i.e. $z_t^*=z_t, t=1, 1+(h+1), 1+2(h+1), 1+3(h+1),...$, and define the corresponding log-likelihood as follows:
\begin{equation}
\label{eq:QML_2}
\begin{split}
\hat{\rho}^{**}_{ML}(h) & =\arg \max_{\rho(h)} \log \mathcal{L}(\rho(h);z^*_1,...,z^*_{T^*}) \\
& = \arg \max_{\rho(h)} \sum_{i=1}^{T^*} \left(- \log\left(\frac{\pi}{2} \right) + \frac{1}{2} \log (1-\rho(h)^2)-\log ((z^*_{1+i(h+1)}-\rho(h))^2 + (1-\rho(h)^2)) \right).
\end{split}
\end{equation}
where $T^*=\left[ \frac{T}{h+1} \right]+1$.

Note that under standard conditions, the ML estimator is consistent. Furthermore, Dahiya, Staneski and Chaganthy (2001) give the following sufficient condition for the existence of a finite ML estimate of $\rho(h)$, namely $\frac{\sum(z_i-\rho(h))^2}{T}<\rho(h)^2+\frac{1}{3}$. This condition is necessary for sufficiently large values of $T$. Furthermore, they discuss that even for modest sample sizes, the probatility that the ML estimate of $\rho(h)$ fails to exist is extremelly small. Next, we establish the asymptotic distribution of $\hat{\rho}^{**}_{ML}(h)$ when $\rho(h)=0$.
\begin{theorem}
\label{th:Asympt}
Let $y_t$, $t=1,...,T$, be defined as in (\ref{eq:process}) with known mean $\mu=0$, variance $\sigma_y^2 < \infty$, and $c_i=0$, for $i=1, 2,...$ so that the autocorrelations $\rho(h)=0$, for $h=1, 2, ...$, with independent innovations with elliptically distributed Pearson-type II, Pearson-type VII and Kotz-type distributions. The asymptotic distribution of the ML estimator of $\rho(h)$ is as follows
\begin{equation}
\label{eq:asymp}
\sqrt{T} \hat{\rho}^{**}_{ML}(h)\xrightarrow{d} N(0, 2).
\end{equation}.
\end{theorem}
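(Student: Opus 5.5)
The plan is to reduce the statement to the classical asymptotic theory of maximum likelihood for an i.i.d. sample from a regular one-parameter family, and then compute the Fisher information of the truncated Cauchy density of Endnote 8 at $\rho(h)=0$.

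\textbf{Step 1 (independence of the subsample).} Under the assumptions ($c_i=0$, independent innovations), $y_t=\varepsilon_t$ is i.i.d. Each $z_t$ is a measurable function of the pair $(y_t,y_{t-h})$ only. For two retained indices $t$ and $t+(h+1)$, the pairs $\{y_t,y_{t-h}\}$ and $\{y_{t+h+1},y_{t+1}\}$ involve disjoint sets of time indices. Hence the thinned sequence $z^*_1,\dots,z^*_{T^*}$ used in (\ref{eq:QML_2}) is i.i.d. This is exactly why the thinned likelihood is a genuine likelihood, unlike the one in (\ref{eq:QML}).

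\textbf{Step 2 (distribution of the subsample).} By the result of Nadarajah (2006) invoked in the proof of Theorem \ref{theo:asympt_H}, together with the folding argument $x\mapsto x^{-1}$ given before (\ref{eq:QML}), each $z^*_i$ has the truncated Cauchy density $f(z;\rho)=\frac{2}{\pi}\frac{\theta}{(z-\rho)^2+\theta^2}$ on $[-1,1]$, where $\theta=\sqrt{1-\rho^2}$ and the true value is $\rho=0$.

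\textbf{Step 3 (regularity and the standard expansion).} The family $f(z;\rho)$ satisfies the usual regularity conditions in a neighbourhood of $\rho=0$:
\begin{itemize}
\item the support $[-1,1]$ does not depend on $\rho$;
\item the log-density is smooth in $\rho$ on $(-1,1)$, with derivatives bounded uniformly in $z\in[-1,1]$ near $\rho=0$;
\item identifiability holds, since the mode is $\rho$.
\end{itemize}
Existence of a finite maximiser with probability tending to one follows from the Dahiya, Staneski and Chaganthy (2001) condition, because $\frac{1}{T^*}\sum z_i^{*2}\to Var(z)=\frac{4-\pi}{\pi}<\frac13$. Consistency of $\hat\rho^{**}_{ML}(h)$ then holds. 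A mean-value expansion of the score, combined with the Lindeberg--L\'evy CLT and a LLN for the Hessian, gives
\[
\sqrt{T^*}\,\hat\rho^{**}_{ML}(h)\xrightarrow{d}N\left(0,\mathcal I(0)^{-1}\right).
\]

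\textbf{Step 4 (Fisher information).} Write $\log f=\mathrm{const}+\tfrac12\log(1-\rho^2)-\log(z^2-2\rho z+1)$. The score is
\[
-\frac{\rho}{1-\rho^2}+\frac{2z}{z^2-2\rho z+1},
\]
which equals $2z/(1+z^2)$ at $\rho=0$. Therefore
\[
\mathcal I(0)=\frac{8}{\pi}\int_{-1}^{1}\frac{z^2}{(1+z^2)^3}\,dz .
\]
The substitution $z=\tan\vartheta$ turns the integral into $\int_{-\pi/4}^{\pi/4}\sin^2\vartheta\cos^2\vartheta\,d\vartheta=\pi/16$. Hence $\mathcal I(0)=1/2$ and the asymptotic variance is $2$, which is the constant in (\ref{eq:asymp}).

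\textbf{Main obstacle: the normalisation.} The likelihood uses $T^*\approx T/(h+1)$ observations, not $T$. The clean result is therefore $\sqrt{T^*}\,\hat\rho^{**}_{ML}(h)\to N(0,2)$, equivalently $\sqrt{T}\,\hat\rho^{**}_{ML}(h)\to N(0,2(h+1))$. I would state the theorem with $T$ read as the effective number of observations entering (\ref{eq:QML_2}).

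If the intended statement really is with the full $T$, then it concerns $\hat\rho^*_Q(h)$ in (\ref{eq:QML}) rather than the thinned estimator. In that case one must show that the scores $2z_t/(1+z_t^2)$ at different $t$ are uncorrelated despite overlapping pairs, and apply an $(h$-dependent$)$ CLT. The symmetry argument I would try first is this: flipping the sign of $y_t$ flips $z_t$ and $z_{t+h}$, so for $s\neq t$ the expectation $E[\psi(z_t)\psi(z_s)]$ vanishes for odd $\psi$, because one can flip a $y$ appearing in exactly one of the two pairs. If that works, the long-run variance equals the marginal variance $1/2$, and one again obtains variance $2$ with the $\sqrt T$ rate.
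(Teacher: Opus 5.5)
Your Steps 1--4 reproduce the paper's proof. The paper also thins to $z_t$'s that are $h+1$ periods apart to obtain an i.i.d.\ sample, uses the truncated Cauchy law, invokes standard ML asymptotics, and evaluates the same integral $\frac{8}{\pi}\int_{-1}^{1}z^2(1+z^2)^{-3}\,dz=\frac12$. You correctly write the limit variance as $\mathcal{I}(0)^{-1}$; the paper's display $N(0,\mathcal{I}(\rho(h)))$ inverts only implicitly.

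Your normalisation objection is right, and it exposes an error in the paper rather than in your argument. The paper attaches $\sqrt{T}$ to a likelihood with only $T^*=[T/(h+1)]+1$ terms. Its argument therefore actually gives $\sqrt{T}\,\hat{\rho}^{**}_{ML}(h)\xrightarrow{d}N(0,2(h+1))$. Its later claim that the thinned and full-sample estimators are asymptotically equivalent is false at the $\sqrt{T}$ scale.

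A minor point in Step 3: you do not need the Dahiya, Staneski and Chaganthy condition. As $\rho\to\pm1$, $\frac12\log(1-\rho^2)\to-\infty$ while the remaining log-likelihood terms stay finite almost surely, so a maximiser in $(-1,1)$ always exists.

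Your second route is genuinely different from the paper, and it is what actually justifies $N(0,2)$ at rate $\sqrt{T}$. It applies to the full-sample estimator $\hat{\rho}^*_Q(h)$, which is also the estimator whose simulated distribution the paper compares with $N(0,2/T)$. The sign-flip step is sound. For $s\neq t$ the index pairs $\{t,t-h\}$ and $\{s,s-h\}$ share at most one element, so some $y$ enters exactly one of $z_t,z_s$. Flipping its sign leaves the joint law invariant, so $E[\psi(z_t)\psi(z_s)]=0$ for the bounded odd score $\psi(z)=2z/(1+z^2)$.

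To close the argument, note that $\{z_t\}$ is $h$-dependent. Given the vanishing cross-covariances, the CLT for $m$-dependent sequences gives $T^{-1/2}\sum_t\psi(z_t)\xrightarrow{d}N(0,\mathcal{I}(0))$. A uniform LLN for the second derivative also yields consistency. Combined with the information equality, which holds because each $z_t$ has exactly the truncated Cauchy marginal, it makes the sandwich $A^{-1}BA^{-1}$ collapse to $\mathcal{I}(0)^{-1}=2$. Here $A$ is the expected negative Hessian and $B$ the long-run score variance.

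One caveat you inherit from the paper in Step 2: the Cauchy law of $y_t/y_{t-h}$ needs $(y_t,y_{t-h})$ to be jointly spherical. Independent non-Gaussian elliptical innovations do not provide this; with independent Student-$t$ innovations, say, the mixing variables do not cancel in the ratio. So, as stated, both arguments are justified only for Gaussian innovations or under an explicit joint-sphericity assumption.
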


\begin{proof}
By choosing observations of $z_t$ that are $h+1$ periods appart, these observations are independent if $\rho(h)=0$. Therefore, the estimator in (\ref{eq:QML_2}) is a proper ML estimator, which implies that
\begin{equation}
\label{eq:asymp_2}
\sqrt{T} \hat{\rho}^{**}_{ML}(h)\xrightarrow{d} N\left( 0, \mathcal{I}(\rho(h))\right),
\end{equation}
where $\mathcal{I}(\rho(h))$ is the information matrix evaluated at the true value of $\rho(h)=0$. Denote $\rho(h)=\rho$. The score evaluated at $\rho=0$ is given by
\begin{equation*}
\begin{split}
\ell (\rho)|_{\rho=0} &= \frac{d}{d \rho} \left(- \log\left(\frac{\pi}{2} \right) + \frac{1}{2} \log (1-\rho(h)^2)-\log ((z^*-\rho(h))^2 + (1-\rho(h)^2)) \right)|_{\rho=0}\\
& = \left( -\frac{\rho}{1-\rho^2} +  \frac{2z^*}{(z^*_{1+i(h+1)}-\rho)^2 + (1-\rho^2)} \right)|_{\rho=0} =  \left( \frac{2z^*}{z^{*2} + 1} \right).
\end{split}
\end{equation*}

The information matrix is given by
\begin{equation*}
\begin{split}
\mathcal{I}(\rho) = E_{\rho} \left[ \ell (\rho)|_{\rho=0} \right]^2 & = \int_{-1}^1 \left( \frac{2z^*}{z^{*2} + 1} \right)^2 \frac{2}{\pi (1+z^{*2})} dz^* \\
& = \frac{8}{\pi} \int_{-1}^1 \frac{z^{*2}}{(1+z^{*2})^3} dz^* = \frac{8}{\pi} \frac{\pi}{16} = \frac{1}{2}.
\end{split}
\end{equation*}
\end{proof}
Note that when $T \rightarrow \infty$, the estimators in (\ref{eq:QML}) and in (\ref{eq:QML_2}) are equivalent. Consequently, the asymptotic distribution of $\hat{\rho}_{ML}^*(h)$ in (\ref{eq:asymp}) is also the asymptotic distribution of $\hat{\rho}_Q^*(h)$. Furthermore, using the same arguments as above, the more reallistic QML estimator based on $z_t$ obtained with observations of $y_t$ centered using the sample mean, denoted as $\hat{\rho}_Q(h)$, is also as in (\ref{eq:asymp}). Finally, the robust estimator $\hat{\rho}_{Q}(h)$ is less efficient than the sample autocorrelation, $r(h)$, with the relative asymptotic efficiency being 2, while it is more efficient than the Hurwicz-based estimator, $\hat{\rho}_H(h)$, with the relative asymptotic efficiency being 0.8.

The asymptotic distribution in (\ref{eq:asymp}) can be used to obtain significance intervals for the autocorrelations. Pointwise 95\% significance intervals with confidence for the autocorrelation of order $h$ can be constructed as follows
\begin{equation}
\label{eq:CI}
\pm \frac{1.96}{\sqrt{T/2}}.
\end{equation}

\subsection{Plug-in estimator based on sample means}

Alternatively, to avoid the computational issues associated to the numerical optimization of the likelihood, one can estimate $\rho(h)$ by using estimators based on the mean of the truncated-Cauchy distribution, which if $y_t$ is zero mean is given by
\begin{equation}
\label{eq:mean}
E(z_t)=\rho(h)+\frac{\sqrt{1-\rho(h)^2}}{\pi} log \left( \frac{1-\rho(h)}{1+\rho(h)}\right)=g(\rho(h)).
\end{equation}

The autocorrelation of order $h$ of $y_t$ can be estimated by an approximation of the inverse function of $g(\rho(h))$. We consider the following approximation of $g^{-1}(\rho(h))$
\begin{equation}
\label{eq:new}
\hat{\rho}^*_{P}(h)= \tanh \left(2.745 \bar{z}+1.034 \bar{z}^3\right),
 \end{equation}
where $\bar{z}=\sum_{t=1}^T z_t$. The constants in (\ref{eq:new}) have been chosen to minimize the mean square error in the approximation across all values of $\rho(h)$, which is smaller than that obtained when the approximation proposed by Reschenhofer (2019) in the context of $\rho(1)$ is implemented. The estimator in (\ref{eq:new}) is very simple computationally as it does not require numerical optimization.

The asymptotic distribution of the plug-in estimator in (\ref{eq:new}) can be obtained when $\rho(h)=0$ as established in the following theorem.
\begin{theorem}
Let $y_t$, $t=1,...,T$, be defined as in (\ref{eq:process}) with known mean $\mu=0$, variance $\sigma_y^2 < \infty$, and $c_i=0$, for $i=1, 2,...$ so that the autocorrelations $\rho(h)=0$, for $h=1, 2, ...$, with independent innovations with elliptically distributed Pearson-type II, Pearson-type VII and Kotz-type distributions. The asymptotic distribution of the plug-in estimator of $\rho(h)$ is as follows
\begin{equation}
\label{eq:asymp_23}
\sqrt{T} \hat{\rho}^{*}_{P}(h)\xrightarrow{d} N(0, 2.06).
\end{equation}.
\end{theorem}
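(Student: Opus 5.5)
The plan is a delta-method argument around $\bar{z}=0$, reading $\bar{z}$ in (\ref{eq:new}) as the sample mean of the $z_t$. It has three steps: a central limit theorem for $\sqrt{T}\,\bar{z}$, a linearisation of the map $m(x)=\tanh(2.745x+1.034x^3)$ at the origin, and a check that the constant comes out as $2.06$.

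\emph{Step 1 (marginal law).} Under the hypotheses, $c_i=0$ for $i\ge 1$, so $y_t=\varepsilon_t$ is an i.i.d. sequence. By the Nadarajah (2006) result used in the proof of Theorem \ref{theo:asympt_H}, each ratio $y_t/y_{t-h}$ is standard Cauchy. Each transformed ratio $z_t$ is therefore truncated standard Cauchy on $[-1,1]$. By Endnote 8 this gives
\begin{equation*}
E(z_t)=0, \qquad Var(z_t)=\frac{4-\pi}{\pi}.
\end{equation*}
Since the $z_t$ are bounded, all their moments exist.

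\emph{Step 2 (dependence structure and CLT).} The $z_t$ are not independent, but $z_t$ is a function of $(y_t,y_{t-h})$ only. Hence $z_t$ and $z_s$ are independent whenever $|t-s|\neq h$, and the sequence is $h$-dependent. For $|t-s|=h$, say $s=t+h$, the two variables share only $y_t$. Write
\begin{equation*}
z_t=\mathrm{sign}(y_{t-h})\,\mathrm{sign}(y_t)\,m_t, \qquad z_{t+h}=\mathrm{sign}(y_{t+h})\,\mathrm{sign}(y_t)\,m_{t+h},
\end{equation*}
where $m_t$ and $m_{t+h}$ are the min/max ratios of the absolute values. For symmetric elliptical laws, $\mathrm{sign}(y_{t-h})$ is a symmetric $\pm1$ variable independent of $|y_{t-h}|$ and of everything else. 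Conditioning on all quantities except $\mathrm{sign}(y_{t-h})$ then gives $Cov(z_t,z_{t+h})=0$.

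So the long-run variance reduces to the marginal variance $(4-\pi)/\pi$. The CLT for $m$-dependent stationary bounded sequences (Hoeffding--Robbins) then yields
\begin{equation*}
\sqrt{T}\,\bar{z}\xrightarrow{d} N\left(0,\frac{4-\pi}{\pi}\right).
\end{equation*}
The sum runs over $T-h$ terms rather than $T$, which does not change the limit.

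\emph{Step 3 (delta method).} The map $m$ is smooth with $m(0)=0$ and $m'(0)=2.745$, since the cubic term and the curvature of $\tanh$ contribute only at order $x^3$. The delta method gives
\begin{equation*}
\sqrt{T}\,\hat{\rho}^*_P(h)\xrightarrow{d} N\left(0,\;2.745^2\,\frac{4-\pi}{\pi}\right).
\end{equation*}
Numerically $2.745^2\approx 7.535$ and $(4-\pi)/\pi\approx 0.2732$, so the variance is approximately $2.06$, as claimed in (\ref{eq:asymp_23}).

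I expect Step 2 to be the main obstacle. The argument must rest on the sign-symmetry and the independence of sign and modulus for the elliptical innovations, not merely on uncorrelatedness of $y_t$, so that the cross term at lag $h$ vanishes. Without that, a nonzero lag-$h$ covariance would enter the variance. Steps 1 and 3 are routine.
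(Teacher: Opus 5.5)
Your proposal is correct and follows the same route as the paper. Both arguments use a truncated standard Cauchy marginal with variance $(4-\pi)/\pi$, zero autocovariance, a CLT for $\sqrt{T}\,\bar{z}$, and the delta method with derivative $2.745$ at the origin, giving $2.745^2(4-\pi)/\pi\approx 2.06$. Your Step 2 is tighter than the paper's: the paper gets uncorrelatedness by multiplying the untruncated ratios, whose product $y_{t+h}/y_{t-h}$ is Cauchy and has no mean, and passes from uncorrelatedness to the CLT without comment, whereas your sign-flip conditioning and the $h$-dependent (Hoeffding--Robbins) CLT settle both points.
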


\begin{proof}
The transformed ratios $z_t$ have a truncated standard Cauchy distribution with mean zero and variance $\frac{4-\pi}{\pi}$. Furthermore, they are serially uncorrelated
\begin{equation*}
Cov(z_t, z_{t-h})= E(z_t z_{t-h}) = E\left( \frac{y_{t+h}}{y_t} \frac{y_t}{y_{t-h}} \right) =0.
\end{equation*}
Therefore, the asymptotic distribution of the sample mean $\bar{z}$ is given by
\begin{equation*}
\sqrt{T} \bar{z} \xrightarrow{d} N\left(0, \frac{4-\pi}{\pi}\right).
\end{equation*}
The asymptotic distribution of $\hat{\rho}^*_{P}(h)$ can be obtained using the Delta method as follows
\begin{equation*}
 \sqrt{T} \hat{\rho}^*_{P}(h) \xrightarrow{d} N\left(0,  \frac{4-\pi}{\pi}\left(\frac{d }{d \bar{z}} \hat{\rho}^*_{P}(h)|_{\bar{z}=0} \right)^2\right).
\end{equation*}
Now, the derivative needed to obtain the asymptotic variance is given by
\begin{equation*}
\frac{d}{d\bar{z}} \tanh( 2.745 \bar{z} + 1.034 \bar{z}^3 )|_{\bar{z}=0} = \left(1-\tanh^2(2.745 \bar{z} + 1.034 \bar{z}^3) \right) \left( 2.745 + 3 \times 1.034 \bar{z}^2\right)|_{\bar{z}=0} = 2.745
\end{equation*}
Therefore, the asymptotic variance of $\sqrt{T} \rho^*_{P}(h)$ is given by $\frac{4-\pi}{\pi} \times 2.745^2=2.06$
\end{proof}

Note that asymptotically the efficiency rate of the plug-in estimator is nearly the same as that of the ML estimator. Finally, using the same arguments as above, the more reallistic plug-in estimator based on $z_t$ obtained with observations of $y_t$ centered using the sample mean, denoted as $\hat{\rho}_P(h)$, is also as in (\ref{eq:asymp}).

\section{Simulation results}
\label{sec:simu}

In this section, we carry Monte Carlo experiments to analyse the finite sample properties of the estimators proposed in this paper, namely, $\hat{\rho}_{H}(h), \hat{\rho}_{Q}(h)$ and $\hat{\rho}_P(h)$,  and their resistance against outliers. We compare these properties with those of the standard sample autocorrelations and of the robust estimator $\hat{\rho}_{MG}(h)$. We also analyse the finite sample properties of tests for lack of autocorrelation.

\subsection{Finite sample performance of estimators of the autocorrelation function}

For $t=1,...,T$, with $T=50, 100$ and 200, $R=5000$ simulated time series are generated by the following two models:
\begin{enumerate}
\item AR(1)
\begin{equation}
\label{eq:sim_1}
y^*_t=\phi y^*_{t-1} + \sqrt{1-\phi^2} \varepsilon_t,
\end{equation}
where $\phi=0.5$ and 0.9.
\item MA(1)
\begin{equation}
\label{eq:sim_2}
y^*_t= \frac{\varepsilon_t}{\sqrt{1+\theta^2}} -\theta \frac{\varepsilon_{t-1}}{\sqrt{1+ \theta^2}},
\end{equation}
where $\theta=0.5$ and 0.9.
\end{enumerate}

In both cases, $\varepsilon_t$ is generated by either a standard normal distribution or a standardized Student-7 distribution. Note that, in both models, $E(y^*_t)=0$, $E(y^{*2}_t)=1$. Furthermore, in model (\ref{eq:sim_1}), $\rho(h)=\phi^h$, while, in model (\ref{eq:sim_2}), $\rho(1)=-0.4$ and -0.5 when $\theta=0.5$ and 0.9, respectively, and $\rho(h)=0$, for $h \geq 2$. Each simulated series is contaminated with isolated additive outliers of size $\delta=0, 3$ and 5 as follows
\begin{equation}
\label{eq:out}
y_t=y^*_{t} + sign(\upsilon_t) \frac{\delta}{1- \lambda L} B_t,
\end{equation}
where $B_t$ has a Bernoulli distribution with $Prob(B_t)=0.10$, $\lambda=0$ and 0.5, and $\upsilon_t$ has a standard normal distribution; see Chen and Liu (1993), Franses and Haldrup (1994) and Ma and Genton (2000) for similar designs. Note that the outlier is additive when $\lambda$ is 0. This implies that the variable exhibits a one-time shot effect that produces a once-and-for-all peak at time $t$. By contrast, if $\lambda=0.5$, then the outlier has a longer effect that gradually dies out. This type of outliers is often denoted as Temporary Change.  

For each series, we estimate $\rho(h)$, for $h=1$ and 5, using: i) the sample autocorrelation, $r(h)$; ii) the robust estimator proposed by Ma and Genton (2000), $\hat{\rho}_{MG}(h)$; the estimator based on medians, $\hat{\rho}_{H}(h)$; the QML estimator, $\hat{\rho}_{Q}(h)$; and the plug-in estimator, $\hat{\rho}_{P}(h)$. It is important to remark that, in order to control for potential finite sample biases due to estimation of the mean, we consider each estimator assuming first that the mean is known (in this particular case, as the population mean is zero, the observations are not centred), and second, assuming that it is unknown with the observations centred by the sample mean.\endnote{Reschenhofer (2019) provides a detailed description of the negative bias of the sample autocorrelation of order 1 when the sample mean is estimated and compares the sampling distribution of the estimators of $\rho(1)$ in (\ref{eq:hur}) with some bias-corrected estimators based on the sample autocorrelations in the context of AR-GARCH models. However, he does not consider the effects of the presence of outliers.} Recall that the robust estimator $\hat{\rho}_{MG}(h)$ is location free and, consequently, it does not depend on whether the mean needs to be estimated or not. All results are obtained using Matlab codes programmed by the first author.

Tables \ref{tab:sim_AR} and \ref{tab:sim_MA} report the Monte Carlo averages and Mean Squared Errors (MSEs) of the estimators of the first order autocorrelation, $h=1$, when the simulated series are generated by the AR(1) model in (\ref{eq:sim_1}) contaminated as in (\ref{eq:out}) with $\lambda=0$ and by the MA(1) model in (\ref{eq:sim_2}) contaminated with $\lambda=0.5$, respectively.

Consider the results reported in Table \ref{tab:sim_AR}. First, as it is well known, we can observe that, regardless of the estimator considered, and of whether the series are contaminated or not, estimating the mean causes biases when $T$ is not large. These biases can be severe when the persistence is large, i.e. $\phi=0.9$. Furthermore, the MSE also increases when the mean needs to be estimated although the larger part of this increase is due to the bias. As expected, when $T$ is large, the bias is negligible and the MSE is nearly the same regardless of whether the mean is estimated or not. Second, the effect of isolated outliers on the bias of the sample autocorrelations can be severe; see also the results by Reschenhofer (2019). Once more, the bias is larger for the more persistent case. More importantly, we can observe that the bias hardly decreases with the sample size. With respect to the MSE of the sample autocorrelations, we can observe that it increases only due to the bias caused by the outliers. Third, when there is no contamination by outliers, the behaviour in terms of bias of all robust estimators is very similar to that of the sample autocorrelations. The price to pay for using these robust estimators is in terms of MSE. The most efficient estimator is MG, which has MSE similar to those of the sample autocorrelations. However, recall that the asymptotic distribution of the MG estimator is unknown and, consequently, inference is not allowed. Alternatively, as stated in the theory, among the proposed estimators based on ratios of consecutive observations seprated by $h$-period of time, the most inefficient estimator is H, which is based on the median; see also the results of the simulations carried out by Sri Ranganath (2018), who analyse and compare the bias and MSEs of $r(1)$ and $\hat{\rho}_H(1)$ when the innovations have a non-contaminated GED distribution. The MSEs of the QML and plug-in estimators are very similar to each other, and lye between those of the MG and H estimators. Furthermore, we can observe that the difference between the MSEs of the MG and the latter estimators is negligible when either the autocorrelation or the sample size, $T$, are large.  Fourth, the estimators based on the ratio of consecutive observations are highly resistant to outliers; see also Guo (2000), who also considers estimators based on ranks, and double exponential and logistic distributions.  While the sample autocorrelations may be completely missleading to estimate population autocorrelations, the proposed estimators are resistant against outliers, even if those are large. Furthermore, if $T$ is large enough, we can observe that the robust estimators are unbiased.

Consider now the results reported in Table \ref{tab:sim_MA} for the contaminated MA(1) with negative autocorrelations. We can observe that the conclusions obtained from Table \ref{tab:sim_AR} are in concordance with those obtained from Table \ref{tab:sim_MA}.

All in all, the simulation evidence support the theoretical results and allow us to recomend using the proposed plug-in estimator, which is highly resistant against outliers with only a relatively minor price in terms of efficiency.

\begin{landscape}
\begin{table}[th]
\caption{Monte Carlo results for the estimators of the first order autocorrelation when the series, generated by an AR(1) model with parameter $\phi$, zero mean, and normal errors, are contaminated by isolated additive outliers of size $\delta$: $r$, $\hat{\rho}_{MG}$, $\hat{\rho}_H$, $\hat{\rho}_Q$ and $\hat{\rho}_P$. The estimators with * are obtained assuming that the population mean (zero) is known.}
\label{tab:sim_AR}
\begin{tabular}{lccccccccccccccccccccccccc}
\hline
& \multicolumn{6}{c}{$T=50$} & & \multicolumn{6}{c}{$T=100$} & & \multicolumn{6}{c}{$T=200$}\\
& \multicolumn{2}{c}{$\delta=0$} & \multicolumn{2}{c}{$\delta=3$}& \multicolumn{2}{c}{$\delta=5$} && \multicolumn{2}{c}{$\delta=0$} & \multicolumn{2}{c}{$\delta=3$} & \multicolumn{2}{c}{$\delta=5$} & & \multicolumn{2}{c}{$\delta=0$} & \multicolumn{2}{c}{$\delta=3$} & \multicolumn{2}{c}{$\delta=5$} \\
 & Mean & MSE & Mean & MSE & Mean & MSE &&  Mean & MSE & Mean & MSE & Mean & MSE && Mean & MSE & Mean & MSE & Mean & MSE \\
\hline
& \multicolumn{20}{c}{$\phi=0.5$}\\
 \hline
r & 0.447 & 0.037 & 0.308 & 0.084 & 0.208 & 0.138 && 0.473 & 0.016 & 0.327 &  0.054 & 0.217 &  0.108 && 0.487 &  0.008 & 0.336 & 0.039 & 0.220 & 0.092\\\vspace{0.2cm}
$r^* $ & 0.480 & 0.032 & 0.340 & 0.071 & 0.238 & 0.122 && 0.488 & 0.015 & 0.343 & 0.048 & 0.232 & 0.099 && 0.495 & 0.008 & 0.344 & 0.037 & 0.227 & 0.088\\ \vspace{0.2cm}
$\hat{\rho}_{MG}$ & 0.441 & 0.047 & 0.380 & 0.062 & 0.416 & 0.051 && 0.469 &  0.020 & 0.411 & 0.031 & 0.448 & 0.024 && 0.486 &  0.010 & 0.426 & 0.017 &  0.464 & 0.012\\
$\hat{\rho}_H$& 0.453 & 0.083 & 0.401 & 0.094 & 0.398 &  0.096 && 0.475 & 0.040 & 0.419 & 0.049 & 0.414 & 0.052 && 0.488 & 0.019 & 0.429 & 0.026 & 0.423 & 0.029\\\vspace{0.2cm}
$\hat{\rho}^*_H$	& 0.485 & 0.076 & 0.431 & 0.086 & 0.424 & 0.091 && 0.489 & 0.038 & 0.432 &  0.046 & 0.426 & 0.050 && 0.497 & 0.019 & 0.436 &  0.025 & 0.429 & 0.028\\
$\hat{\rho}_Q$ & 0.449 & 0.063 & 0.396 & 0.072 & 0.400 & 0.067 && 0.474 & 0.030 & 0.418 & 0.038 & 0.417 & 0.037 && 0.487 & 0.015 & 0.429 & 0.020 & 0.426 & 0.021\\\vspace{0.2cm}
$\hat{\rho}^*_Q$ & 0.479 &	0.058 & 0.424 & 0.064 & 0.420 & 0.063 && 0.489 & 0.029 & 0.431 & 0.035 & 0.428 & 0.035 && 0.495 & 0.014 & 0.436 & 0.019 & 0.431 & 0.020\\
$\hat{\rho}_P$ & 0.450	 & 0.066 & 0.404 & 0.074 & 0.418 &  0.067 && 0.475 & 0.031 & 0.427 & 0.039 & 0.436 & 0.036 && 0.488 & 0.016 & 0.438 & 0.020 &	 0.446 & 0.019\\\vspace{0.2cm}
$\hat{\rho}^*_P$ & 0.479 & 0.061 & 0.433 & 0.066 & 0.437 & 0.063 && 0.489 & 0.030 & 0.441 & 0.036 &	0.447 & 0.034 && 0.496 & 0.015 & 0.445 & 0.019 & 0.451 &	 0.018\\
\hline
& \multicolumn{20}{c}{$\phi=0.9$}\\
 \hline
r & 0.819 & 0.023 & 0.501 & 0.249 & 0.328 & 0.441 && 0.861 & 0.008 &	 0.551 & 0.166 & 0.350 & 0.356 && 0.881 & 0.003 & 0.588 & 0.121 & 0.374 & 0.303\\\vspace{0.2cm}
$r*$ & 0.870 & 0.013 & 0.594 & 0.173 & 0.417 & 0.346 && 0.884 & 0.005 & 0.601 & 0.130 & 0.401 & 0.304 && 0.891 & 0.002 & 0.614 & 0.104 & 0.401 & 0.276\\\vspace{0.2cm}
$\hat{\rho}_{MG}$ & 0.803 & 0.029 & 0.773 & 0.039 & 0.794 &  0.032 && 0.853 & 0.009 & 0.825 & 0.014 & 0.844 & 0.010 && 0.877 & 0.003 & 0.851 & 0.006 & 0.868 & 0.004\\
$\hat{\rho}_H$ & 0.835 & 0.040 & 0.786 & 0.063 & 0.788 & 0.060 && 0.869 & 0.014 & 0.827 & 0.024 & 0.825 & 0.024 && 0.884 & 0.006 & 0.846 & 0.011 & 0.845 &	 0.011\\\vspace{0.2cm}
$\hat{\rho}^*_H$ & 0.881 & 0.026 & 0.839 & 0.041 & 0.839 & 0.041 &&  0.892 & 0.011 & 0.852 & 0.018 & 0.850 & 0.018 && 0.894 & 0.005 & 0.859 & 0.009 & 0.857 & 0.009\\
$\hat{\rho}_Q$ & 0.825 & 0.030 & 0.774 & 0.052 & 0.778 & 0.048 && 0.865 & 0.010 & 0.819 & 0.021 & 0,817 & 0.021 && 0.883 & 0.004 & 0.842 & 0.010 & 0.839 & 0.010\\\vspace{0.2cm}
$\hat{\rho}^*_Q$ & 0.873 & 0.018 & 0.830 & 0.032 & 0.828 & 0.031 && 0.888 & 0.007 & 0.845 & 0.015 & 0.843 & 0.015 && 0.893 & 0.003 & 0.855 & 0.008 & 0.851 & 0.008\\
$\hat{\rho}_P$ & 0.827&  0.030 & 0.781 & 0.047 & 0.789 & 0.041 && 0.865 & 0.010 & 0.820 & 0.019 & 0.824 & 0.018 && 0.881 & 0.004 & 0.840 & 0.009 & 0.842 & 0.008\\\vspace{0.2cm}
$\hat{\rho}^*_P$ & 0.873 &	0.018 & 0.831 & 0.029 & 0.834 & 0.026 && 0.887 & 0.007 &  0.844 & 0.014 & 0.846 & 0.013 && 0.891 & 0.003 & 0.852 & 0.007 & 0.852 & 0.007\\
\hline
\end{tabular}
\end{table}
\end{landscape}

\begin{landscape}
\begin{table}[th]
\caption{Monte Carlo results for the estimators of the first order autocorrelation when the series, generated by an MA(1) model with parameter $\theta$, zero mean, and normal errors, are contaminated by additive outliers of size $\delta$: $r$, $\hat{\rho}_{MG}$, $\hat{\rho}_H$, $\hat{\rho}_Q$ and $\hat{\rho}_P$. The estimators with * are obtained assuming that the population mean (zero) is known.}
\label{tab:sim_MA}
\begin{tabular}{lccccccccccccccccccccccccc}
\hline
& \multicolumn{6}{c}{$T=50$} & & \multicolumn{6}{c}{$T=100$} & & \multicolumn{6}{c}{$T=200$}\\
& \multicolumn{2}{c}{$\delta=0$} & \multicolumn{2}{c}{$\delta=3$}& \multicolumn{2}{c}{$\delta=5$} && \multicolumn{2}{c}{$\delta=0$} & \multicolumn{2}{c}{$\delta=3$} & \multicolumn{2}{c}{$\delta=5$} & & \multicolumn{2}{c}{$\delta=0$} & \multicolumn{2}{c}{$\delta=3$} & \multicolumn{2}{c}{$\delta=5$} \\
 & Mean & MSE & Mean & MSE & Mean & MSE &&  Mean & MSE & Mean & MSE & Mean & MSE && Mean & MSE & Mean & MSE & Mean & MSE \\
\hline
& \multicolumn{20}{c}{$\theta=0.5$}\\
 \hline
r & -0.394 & 0.024 & -0.112 & 0.154 & 0.073 & 0.321 && -0.395 & 0.012 & -0.089 & 0.134 & 0.119 & 0.314 && -0.399 & 0.006 & -0.077 &	 0.123 & 0.140 & 0.314\\\vspace{0.2cm}
$r^*$ & -0.388 & 0.024 & -0.088 & 0.172 & 0.102 & 0.355 && -0.392 & 0.012 & -0.076 & 0.143 &	 0.134 & 0.330 && -0.398 & 0.006 & -0.070 & 0.128 & 0.148 & 0.323 \\\vspace{0.2cm}
$\hat{\rho}_{MG}$ & -0.397 & 0.034 & -0.231 & 0.095 & -0.190 & 0.124 && -0.396 & 0.017 & -0.226 & 0.063 & -0.191 & 0.084 && -0.401 & 0.008 & -0.227 & 0.047 &  -0.195 & 0.062 \\
$\hat{\rho}_H$ & -0.403 & 0.069 & -0.236 & 0.134 & -0.130 & 0.206 && -0.401 & 0.036 & -0.236 & 0.081 & -0.151 & 0.131 && -0.402 & 0.018 &	-0.239 & 0.053 &	-0.165 & 0.091\\\vspace{0.2cm}
$\hat{\rho}^*_H$	& -0.398 &	0.071 & -0.244 & 0.128 & -0.178 & 0.170 && -0.397 & 0.035 & -0.240 &	0.079 & -0.180 & 0.110 && -0.400 & 0.019 & -0.241 & 0.052 & -0.181 &	 0.081\\
$\hat{\rho}_Q$ & -0.397 & 0.051 & -0.227 & 0.114 & -0.122 & 0.189 && -0.396 & 0.027 &	-0.229 & 0.071 & -0.146 & 0.121 && -0.401 & 0.014 & -0.234 & 0.049 & -0.161 & 0.086\\\vspace{0.2cm}
$\hat{\rho}^*_Q$ & -0.391&	0.052 & -0.235 & 0.107 & -0.173 &	0.144 && -0.393 & 0.027 & -0.233 & 0.068 & -0.175 & 0.099 && -0.400 & 0.014 & -0.237 & 0.047 & -0.177 & 0.076\\
$\hat{\rho}_P$ &	-0.396 & 0.053 &	-0.226 & 0.115 &	-0.122 & 0.192 && -0.394 & 0.028 & -0.228 & 0.072 & -0.149 & 0.120 && -0.400 & 0.015 &-0.233 & 0.049 & -0.164 & 0.084\\\vspace{0.2cm}
$\hat{\rho}^*_P$ & -0.391 & 0.054 & -0.234 & 0.108 & -0.175 & 0.143 && -0.391 & 0.028 & -0.233 & 0.069 & -0.179 & 0.096 && -0.399 & 0.015 & -0.235 & 0.048 & -0.181 & 0.074\\
\hline
& \multicolumn{20}{c}{$\theta=0.9$}\\
 \hline
r & -0.488 & 0.020 & -0.173 & 0.177 & 0.033 & 0.387 && -0.492 & 0.010 & -0.147 & 0.161 & 0.078 & 0.381 && -0.495 & 0.005 & -0.138 &	 0.150 & 0.103 & 0.383 \\\vspace{0.2cm}
$r^*$ & -0.488 & 0.020 & -0.151 & 0.198 & 0.062 & 0.427 && -0.491 & 0.010 & -0.135 & 0.171 &	 0.094 & 0.401 && -0.495 & 0.005 & -0.131 & 0.155 & 0.111 & 0.393\\\vspace{0.2cm}
$\hat{\rho}_{MG}$ & -0.490 & 0.029 & -0.313 & 0.096 & -0.277 & 0.129 && -0.492 & 0.013 & -0.312 & 0.067 &  -0.284 & 0.086 && -0.496 &	0.007 & -0.317 &	0.049 & -0.289 &	0.063 \\
$\hat{\rho}_H$ & -0.493 & 0.064 & -0.315 & 0.137 & -0.209 & 0.220 && -0.492 & 0.032 & -0.322 & 0.081 & -0.232 & 0.142 && -0.497 & 0.015 & -0.332 & 0.053 & -0.256 & 0.094\\\vspace{0.2cm}
$\hat{\rho}^*_H$ & -0.492 &	0.064 & -0.325 &	0.128	& -0.272 & 0.170 && -0.492 & 0.031 & -0.333 & 0.076 &-0-272 &	0.113 && -0.497 & 0.015 & -0.337 & 0.051 & -0.278 & 0.080\\
$\hat{\rho}_Q$ &	-0.488 & 0.044 &	-0.303 & 0.116 & -0.198 & 0,204 && -0.490 & 0.022 & -0.316 & 0.072 & -0.224 & 0.133 && -0.495 & 0.011 & -0.324 & 0.051 & -0.247 & 0,091\\\vspace{0.2cm}
$\hat{\rho}^*_Q$ & -0.488 & 0.044 & -0.318 &	 0.107 & -0.263 &	0.148 && -0.490 & 0.022 & -0.325 & 0,069 & -0.264 & 0.103 && -0.495 & 0.011 & -0.329 & 0.049 & -0.268 & 0.077\\
$\hat{\rho}_P$ &	-0.488 & 0.045 & -0.304 & 0.117 & -0.199 &  0.206 && -0.490 & 0.023 & -0.315 & 0.073 & -0.227 & 0.131 && -0.495 & 0.011 & -0.322 & 0.052 & -0.249 & 0.089\\\vspace{0.2cm}
$\hat{\rho}^*_P$ & -0.488 & 0.045 & -0.318 & 0.088 & -0.265 & 0.145 && -0.490 & 0.023 & -0.324 & 0.069 & -0.266 &	 0.101 && -0.495 & 0.011 & -0.327 & 0.050 & -0.270 & 0.075\\
\hline
\end{tabular}
\end{table}
\end{landscape}

To further analyse the finite sample distribution of the proposed estimators of the autocorrelations, Figure \ref{fig:densities_1} plots the Monte Carlo densities of the five estimators of $\rho(5)$ considered when the series are simulated  by the AR(1) model with $\phi=0.9$, i.e. $\rho(5)=0.59$ and contaminated with outliers of size $\delta=0$ and 3, and with $\lambda=0$ and 0.5. Several important conclusions can be extracted from Table \ref{fig:densities_1}. First, we can observe that, when $T$ is small, the distribution of the sample autocorrelation when there are not outliers is extremely skewed to the left; see Reschenhofer (2019) for the same result in the context of the first order autocorrelation in AR(1) models with known mean. Similarly, the distributions of all robust estimators but the Ma and Genton (2000), are also left-skewed.  However, the distributions are more symmetric as the sample size increases. Second, the performance of $\hat{\rho}_{MG}(h)$ in the presence of outliers is similar to that of the sample autocorrelations. It seems that $\hat{\rho}_{MG}(h)$ looses part of its robustness properties when implemented to estimate autocorrelations of large order due to a reduction in the number of pairs used to estimate the autocorrelation. Also note that the performance of $\hat{\rho}_{MG}$ depends on the degree of persistence of the underlying time series. Third, the estimators of the autocorrelations based on ratios of consecutive observations separated by $h$ periods of time have large resistance to the presence of outliers. These distributions are hardly affected by the presence of outliers. Finally, it is remarkable that the empirical distributions of the proposed $\hat{\rho}_Q(h)$ and $\hat{\rho}_{P}(h)$ estimators are very similar to each other.

It is important to note that the empirical distribution of the robust estimators proposed in this paper approaches normality as $T$ increases. Even though their asymptotic distributions have only been derived for $\rho(h)=0$, we can conjecture that it is also valid for $\rho(h) \ne 0$.

\begin{figure}[h!]
\includegraphics[scale=0.24]{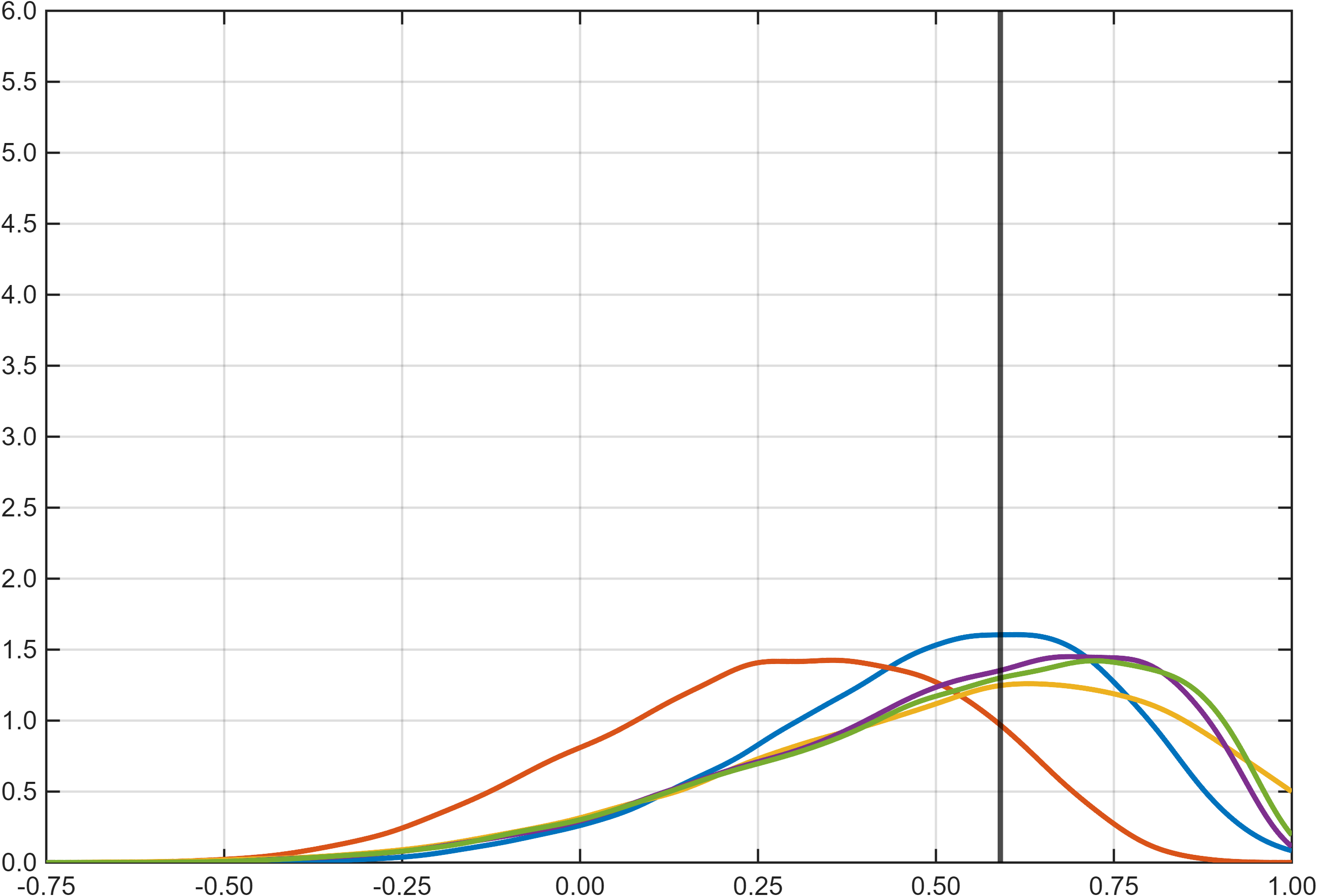}
\includegraphics[scale=0.24]{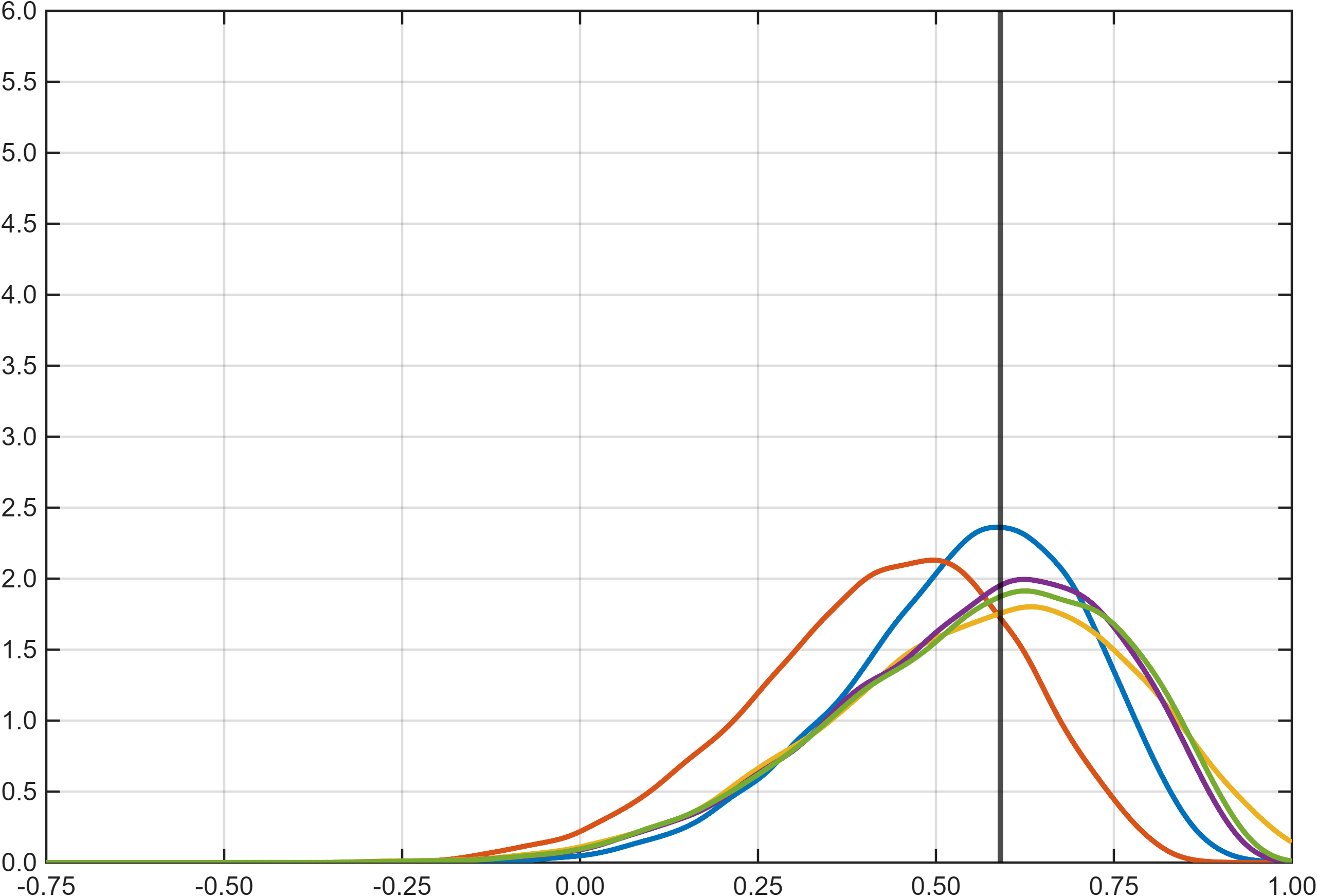}
\includegraphics[scale=0.23]{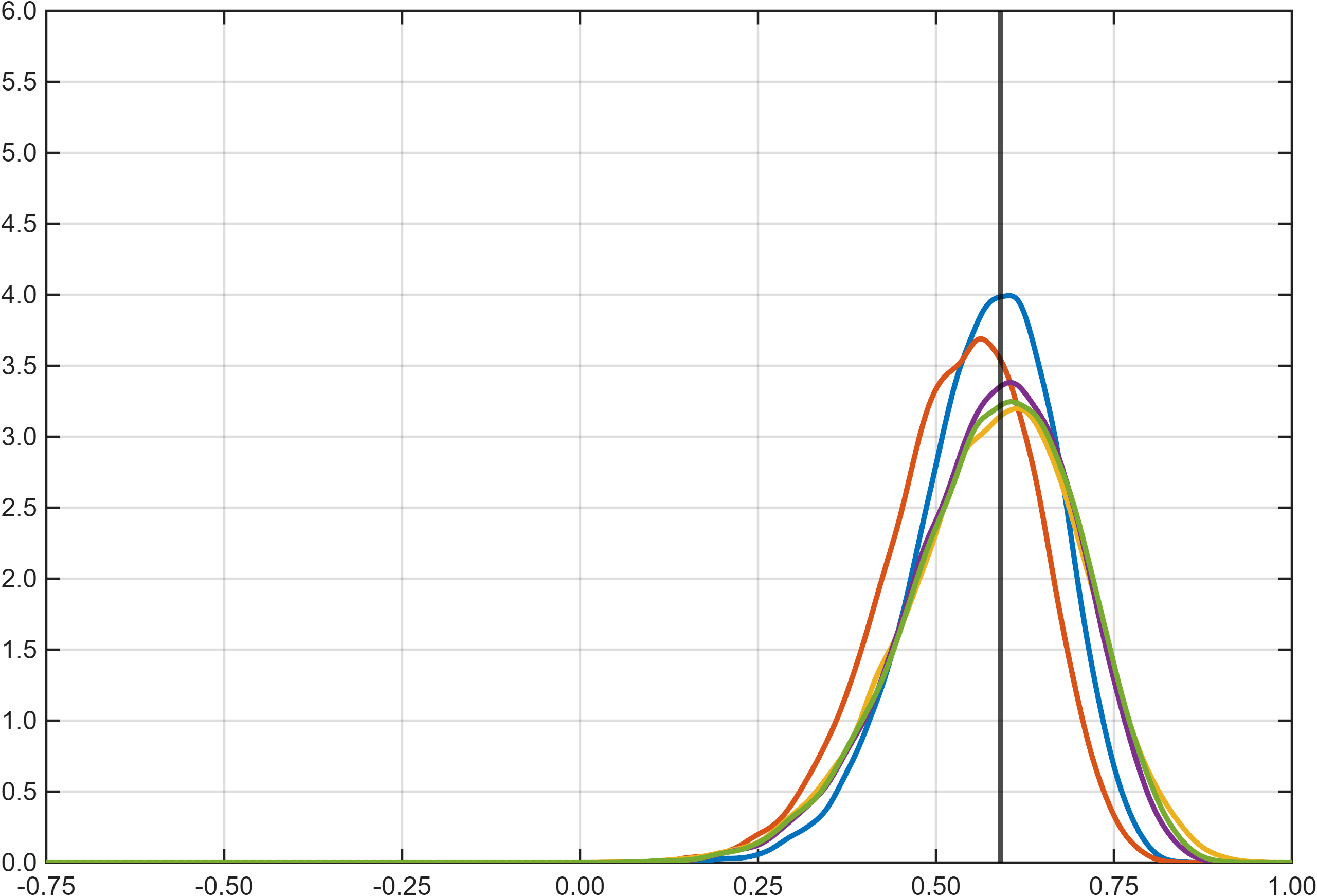}\\
\includegraphics[scale=0.24]{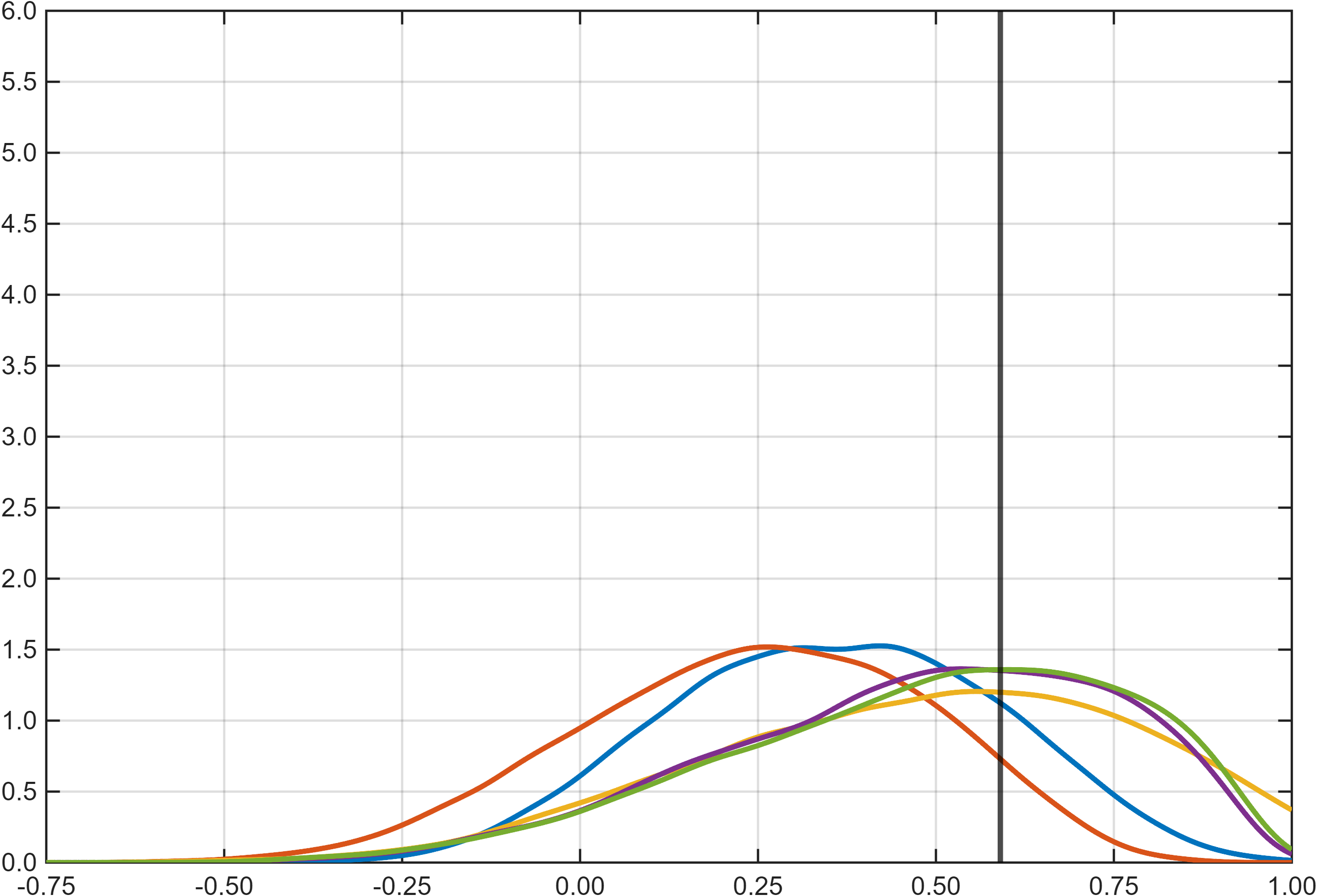}
\includegraphics[scale=0.24]{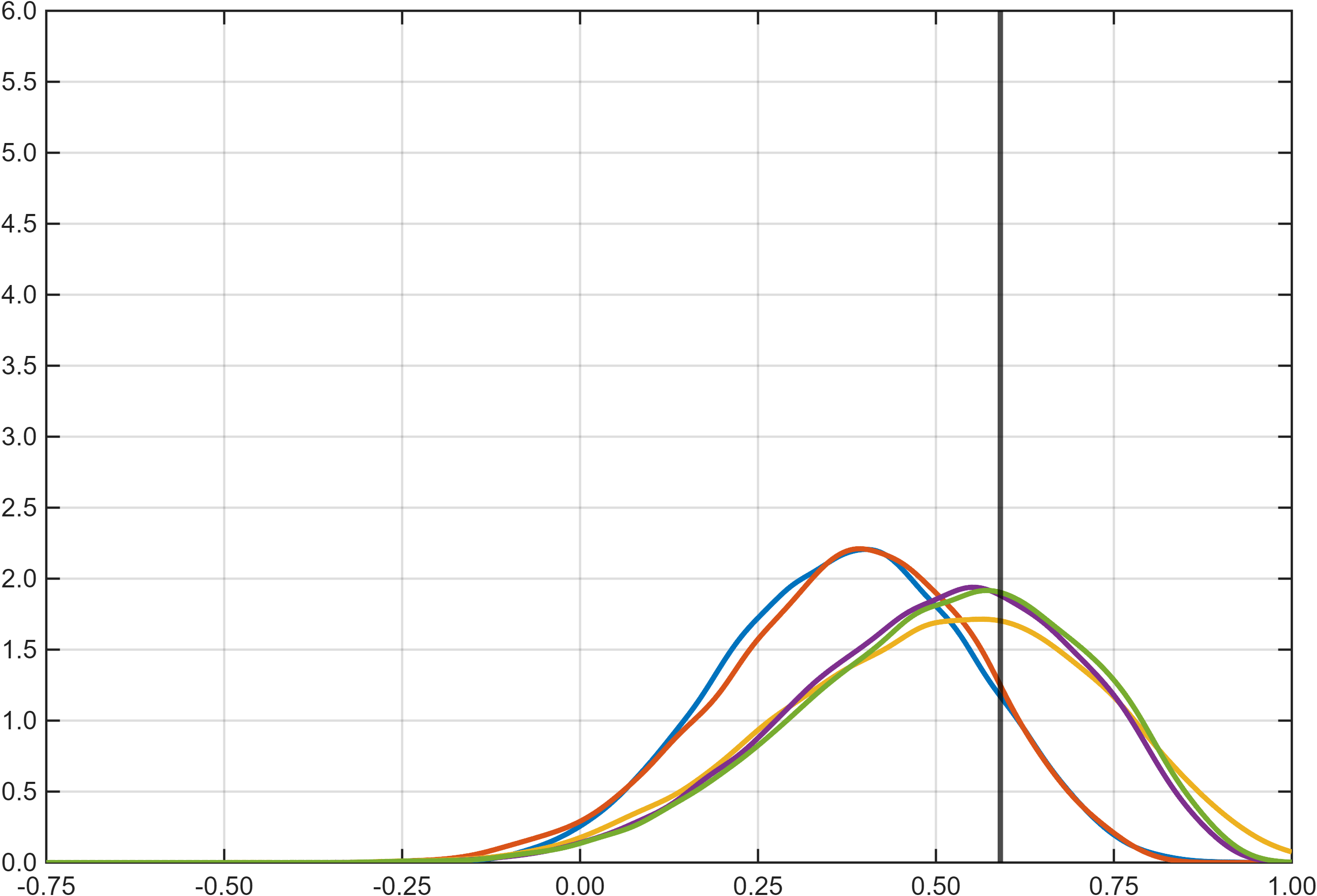}
\includegraphics[scale=0.24]{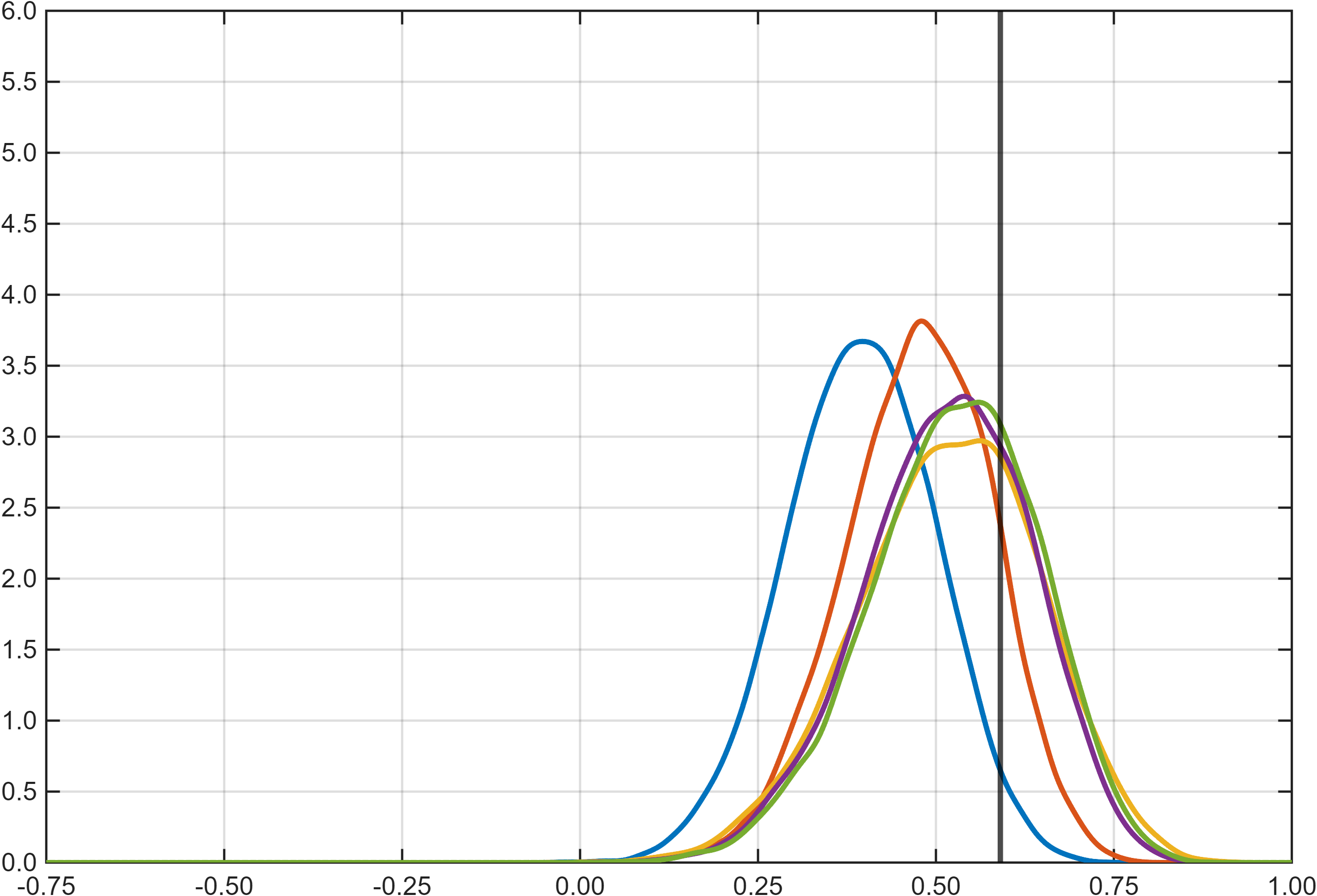}\\
\includegraphics[scale=0.24]{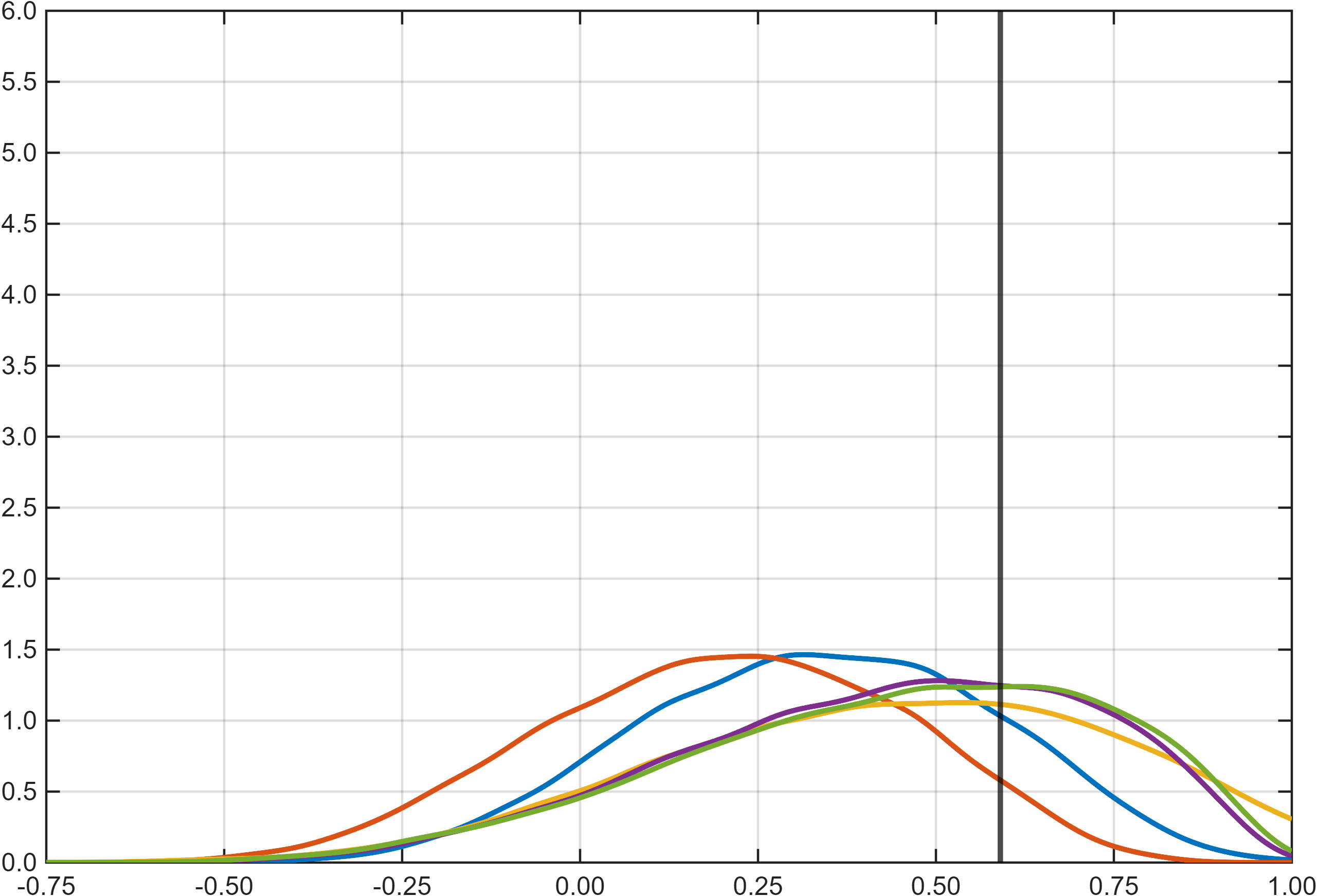}
\includegraphics[scale=0.24]{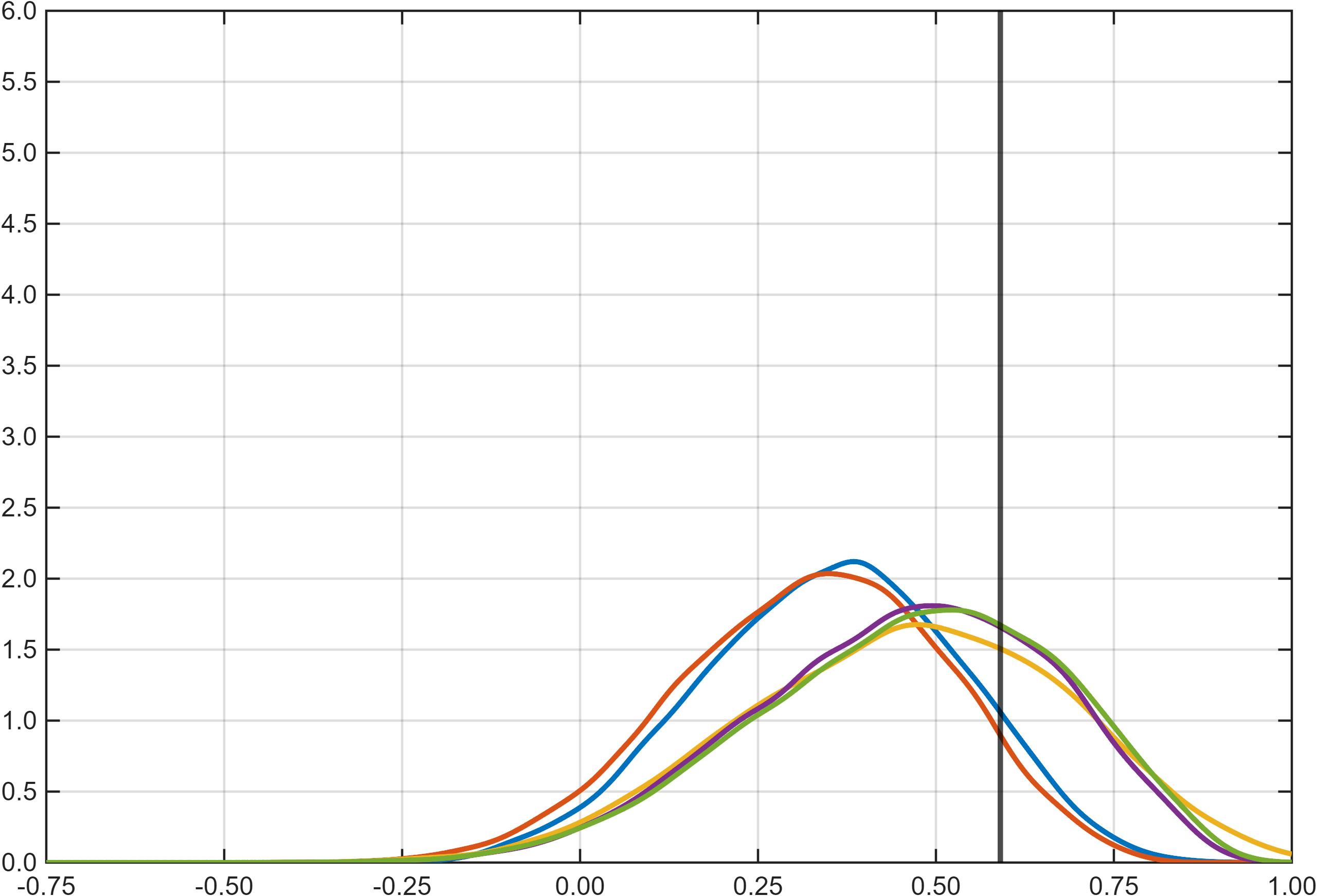}
\includegraphics[scale=0.24]{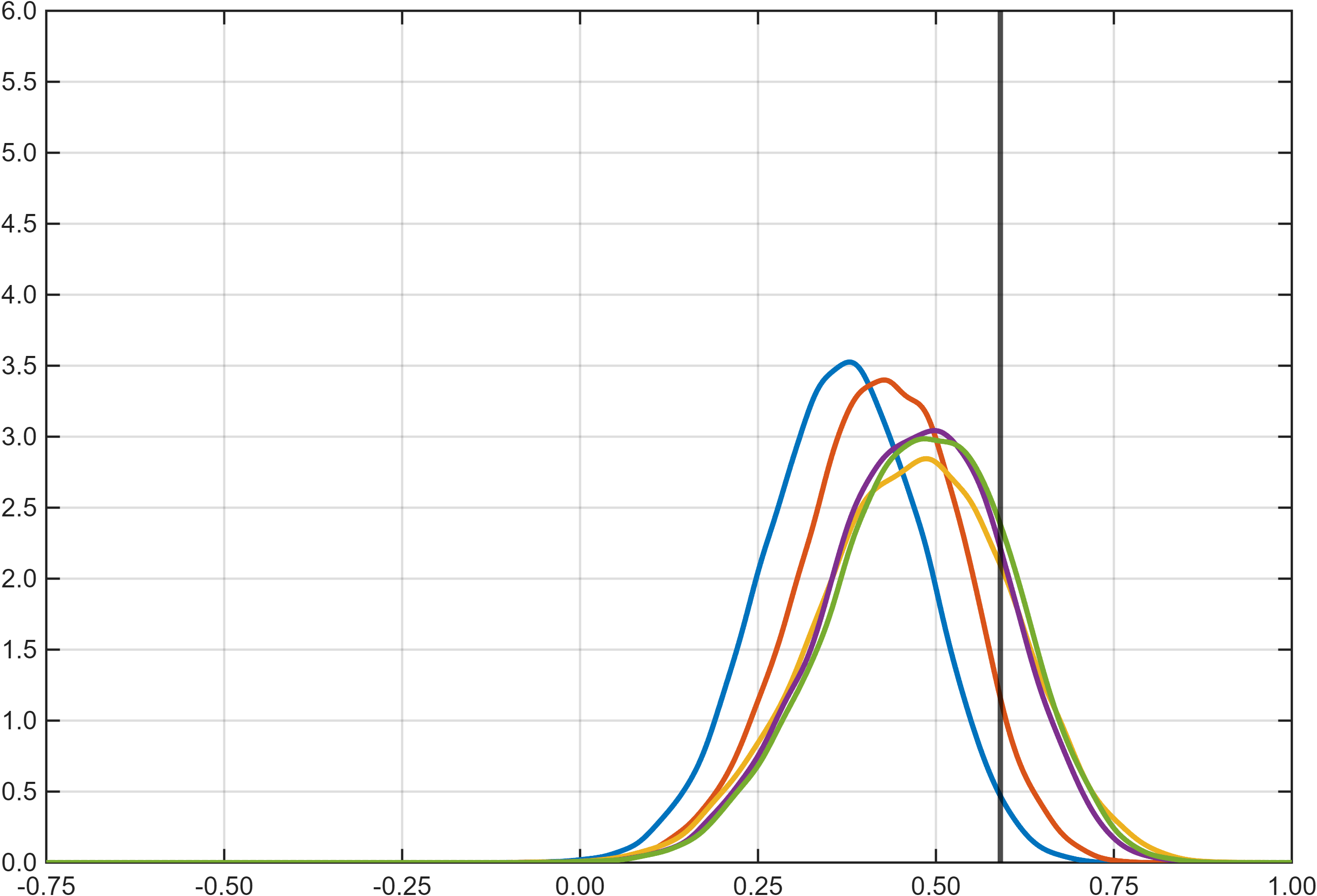}
\caption{Monte Carlo densities of estimators of $\rho(5)=0.59$ (vertical black lines) obtained when the simulated series of size $T=50$ (first column), $T=100$ (second column) and $T=300$ (third column) are generated with the AR(1) model with $\phi=0.9$ and contaminated with isolated additive outliers with $\delta=0$ (top row), $\delta=3$ and $\lambda=0$ (middle row), and $\delta=3$ and $\lambda=0.5$ (bottom row): $r(5)$ (blue), $\hat{\rho}_{MG}(5)$ (orange), $\hat{\rho}_H(5)$ (yellow), $\hat{\rho}_Q(5)$ (purple), and $\hat{\rho}_P(5)$ (green).}
\label{fig:densities_1}
\end{figure}

Finally, Figure \ref{fig:densities_2} plots the Monte Carlo densities of the estimators of the autocorrelation of order 5 when the series are simulated by the AR(1) model with parameter $\phi=0$, together with the finite sample approximation of the asymptotic densities of $r(5)$, $\hat{\rho}_H(5)$ and of $\hat{\rho}_{Q}(5)$ and $\hat{\rho}_P(5)$ in (\ref{eq:asymp_rho}), (\ref{eq:Asympt_H_2}) and (\ref{eq:asymp}), respectively. Several important conclusions can be obtained. First, we can observe that, regardless of the sample size, in absence of outliers the asymptotic distributions of the estimators of the sample autocorrelations considered  are good approximations of their corresponding empirical distributions. Therefore, the pointwise significance intervals constructed using the asymptotic distribution should have the appropriate coverage even in small samples. However, as pointed out above, the empirical distribution of the MG estimator is asymmetric and somehow between those of the sample autocorrelations and of the QML and plug-in estimators, which are very similar between them. 
Second, when the series are contaminated by outliers, the empirical distributions of the $\hat{\rho}_Q(5)$ and $\hat{\rho}_P(5)$ estimators are almost identical and close to their asymptotic distribution. The distribution of the sample autocorrelations are biased and not well approximated by the asymptotic distribution even when the sample size is large.
\begin{figure}[!ht]
\includegraphics[scale=0.24]{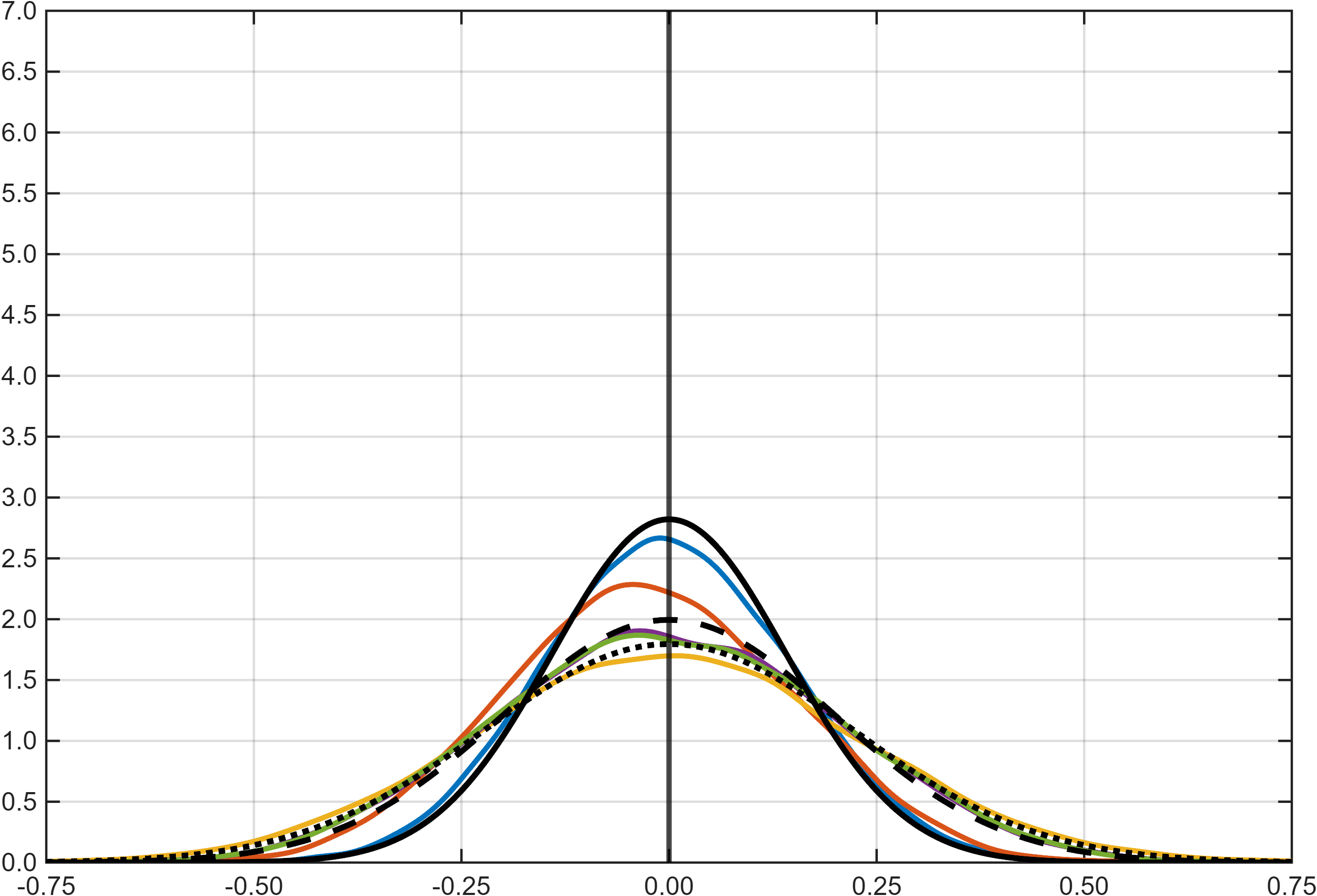}
\includegraphics[scale=0.24]{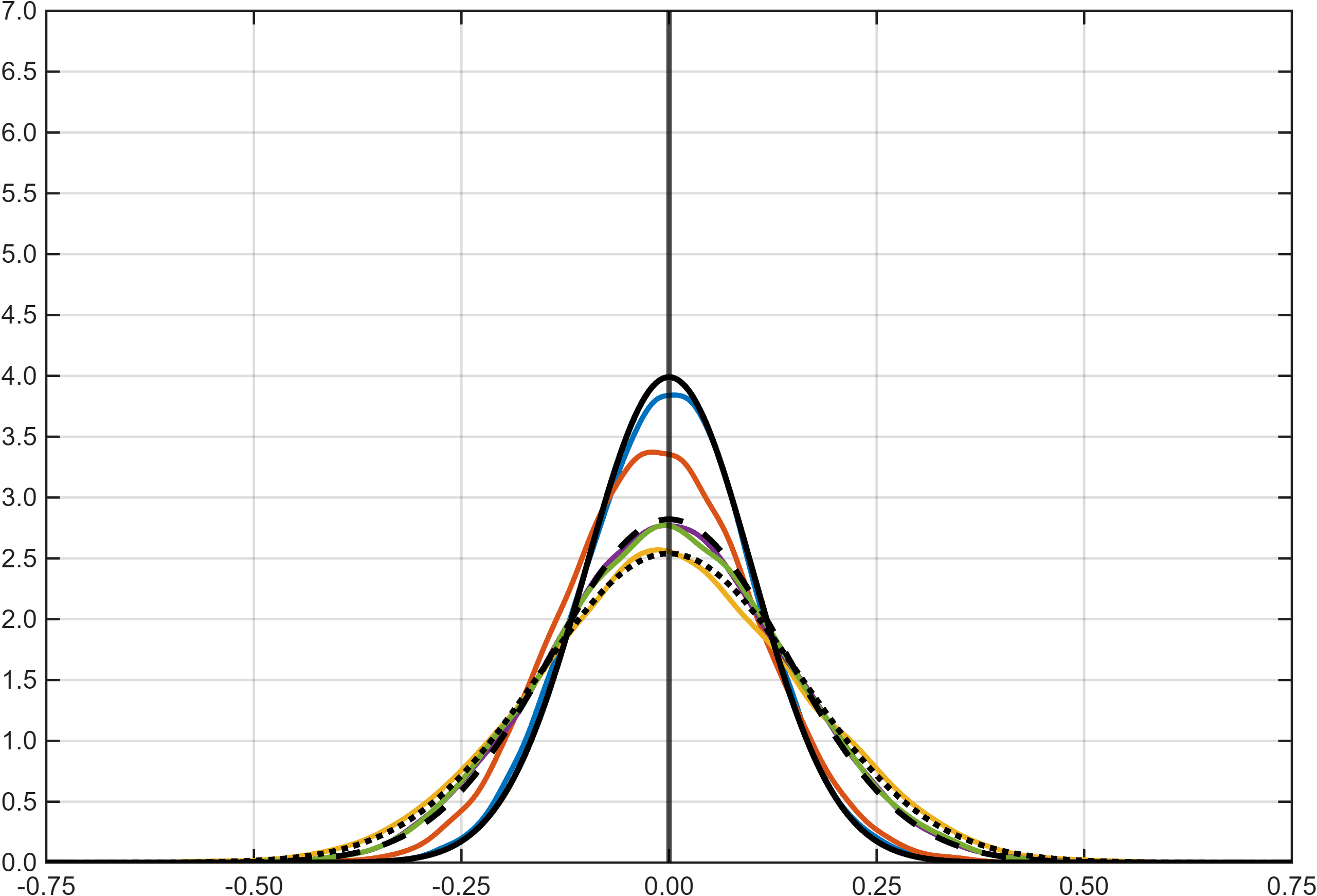}
\includegraphics[scale=0.24]{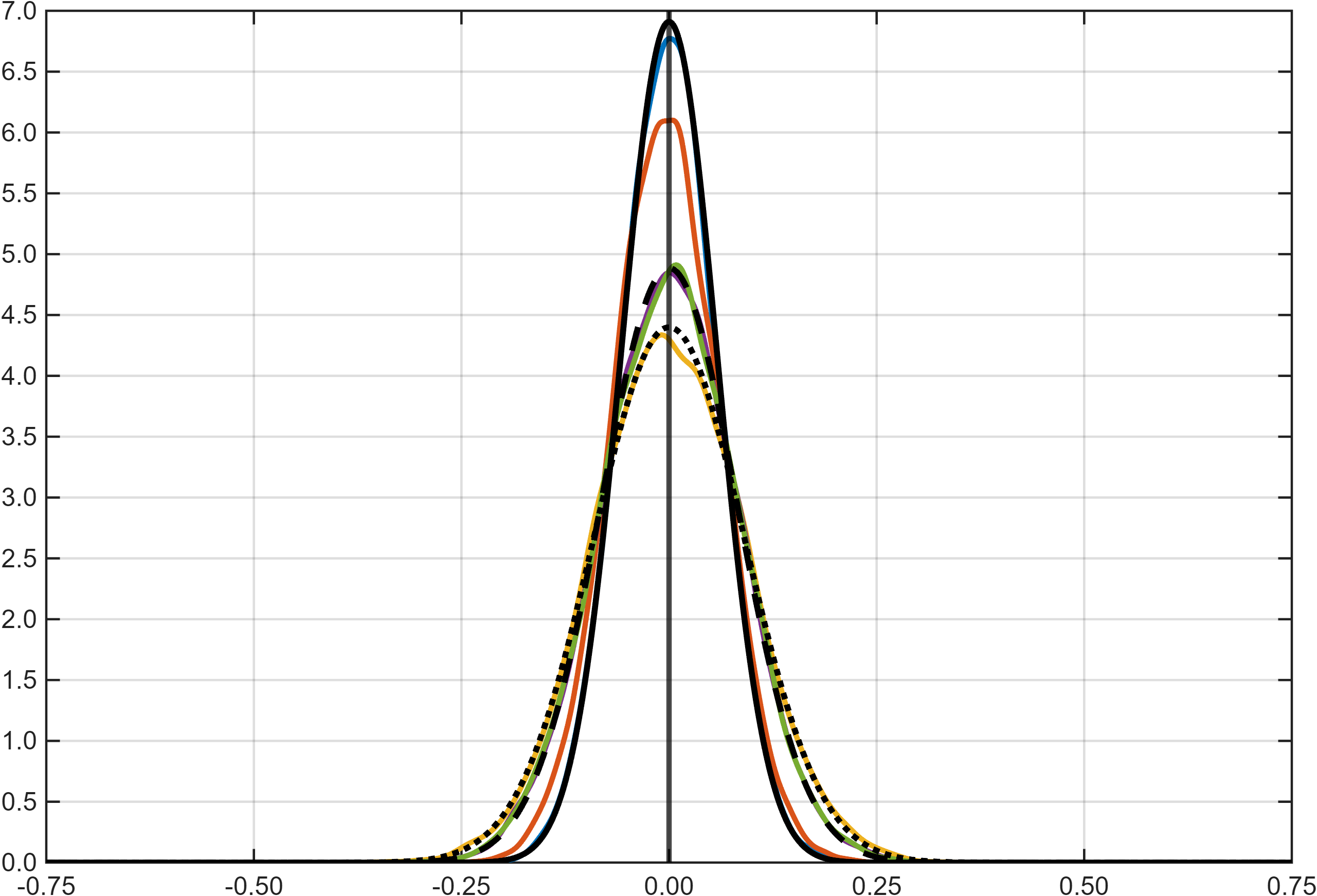}\\
\includegraphics[scale=0.24]{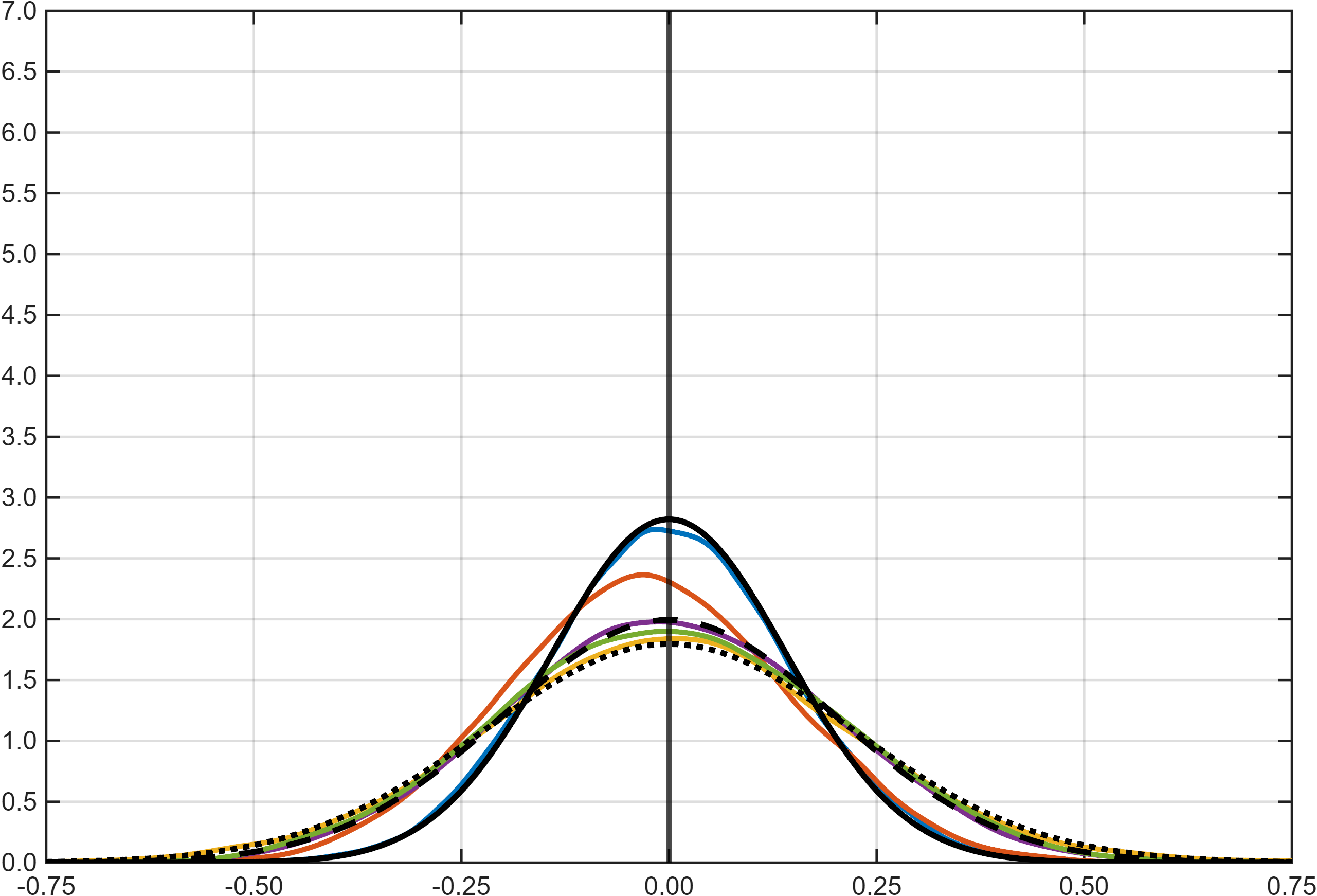}
\includegraphics[scale=0.24]{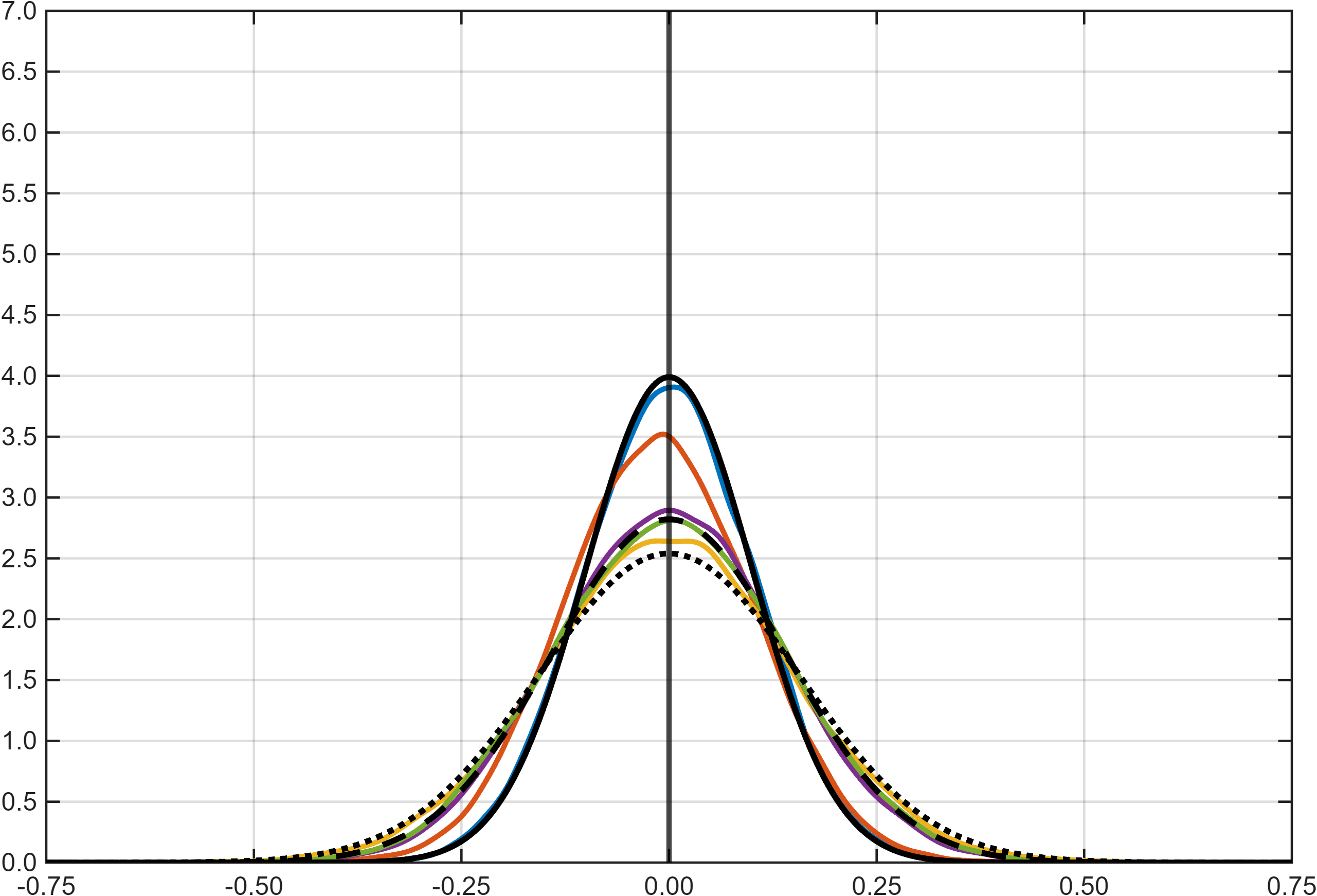}
\includegraphics[scale=0.24]{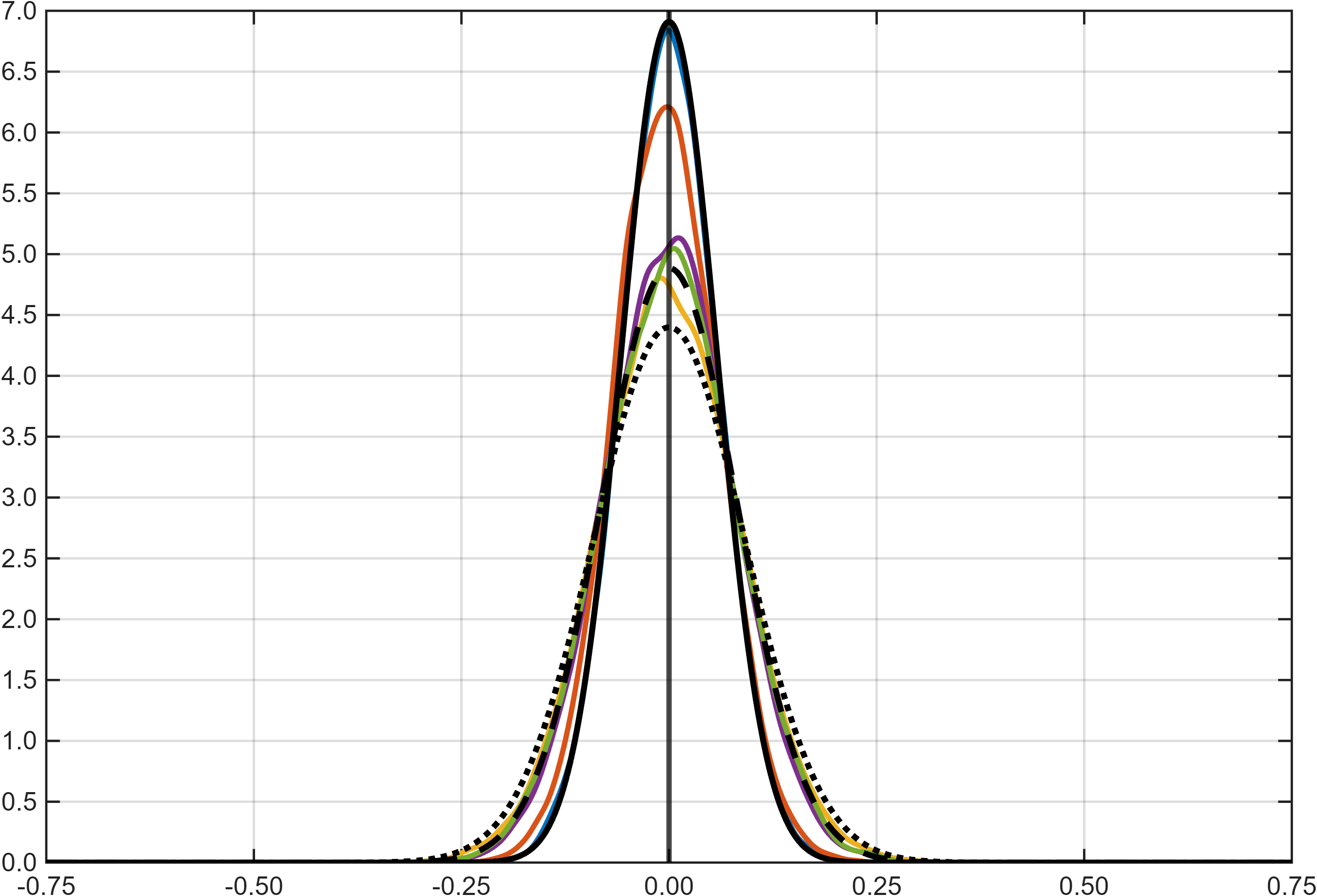}\\
\includegraphics[scale=0.24]{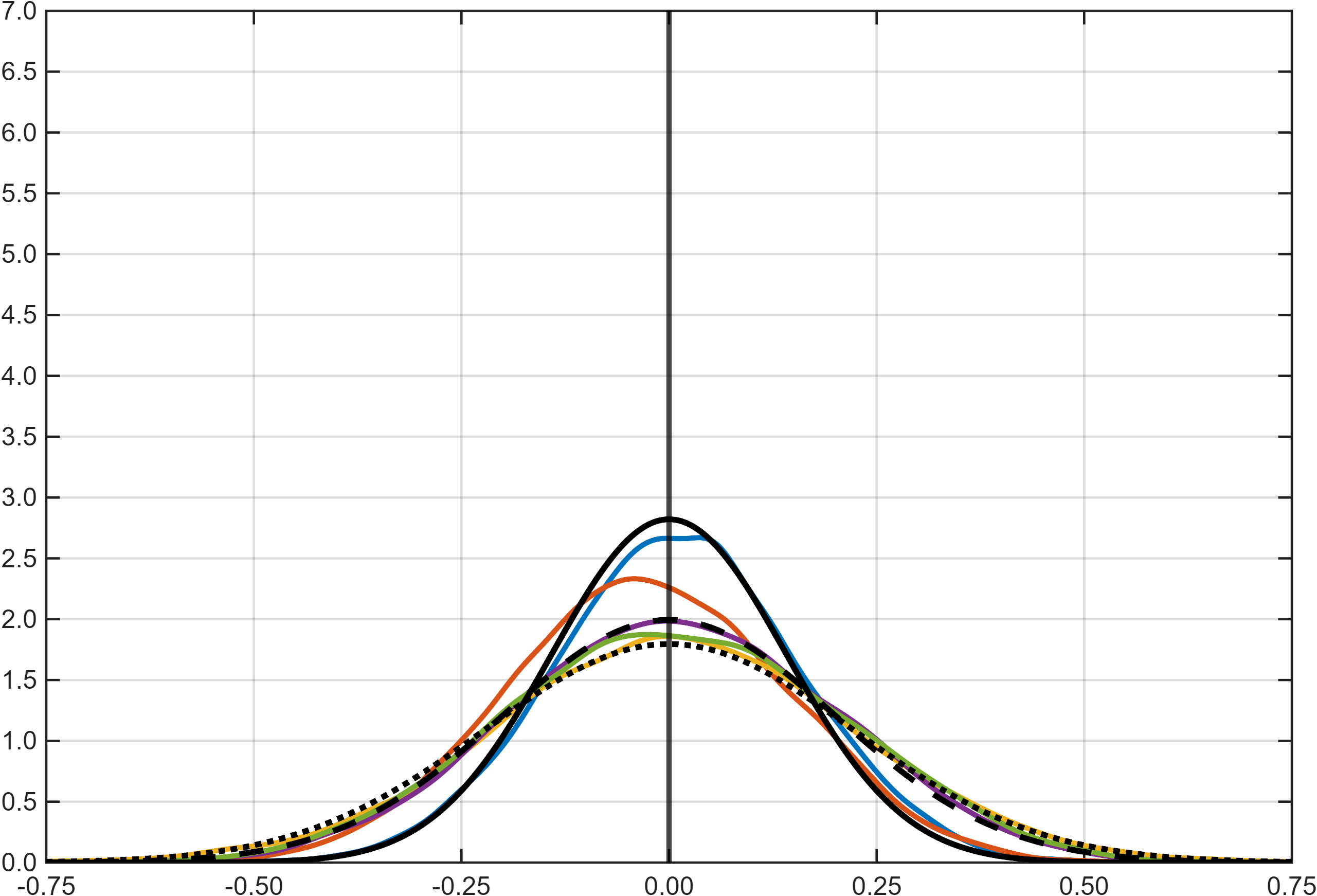}
\includegraphics[scale=0.24]{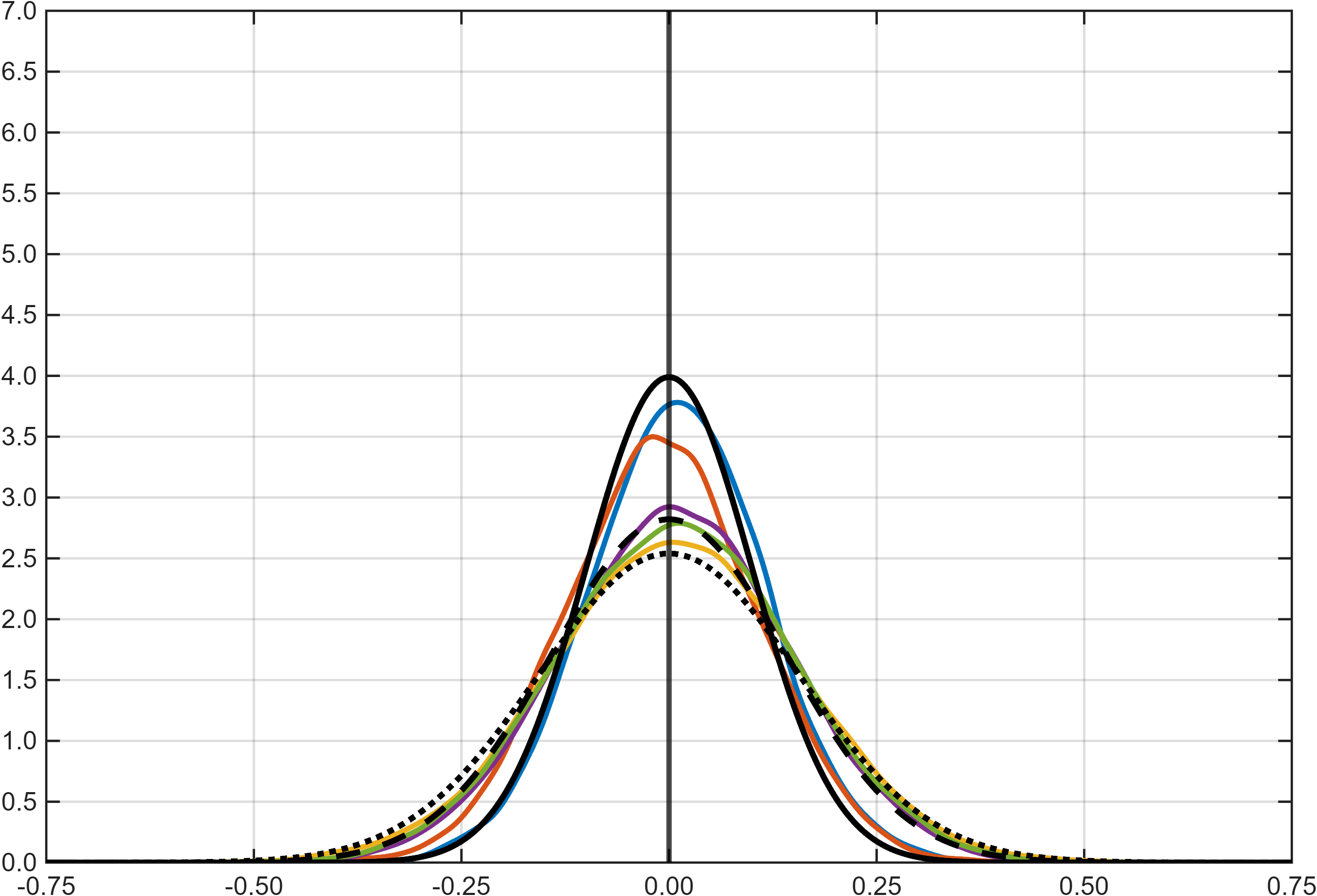}
\includegraphics[scale=0.24]{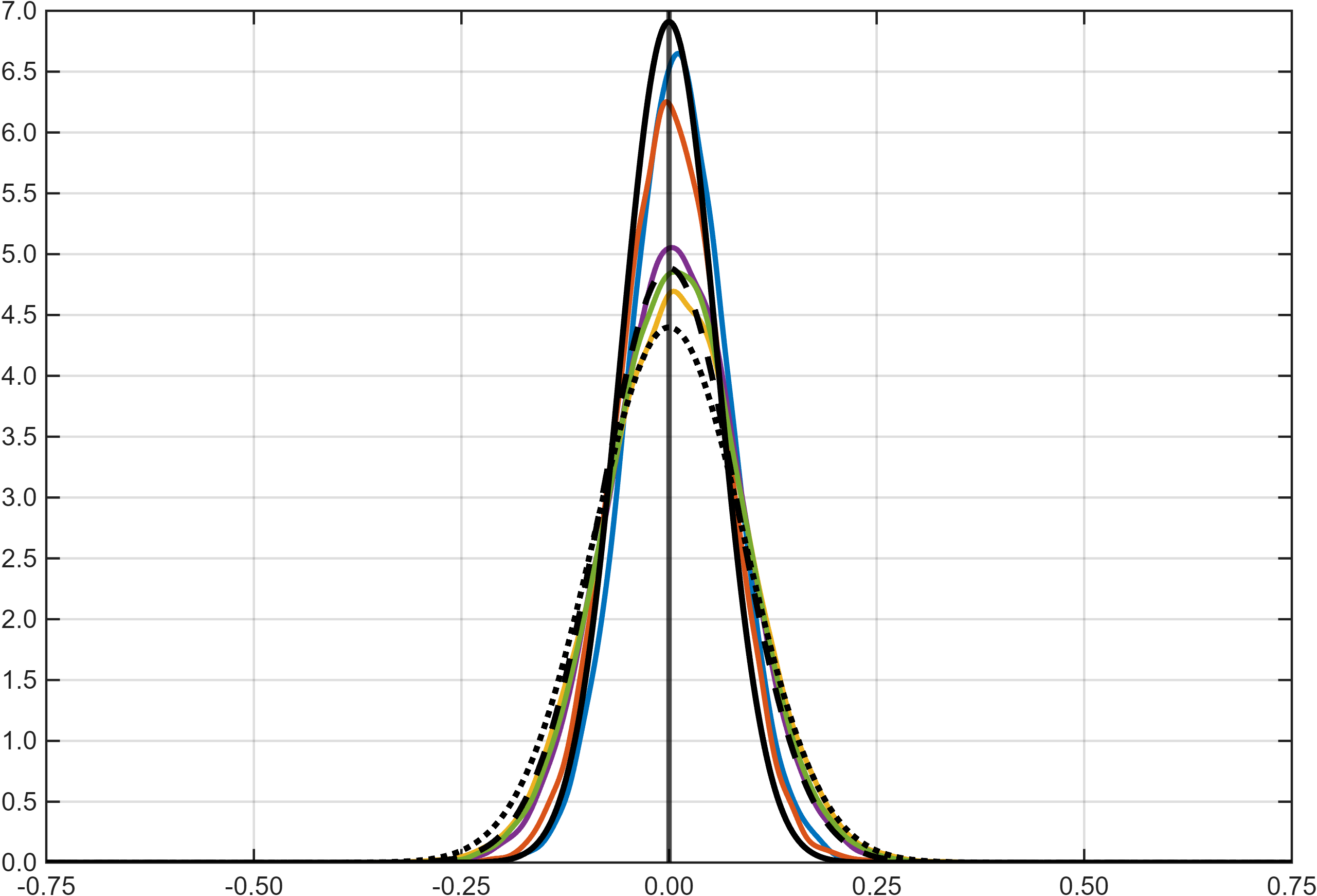}
\caption{Monte Carlo densities of estimators of $\rho(5)=0$ obtained when the simulated series of size $T=50$ (first column) and $T=100$ (second column) are generated with the AR(1) model with $\phi=0$ and contaminated with isolated additive outliers with $\delta=0$ (top row), $\delta=3$ and $\lambda=0$ (middle row), and $\delta=3$ and $\lambda=0.5$ (bottom row): $r(5)$ (blue), $\hat{\rho}_{MG}(5)$ (orange), $\hat{\rho}_H(5)$ (yellow), $\hat{\rho}_Q(5)$ (purple), and $\hat{\rho}_P(5)$ (green). The black densities represent the finite sample approximation of the asymptotic density of the sample autocorrelations, $N(0,\frac{1}{T})$, (continuous line), the H estimator, $N(0, \frac{\pi^2}{4T})$, and of the QML estimator, $N(0, \frac{2}{T})$, (dashed line).}
\label{fig:densities_2}
\end{figure}


\subsection{Finite sample performance of uncorrelatedness tests}

In this subsection, we carry out Monte Carlo experiments to analyse the finite sample performance of the B-P and L-B tests when implemented to test for $H_0: \rho(1)=\rho(2)=...=\rho(H)=0$ using either the sample autocorrelations or one of the robust estimators of the autocorrelation function considered in this paper. It is important to remark that the asymptotic distribution of the joint vector of estimated autocorrelations is only available when they are estimated using the sample autocorrelations. In the case, of the robust estimators, there is not a formal derivation of the joint asymptotic distribution of the estimator of the autocorrelations for different lags.

\begin{table}[th]
\caption{Monte Carlo results for the size of the B-P and L-B tests for $H=10$ lags when implemented to i) sample autocorrelations (SAC); ii) $\hat{\rho}_{H}(h)$ (H); iii) $\hat{\rho}_{ML}(h)$ (ML); and iv) $\hat{\rho}_{MR}(h)$ (MR),  when the series are Gaussian white noise contaminated by additive outliers of size $\delta=0, 3, 5$, with $\lambda=0, 0.5$. Nominal size: $5\%$.}
\label{tab:sim_size}
\begin{tabular}{lccccccccccccccccccccccccc}
\hline
& \multicolumn{5}{c}{B-P} & & \multicolumn{5}{c}{L-B}\\
\hline
& $\delta=0$ & \multicolumn{2}{c}{$\delta=3$} & \multicolumn{2}{c}{$\delta=5$}&& $\delta=0$ & \multicolumn{2}{c}{$\delta=3$} & \multicolumn{2}{c}{$\delta=5$} \\
\hline
 & & $\lambda=0$ & $\lambda=0.5$ & $\lambda=0$ & $\lambda=0.5$ & &  & $\lambda=0$ & $\lambda=0.5$ & $\lambda=0$ & $\lambda=0.5$\\
\hline
& \multicolumn{11}{c}{$T=50$}\\
 \hline
SAC &	0.060 & 0.059 & 0.120 & 0.044 & 0.211 && 0.117 & 0.115 & 0.190 & 0.088 & 0.287\\
H & 0.122 & 0.089 & 0.098 & 0.066 & 0.091 && 0.203 & 0.149 & 0.165 & 0.121 & 0.149\\
ML & 0.093 & 0.065 & 0.073 & 0.045 & 0.080 && 0.164 & 0.119 & 0.135 & 0.090 &	0.134\\
MR & 0.078 & 0.069 & 0.078 & 0.055 & 0.105 &&	0.149 & 0.131 & 0.146 & 0.113 & 0.176\\
\hline
& \multicolumn{11}{c}{$T=100$}\\
 \hline
SAC & 0.069 & 0.062 & 0.214 & 0.058 & 0.494 && 0.094 & 0.086 & 0.255 & 0.079 & 0.536\\
H & 0.093 &	0.057 & 0.080 & 0.031 & 0.064 && 0.125 & 0.077 &	0.106 & 0.045 & 0.085\\
ML & 0.069 & 0.045 & 0.073 & 0.030 & 0.085 && 0.099 & 0.060 &	0.099 & 0.043 & 0.108\\
MR & 0.063 & 0.049 & 0.081 & 0.041 & 0.112 && 0.088 &	0.074  & 0.116 & 0.066 &	0.151\\
\hline
& \multicolumn{11}{c}{$T=200$}\\
 \hline
SAC &  0.065 & 0.057 & 0.431 &	0.058 & 0.843 && 0.075 &	 0.068 & 0.453 & 0.069 &	0.853\\
H & 0.075 &	0.040 & 0.084 & 0.017 & 0.097 && 0.090 & 0.048 &	0.096 & 0.020 & 0.108\\
ML & 0.059 & 0.037 & 0.088 & 0.022 & 0.130 && 0.071 & 0.044 &	0.101 & 0.029 & 0.146\\
MR & 0.057 & 0.048 & 0.103 & 0.040 & 0.178 && 0.069 & 0.058 & 0.119 & 0.049 &	0.196\\
\hline
\end{tabular}
\end{table}

Table \ref{tab:sim_size} reports the Monte Carlo empirical size when the series are simulated by Gaussian white noise contaminated by additive outliers with the same design explained above. We can observe that, regardless of whether the series are contaminated or not, and of the sample size, the B-P and L-B tests are mostly oversized, with the size of B-P being always closer to the nominal $5\%$ than the size of the L-B test; see, for example, Kwan and Sim (1996), who show that the B-P test suffers of location bias, while the L-B test is still questionable. Therefore,  we focus the analysis on the B-P test. When there is no contamination, the B-P test based on $\hat{\rho}_{P}(h)$ has power closer to the nominal among tests based on other alternative estimators of the autocorrelations considered. Only when $T=50$, the over-rejection is larger than that based on the sample autocorrelations. Similarly, when there is contamination with $\lambda=0$, the conclusions are similar with the B-P test having sizes closer to that observed when there are not outliers. However, the B-P test based on sample autocorrelations is useless when the contamination is such that $\lambda=0.5$, with empirical sizes that can be as large as $84\%$ when $T=200$ and $\delta=5$. The distortion of the size is moderate when the B-P test is based on $\hat{\rho}_{P}(h)$. 

\begin{table}[th]
\caption{Monte Carlo results for the power of the B-P and L-B tests for $H=10$ lags when implemented to i) sample autocorrelations (SAC); ii) $\hat{\rho}_{H}(h)$ (H); iii) $\hat{\rho}_{ML}(h)$ (ML); and iv) $\hat{\rho}_{MR}(h)$ (MR),  when the series, generated by an AR(1) model with parameter $\phi=0.3, 0.7$ and the innovations are Gaussian white noise, are contaminated by additive outliers of size $\delta=0, 3, 5$, with $\lambda=0, 0.5$. Nominal size: $5\%$.}
\label{tab:sim_power}
\begin{tabular}{lccccccccccccccccccccccccc}
\hline
& \multicolumn{5}{c}{B-P} & & \multicolumn{5}{c}{L-B}\\
\hline
& $\delta=0$ & \multicolumn{2}{c}{$\delta=3$} & \multicolumn{2}{c}{$\delta=5$}&& $\delta=0$ & \multicolumn{2}{c}{$\delta=3$} & \multicolumn{2}{c}{$\delta=5$} \\
\hline
 & & $\lambda=0$ & $\lambda=0.5$ & $\lambda=0$ & $\lambda=0.5$ & &  & $\lambda=0$ & $\lambda=0.5$ & $\lambda=0$ & $\lambda=0.5$ \\
\hline
& \multicolumn{11}{c}{$T=50$}\\
 \hline
& \multicolumn{11}{c}{$\phi=0.3$}\\
\hline			
SAC & 0.251 & 0.143 & 0.362 & 0.087 & 0.419 && 0.346 &	0.217 & 0.452 & 0.142 & 0.515\\
H & 0.199 &	0.145 & 0.203 & 0.114 & 0.177 && 0.289 & 0.214 & 0.288 & 0.173 & 0.252\\
ML & 0.185 & 0.127 & 0.205 & 0.100 & 0.206 && 0.274 & 0.194 & 0.291 & 0.163 & 0.291\\
MR & 0.178 & 0.135 & 0.221 & 0.128 & 0.256 && 0.279 & 0.223 & 0.321 & 0.208 & 0.352\\
\hline
& \multicolumn{11}{c}{$\phi=0.7$}\\
\hline
SAC &	 0.950 & 0.654 & 0.875 &	0.372 & 0.807 && 0.967 & 0.715 & 0.908 & 0.436 &	0.851\\
H & 0.696 &	0.564 & 0.606 & 0.544 & 0.561 && 0.761 & 0.632 & 0.682 & 0.611 & 0.634\\
ML & 0.766 & 0.629 & 0.718 & 0.616 & 0.721 && 0.822 & 0.700 & 0.783 & 0.684 & 0.785\\
MR & 0.787 & 0.671 & 0.749 & 0.690 & 0.775 && 0.843 &	0.752 & 0.820 & 0.768 & 0.843\\
\hline
& \multicolumn{11}{c}{$T=100$}\\
 \hline
& \multicolumn{11}{c}{$\phi=0.3$}\\
\hline
SAC & 0.484 & 0.253 & 0.702 & 0.128 & 0.808 && 0.528 &	0.299 & 0.738 & 0.160 & 0.833\\
H & 0.093 &	0.057 & 0.080 & 0.031 & 0.064 && 0.306 & 0.196 & 0.309 & 0.140 & 0.311\\
ML & 0.069 & 0.045 & 0.073  & 0.030 & 0.085 && 0.328 &	0.213 & 0.387 & 0.178 & 0.439\\
MR & 0.063 & 0.052 & 0.087 & 0.044 & 0.119 && 0.338 & 0.253 & 0.432 & 0.239 & 0.509\\
\hline
& \multicolumn{11}{c}{$\phi=0.7$}\\
\hline
SAC &	1.000 & 0.923 & 0.997 & 0.627 & 0.988 && 1.000 & 0.932 & 0.998 & 0.659 & 0.990\\
H & 0.948 &	0.872 & 0.905 & 0.846 &	0.886 && 0.956 & 0.890 & 0.920 & 0.862 & 0.904\\
ML & 0.981 & 0.936 & 0.972 & 0.929 & 0.978 && 0.985 & 0.946 & 0.977 & 0.942 & 0.983\\
MR & 0.982 & 0.952 & 0.978 & 0.958 & 0.986 && 0.987 & 0.962 & 0.984 & 0.967 & 0.990\\
\hline
& \multicolumn{11}{c}{$T=200$}\\
 \hline
& \multicolumn{11}{c}{$\phi=0.3$}\\
\hline
SAC & 0.831 & 0.471 & 0.970 & 0.217 & 0.994 && 0.844 & 0.494 & 0.973 & 0.237 &	0.994\\
H & 0.423 &	0.261 & 0.530 & 0.182 &	0.579 && 0.445 & 0.279 & 0.554 & 0.200 & 0.601\\
ML & 0.509 & 0.337 & 0.671 & 0.299 & 0.758 && 0.537 & 0.360 & 0.691 & 0.320 & 0.774\\
MR & 0.516	& 0.385 & 0.704 & 0.393 & 0.801 && 0.539 & 0.412 & 0.724 & 0.419 &	0.817\\
\hline
& \multicolumn{11}{c}{$\phi=0.7$}\\
\hline
SAC & 1.000 & 0.999 & 1.000 &	0.906 & 1.000 && 1.000 & 0.999 & 1.000 & 0.912 &	1.000\\
H & 1.000 &	0.995 & 0.999 & 0.994 & 0.999 && 1.000 & 0.995 &	0.999 & 0.994 & 0.999\\
ML & 1.000 & 0.999 & 1.000 & 1.000 & 1.000 && 1.000 & 0.999 &	1.000 & 1.000 & 1.000\\
MR & 1.000 & 1.000 & 1.000 & 1.000 & 1.000 && 1.000 & 1.000 & 1.000 & 1.000 & 1.000\\
 \hline
\end{tabular}
\end{table}

The size distortions are not due to the way the correlations are computed but to the tests themselves. Therefore, alternative modifications of the B-P and L-B tests can be considered; see, for example, Kwan and Sim (1996). We left this analysis for further research.

\section{Empirical illustrations}
\label{sec:empiric}

In this section, we illustrate the usefulness in empirical applications of the robust estimators of the autocorrelation function proposed in this paper. We consider estimating the autocorrelations of a series of daily financial returns of the IBEX35, a series of quarterly US economic growth, and monthly US inflation. 

\subsection{Financial returns}

Consider closing prices of the IBEX35 index of Madrid Stock Exchange observed daily from 1/1/2019 to 12/8/2025. Figure \ref{fig:returns} plots the returns, obtained as usual as $y_t=100 \times \left(\log{P_t} - \log{P_{t-1}} \right)$, where $P_t$ denotes the closing price of day $t$, together with their corresponding estimated sample autocorrelations and their 95\% non-rejection bands. Note that IBEX35 returns have a significant second order autocorrelation, which implies that IBEX35 daily price movements are predictable and consequently, that the Efficient Market Hypothesis (EMH) is violated; see, for example, the description of tests for EMH by Reschenhofer and Hauser (1997).\endnote{The EMH, developed by Bachelier (1900) and Fama (1965), states that in an efficient market, securities will be fairly priced to reflect all available information.}

\begin{figure}[ht!]
\begin{center}
\includegraphics[clip=true, trim=0 0 0 0cm, scale=0.45]{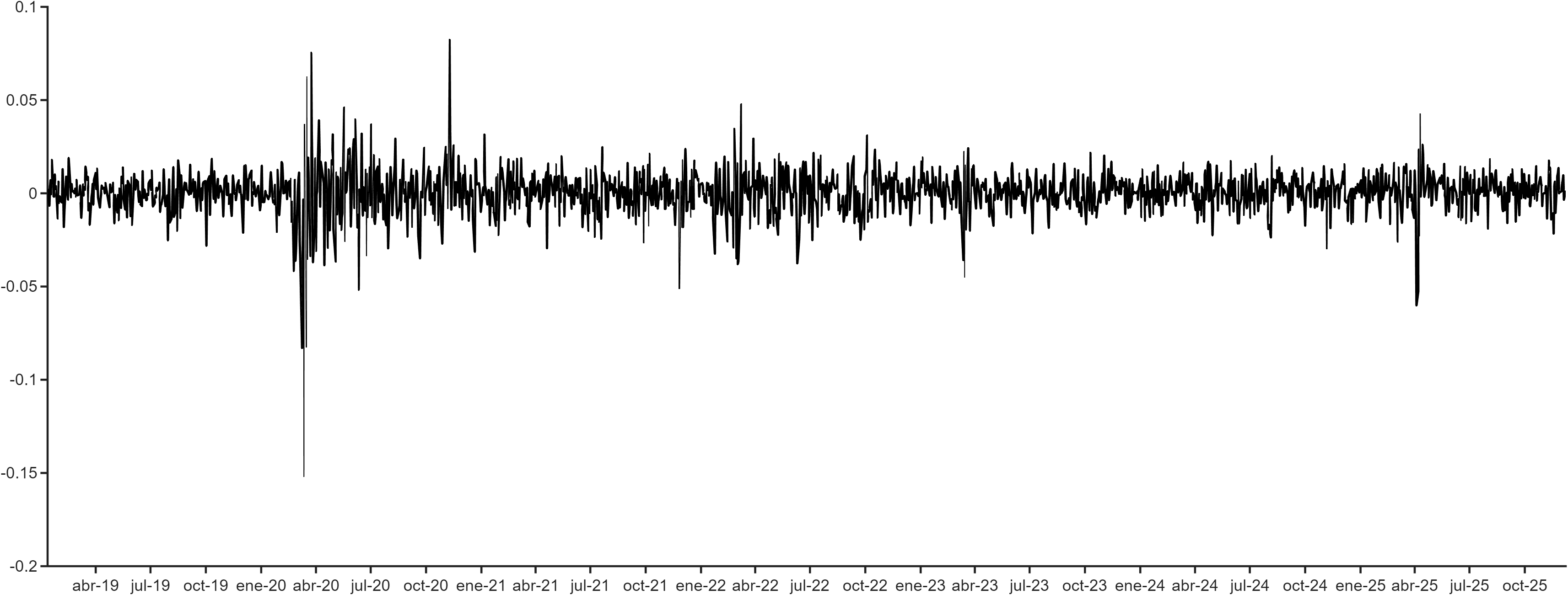}\\
\includegraphics[clip=true, trim=0 0 0 0cm, scale=0.45]{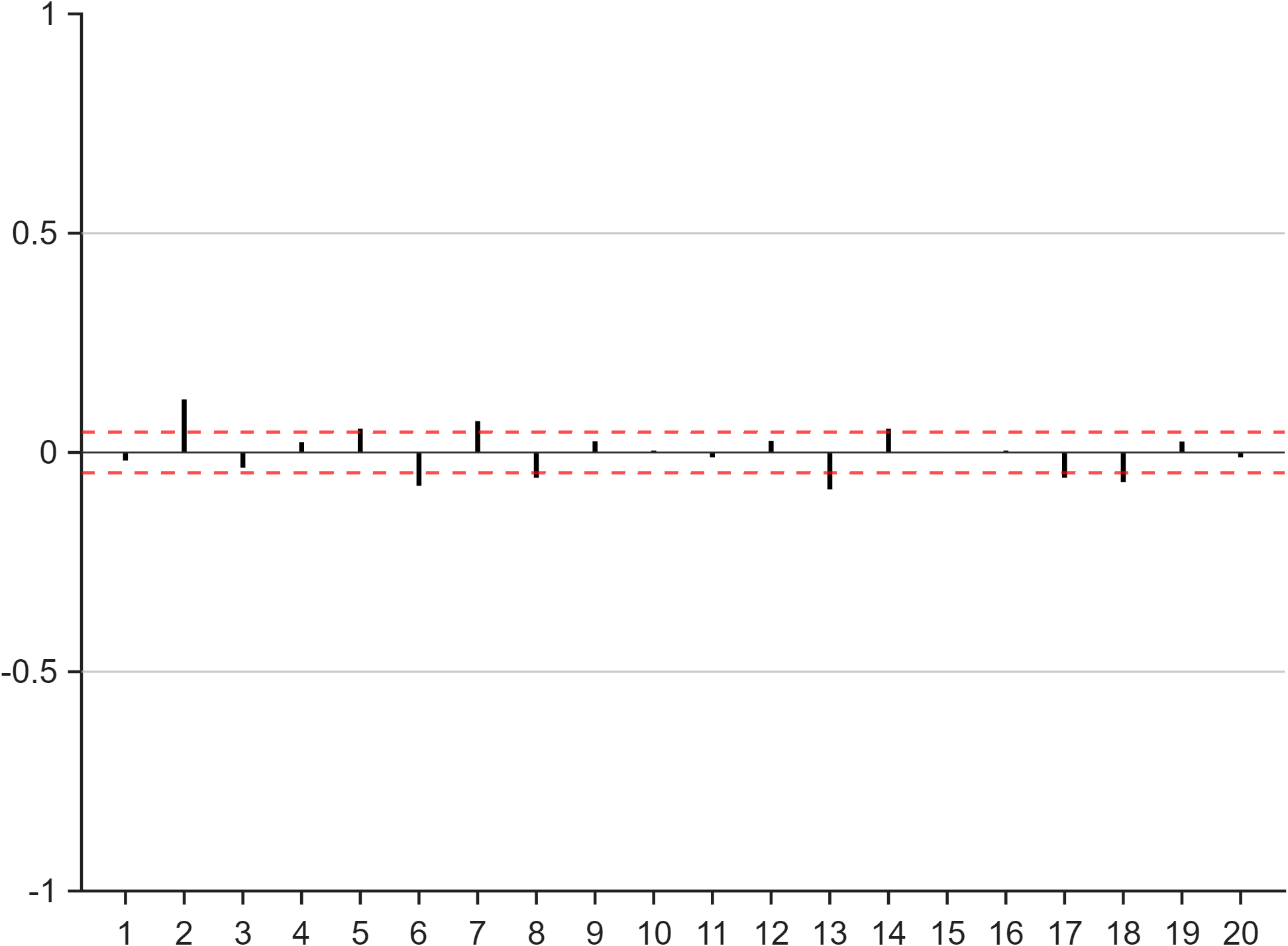}
\includegraphics[clip=true, trim=0 0 0 0cm, scale=0.45]{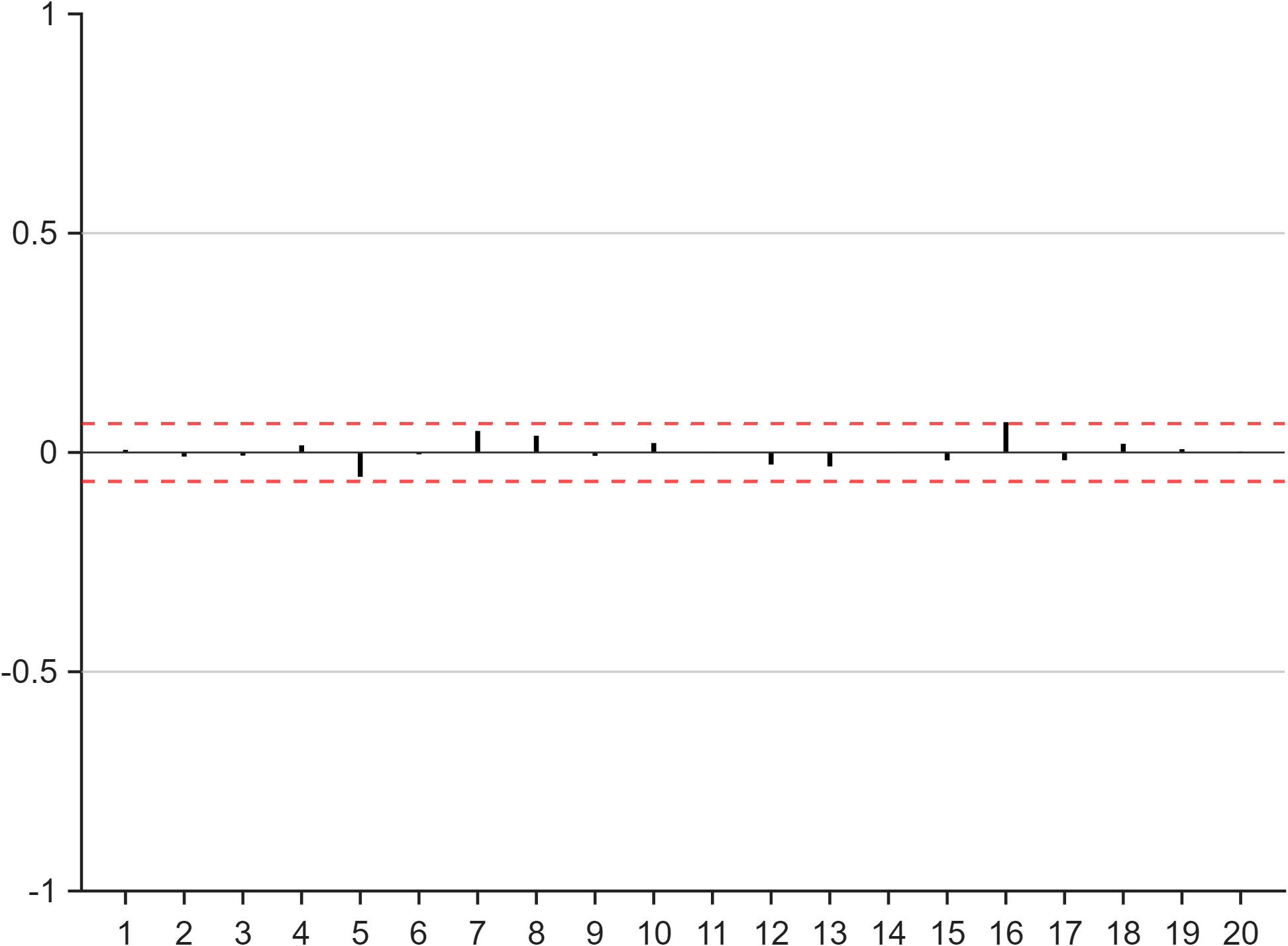}
\end{center}
\caption{Daily IBEX35 returns (upper panel), and estimated sample autocorrelations (lower left panel) and robust plug-in autocorrelations (lower right panel) together with 95\% non-rejection bands (red lines).}
\label{fig:returns}
\end{figure}

Figure \ref{fig:returns} also shows that daily IBEX35 returns observed over the last seven years have several extreme observations, which could be thought as potential outliers.\endnote{As usual when analysing daily financial returns, IBEX35 returns seem to be conditionally heteroscedastic. We conjecture that the robust estimator of the autocorrelations proposed in this paper is also robust to the presence of conditional heteroscedasticity. However, this analysis is left for further research.} Some of them (9, 12 and 16 of March, 2020) can be associated to the effect of the COVID19 pandemic on the Spanish financial market, with a clear increase of the market volatility associated to the implementation of extraordinary policies aimed to stop the virus propagation. We can also observe an unusually large positive return on the 9th of November 2020, which is when preliminary results of the Phase 3 clinical trials of the Pfizer-BioNTech were published, reporting an effectiveness over 90\%, which trigger a positive increase in financial markets. Finally, on the 7th and 10th of April, 2025, we can also observe extreme movements in IBEX35 prices, which can be linked to the announcement of new customs tariffs by the Trump Administration and the consequent announcement of their softening, respectively. To take into account the potential presence of outliers on the estimated autocorrelations of IBEX35 returns, we also estimate them by using the robust estimators described in this paper. Figure \ref{fig:returns} plots the autocorrelations estimated using the plug-in estimator proposed in this paper.\endnote{Other alternative robust estimators provide similar estimated autocorrelations, which are available upon request.} The autocorrelations plotted in Figure \ref{fig:returns} are not significant for any lag, with the conclusion that there are not linear dependencies in the daily movements of financial IBEX35 returns, and consequently supporting the EMH. Similar results are obtained when testing for joint significance of the first 10 autocorrelations. In this case, the B-P statistic takes value 61.31 with a \textit{p}-value of 0.00, which implies a rejection of the null hypothesis and, consequently linear dependence of returns. However, the robust version of the B-P statistic based on the proposed plug-in estimator takes a value of 7.03 with a \textit{p}-value of 0.72, which is not significant and, consequently, there is not evidence against the null hypothesis of zero autocorrelations of IBEX35 returns.

\subsection{US economic growth}

\begin{figure}[ht!]
\begin{center}
\includegraphics[clip=true, trim=0 0 0 0cm, scale=0.45]{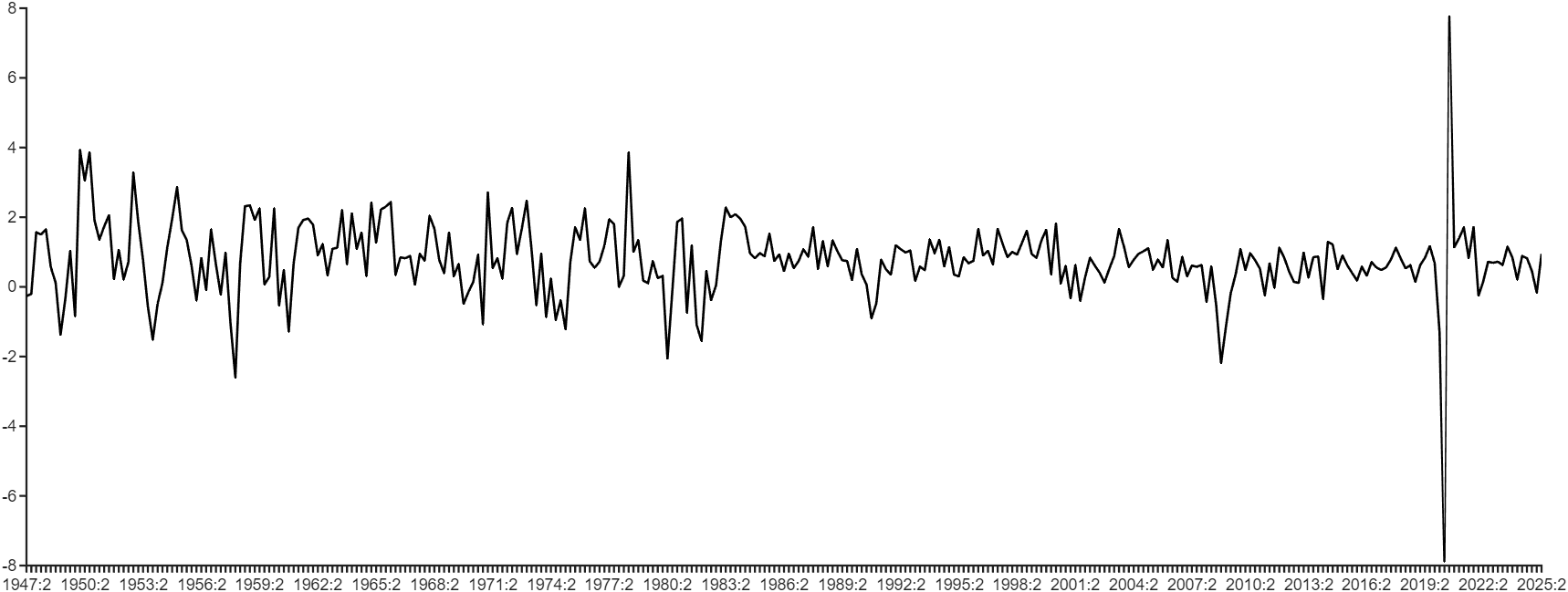}\\
\includegraphics[clip=true, trim=0 0 0 0cm, scale=0.45]{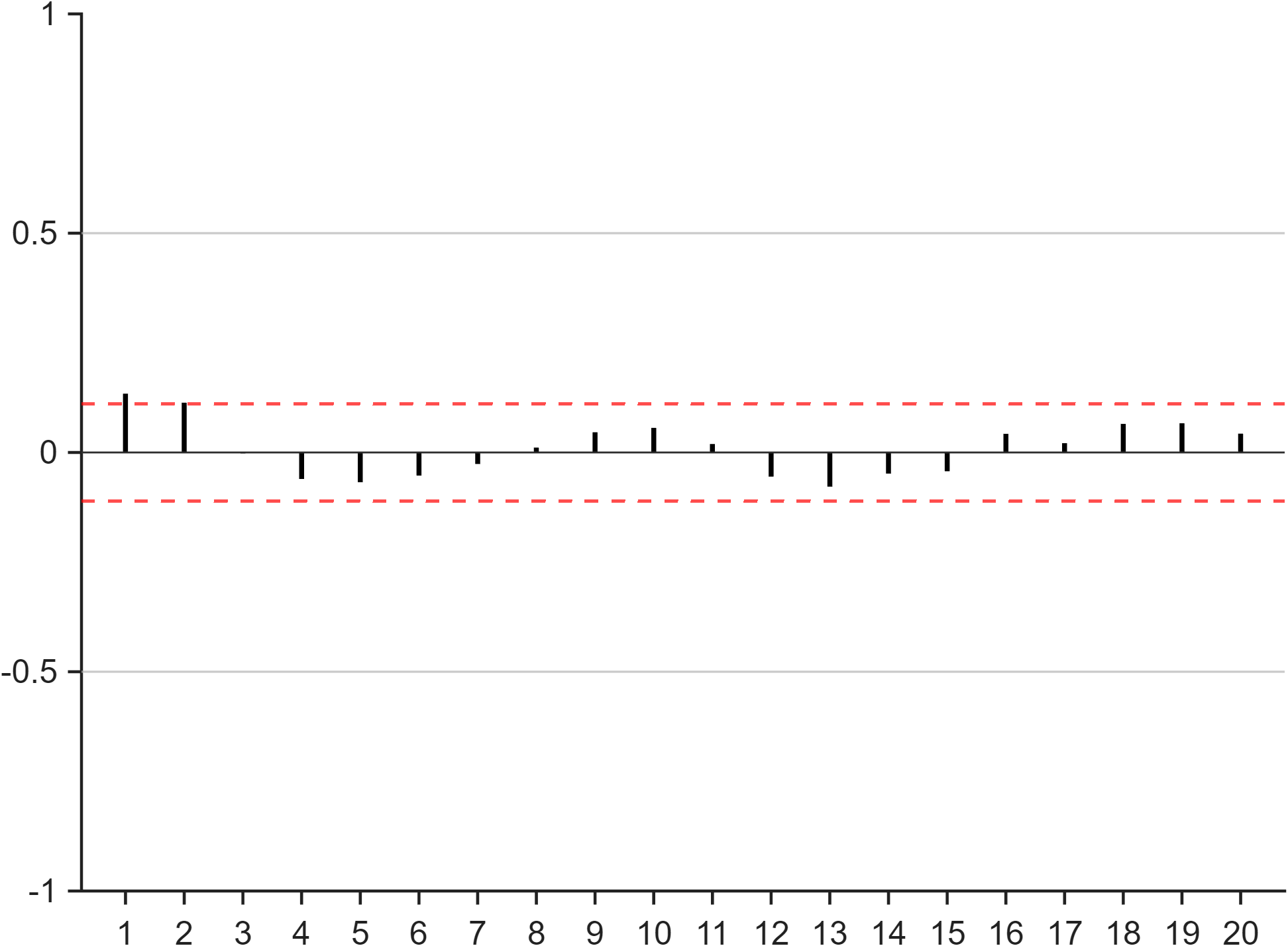}
\includegraphics[clip=true, trim=0 0 0 0cm, scale=0.45]{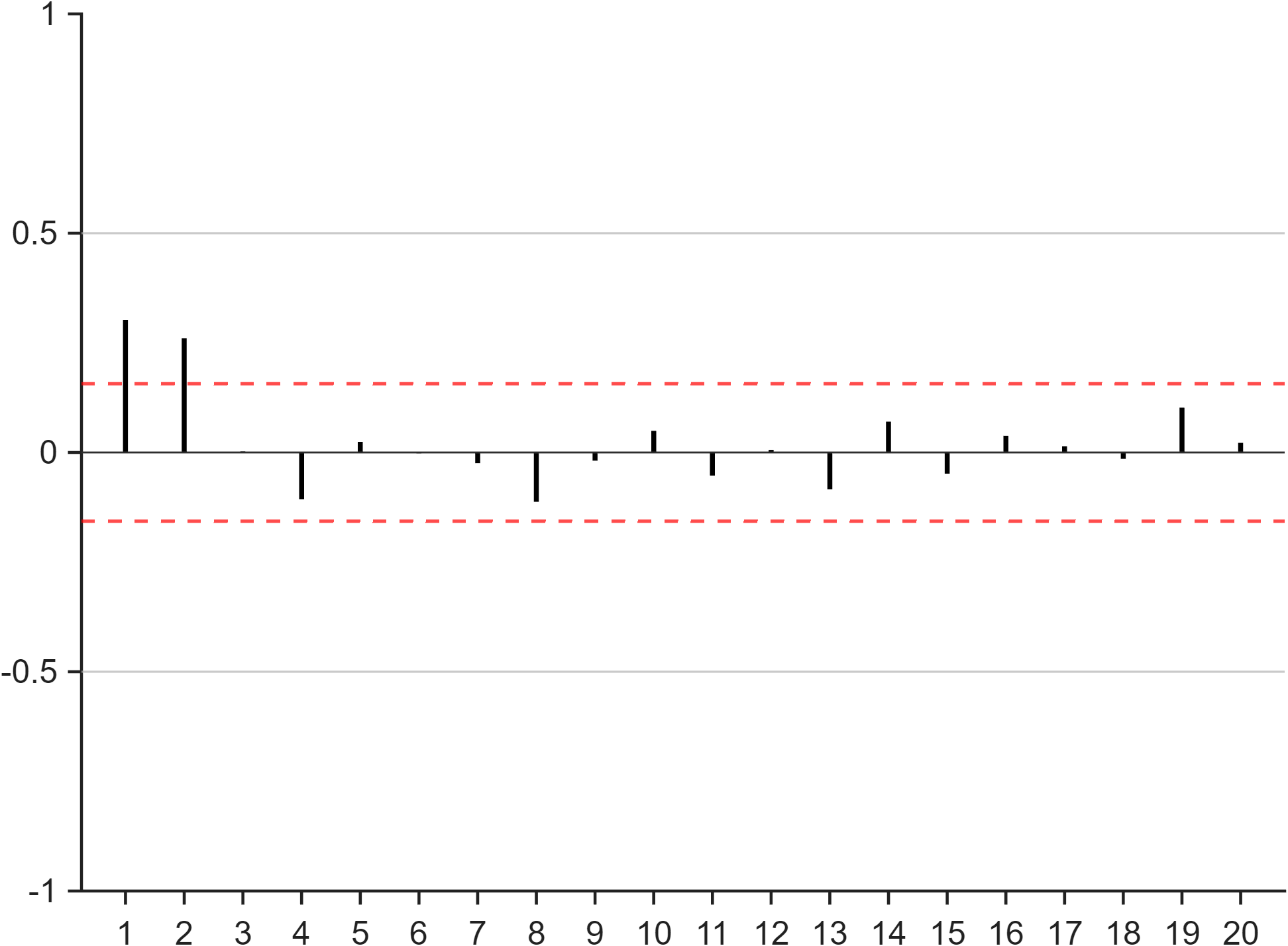}
\end{center}
\caption{Quarterly US growth (upper panel), and estimated sample autocorrelations (lower left panel) and robust plug-in autocorrelations (lower right panel) together with 95\% non-rejection bands (red lines).}
\label{fig:growth}
\end{figure}

A proper economic growth forecast is crucial because it guides major decisions for governments (fiscal/monetary policy, budgeting), businesses (investment, hiring, production), and individuals (spending, saving), fostering stability and confidence. However, quarterly GDP data are still published with a delay that ranges from 30 days after the end of the reference quarter in the USA, to 70 days in the main Euro area countries. Consequently, there is a clear interest in obtaining accurate short-run forecasts of GDP growth. One problem when forecasting GDP growth in the short-run though is that in most countries its serial correlation is very weak; this means that traditional univariate linear time series models tend to not be very useful tools for this purpose; see, for example, Golinelli and Parigi (2007). Consider GDP annual growth in the US observed quarterly from 1947Q2 to 2025Q2, obtained from the Federal Reserve Economic Data (FRED) of the Federal Reserve Bank of St. Louis and seasonally adjusted. Figure \ref{fig:growth} plots the US GDP growth calculated as $y_t=100 \times \frac{GDP_t-GDP_{t-1}}{GDP_{t-1}}$ together with the estimated sample autocorrelations and their 95\% non-rejection bands. According to the estimated sample autocorrelations, only the first order one is just out of the non-rejection bands with a relatively small magnitude around 0.1. This is in concordance with the results in the literature about weak linear temporal dependence in economic growth. Alternatively, Figure \ref{fig:growth}, which plots the autocorrelations estimated using the plug-in estimator, shows a very different picture.\endnote{Note that, as above, the estimated autocorrelations estimated using alternative robust estimators are very similar.} Not only the first two autocorrelations are significant, but also their magnitudes are clearly larger than before with values larger than 0.2, implying a stronger temporal dependence of growth. Similarly, when computing the B-P statistic for testing the joint significance of the first 10 autocorrelations based on the sample autocorrelations, the value is 15.06 with a \textit{p}-values of 0.13. Therefore we do not reject the null, implying no linear temporal dependence. However, when the B-P statistic is calculated based on the estimated robust plug-in autocorrelations, the value is 29.29 with a \textit{p}-value of 0.00, and consequently, the null hypothesis is clearly rejected.

\subsection{US inflation}

There is a long-lasting debate on the persistence characteristics of inflation, with several authors claiming its stationarity (Pivetta and Reis, 2007, and Fuhrer, 2010), others modelling it as a non-stationary process (Barsky. 1987, and Ball, Cecchetti and Gordon, 1990), while a last group of works consider inflation as a long-memory process (Hassler and Wolters, 1995, Gadea and Mayoral, 2006). In some of these works, the shape of the sample autocorrelation function is used to argue about the persistence characteristics of inflation; see, for example, Proietti and Luati (2026). We examine m-o-m inflation in the US observed  monthly  from January 1960 to June 2026, which has been downloaded from the FRED database at the Federal Reserve Bank of St Louis (https://fred.stlouisfed.org/series/CPIAUCSL). The value corresponding to October 2025 was missing and has been linearly interpolated. The series may be contaminated by several outliers as those corresponding to August 1973  during the period of control prices in the US, September 2005 with the huge increase in energy prices after the Katrina and Rita hurricanes, October to December 2008 during the global financial crisis, and April 2020 with the sudden fall in prices after the Covid-19 lock-down.   

The seasonally adjusted series of inflation is plotted in Figure \ref{fig:inflation} together with the autocorrelations estimated using their sample counterparts and using the robust plug-in estimator. We can observe that the estimated autocorrelations obtained using the plug-in estimator are larger in magnitude and imply a larger persistence than when they are estimated using the sample autocorrelations. In this case, the B-P statics for the joint significance of the first 10 autocorrelations are 1,149.21 (\textit{p}-value 0.00) and 1,486.34 (\textit{p}-value 0.00) and, obviously, in both cases the null is clearly rejected. Whether the stronger and more persistent autocorrelations may have implications for the decision about whether inflation is long-memory or non-stationary is left for further research.\endnote{Note that several authors have pointed out the relationship between long-memory and structural change, showing how changes in the conditional mean or variance can cause the autocorrelation function to behave the same way they would in a stationary long memory process; see, for example, Granger and Hyung (1999), Diebold and Inoue (2001) and Mikosch and Starica (2004). However, as far as we know, the implications of contamination by outliers on the determination of the persistence properties of a time series has not been analysed.} 

\begin{figure}[ht!]
\begin{center}
\includegraphics[clip=true, trim=0 0 0 0cm, scale=0.45]{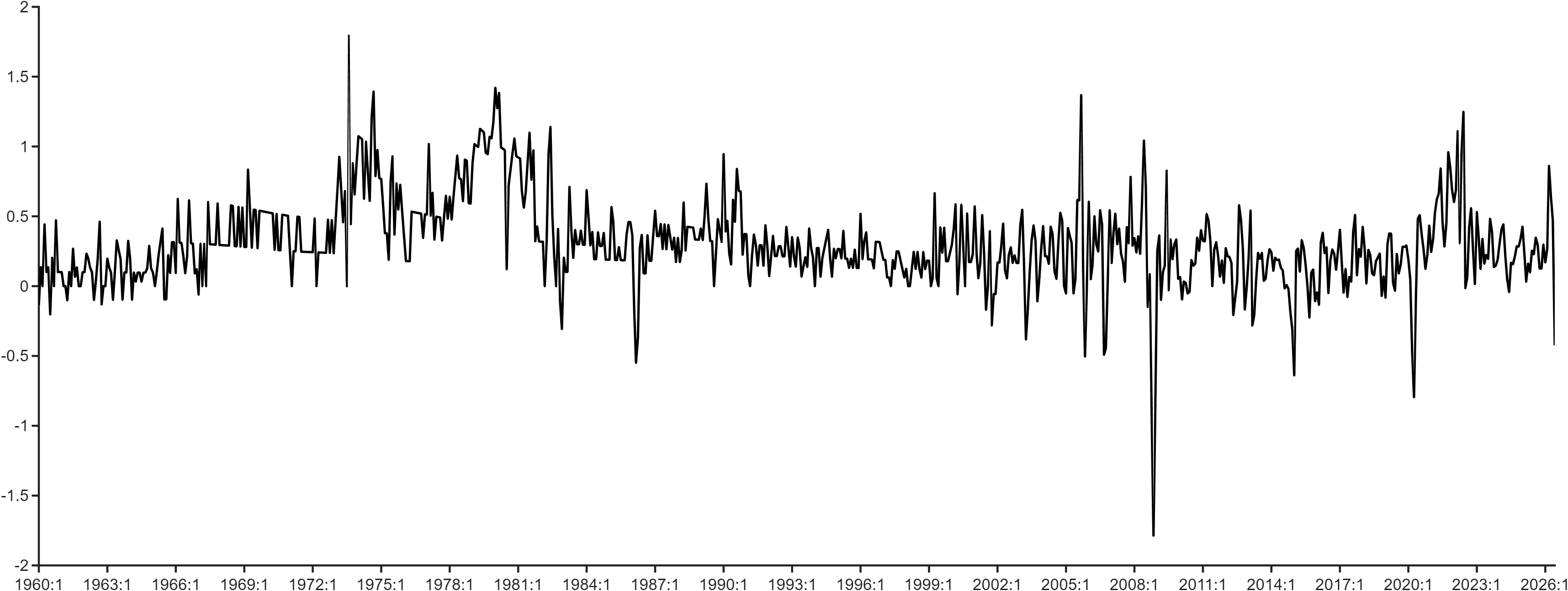}\\
\includegraphics[clip=true, trim=0 0 0 0cm, scale=0.45]{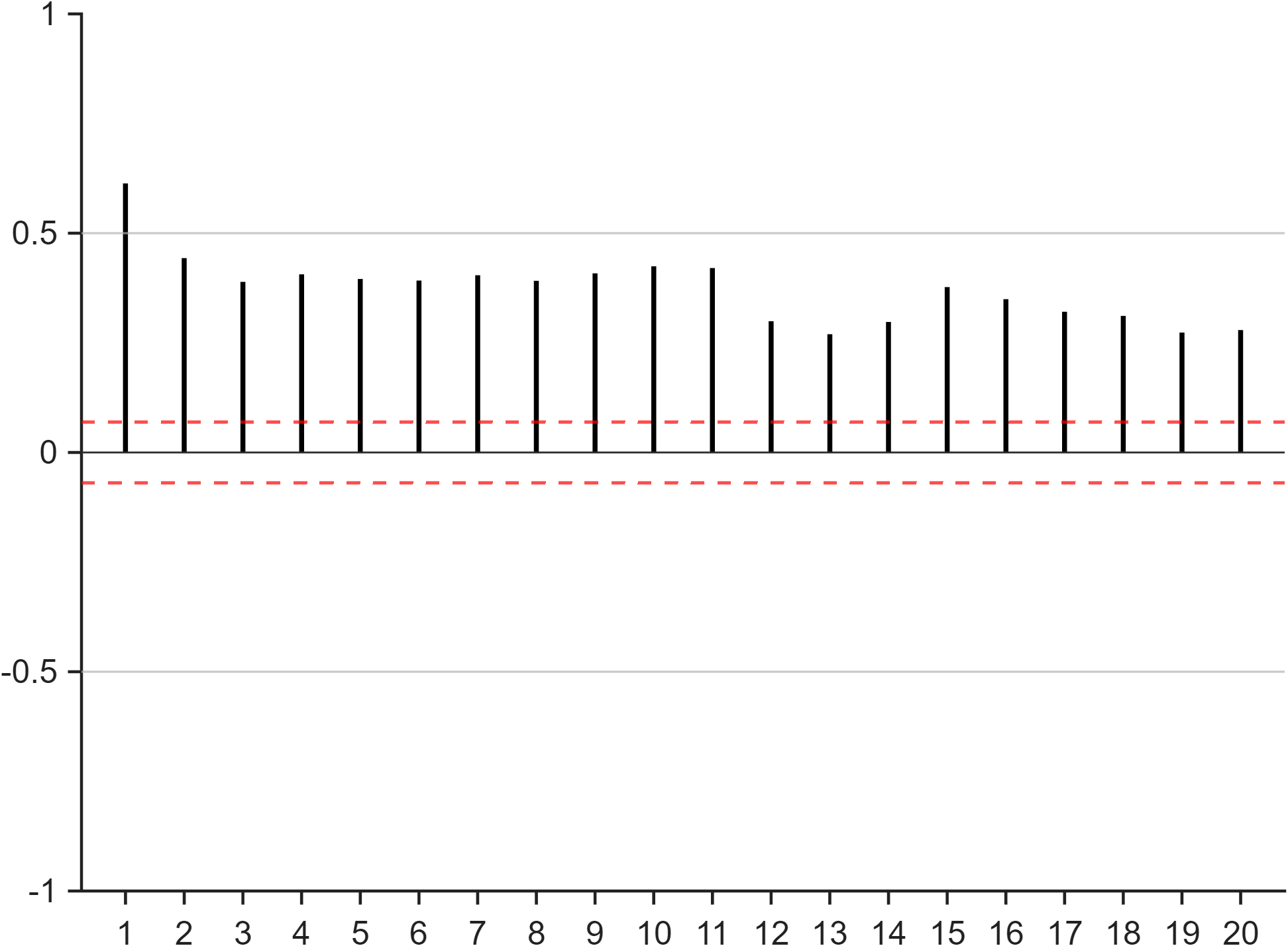}
\includegraphics[clip=true, trim=0 0 0 0cm, scale=0.45]{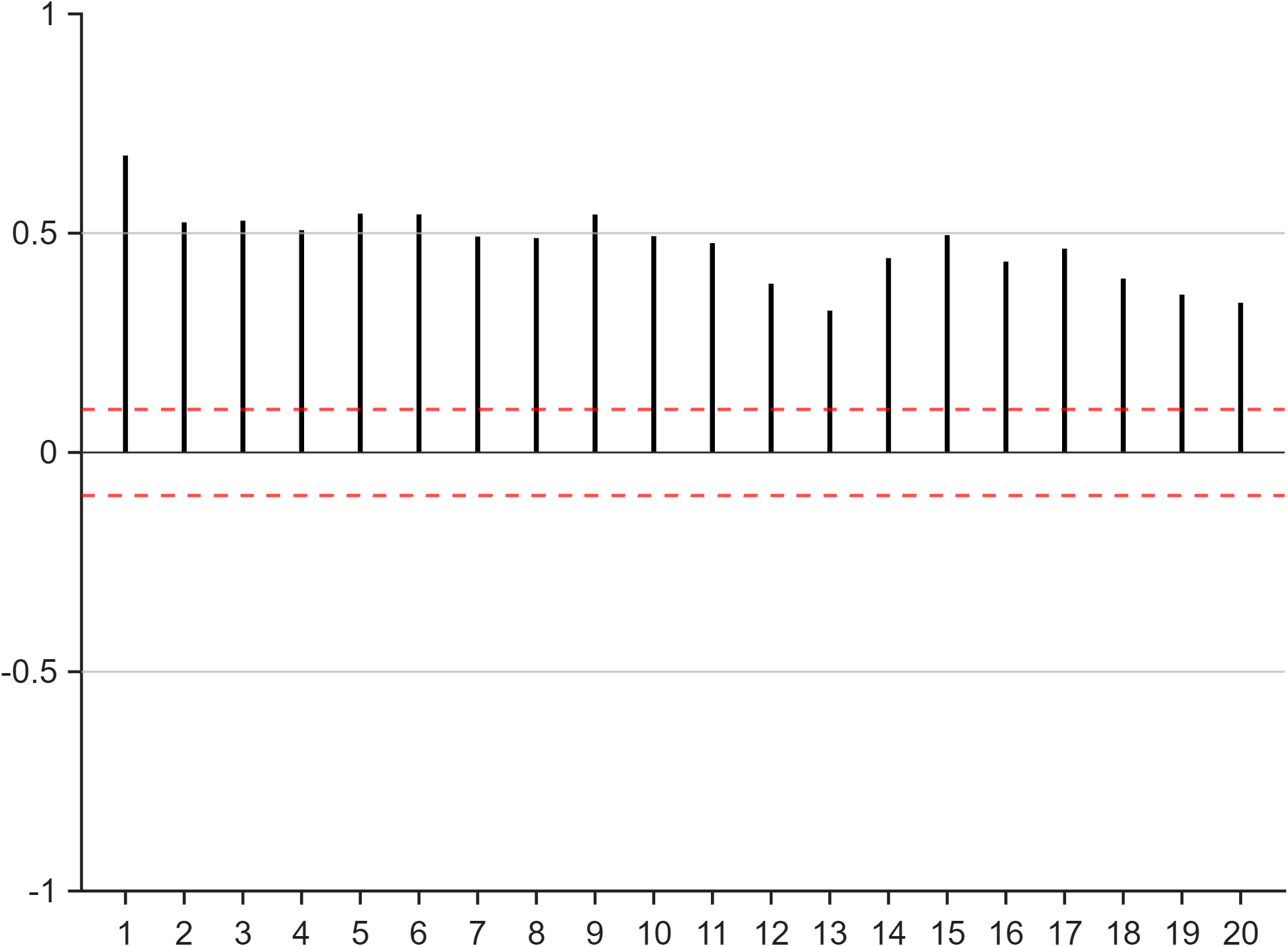}
\end{center}
\caption{Quarterly US inflation (upper panel), and estimated sample autocorrelations (lower left panel) and robust plug-in autocorrelations (lower right panel) together with 95\% non-rejection bands (red lines).}
\label{fig:inflation}
\end{figure}

\section{Conclusions}
\label{sec:conclusions}

Estimating the autocorrelations of a time series using the sample autocorrelations is highly affected by outliers, making it relevant to use robust methodologies in empirical applications in which time series may be contaminated. We propose three estimators of the autocorrelation function that are based on ratios of consecutive observations separated by $h$ periods of time, and are robust against outliers. The first estimator is based on the median and consequently, is not efficient. The second one is a more efficient QML estimator that requires numerical optimization. Finally, the third estimator is a plug-in estimator, which does not require computational optimization and have only slightly larger variance compared to the QML estimator. Consequently, the plug-in estimator is recommended to estimate the autocorrelation function. The asymptotic distributions of the three proposed estimators are derived when the true autocorrelations are zero. In doing so, we allow for the construction of point-wise significant bounds and individual tests of significance based on the proposed estimators. 

Based on simulations, we compare the performance of the  plug-in estimator with an alternative popular estimator of the autocorrelation function based on ranks, and show that the finite sample behaviour of this latter estimator deteriorates in the presence of isolated additive outliers with the lag of the autocorrelations. Furthermore, another important limitation of this latter robust estimator is that its asymptotic distribution is unknown and, consequently, it does not allow for the construction of significance bands.

Finally, we illustrate the results estimating the autocorrelation function of IBEX35 daily returns, quarterly US GPD growth, and monthly US inflation. In the first case, we show that when the robust estimator is used the EMH is not rejected as none of the autocorrelations are significant. In the second example, the autocorrelations are stronger when estimated using the robust estimator instead of the usual sample autocorrelations, with a more clear picture of the temporal dependence of GDP growth. Finally, the robust estimates of the autocorrelation function of inflation are larger and more persistent than those estimated using the sample autocorrelations, suggesting a weaker evidence of long-memory.

This papers opens several important topics for further research. The first obvious extension is to analyse the robust estimation of the partial autocorrelations. The second extension in our research agenda is the analysis of the results in the presence of uncorrelated but not independent noises (conditional heteroscedasticity). Note that the extension to conditionally heteroscedastic innovations should be trivial. Furthermore, it is also interesting to derive the joint asymptotic distribution of vectors of autocorrelations not only when they are zero but also in cases in which they are different from zero. Finally, the effect of outliers on persistence may be relevant for the modelling of time series.

\section*{Acknowledgements}

Financial support from the Spanish National Research Agency (Ministry of Science and Technology) Projects PID2023-150095NB-C44 (first  author) and PID2022-139614NB-C22 (second author), both funded by MCIN/AEI/10.13039/501100011033/FEDER, EU, is gratefully acknowledged. We are grateful to participants at the 19th International Conference on Computational and Financial Econometrics (London, December 2025) and in particular to C. Croux, for their comments. Any remaining errors are obviously our responsibility.

\theendnotes

 \end{document}